\documentclass[letterpaper,12pt]{article}
\usepackage{amsfonts}
\usepackage{amsmath, amsthm, amssymb, bbm}
\usepackage{natbib}
\usepackage{amsfonts}
\usepackage{amsthm}
\usepackage{graphics}
\usepackage{bm}
\usepackage{float}
\usepackage{appendix}
\usepackage{comment}
\usepackage[multiple]{footmisc}
\usepackage{rotating}
\usepackage{subcaption}
\usepackage{booktabs}
\usepackage{color}
\usepackage{appendix}
\usepackage{algorithm}
\usepackage{algpseudocode,enumerate} 
\input{ee.sty}
\RequirePackage[colorlinks,citecolor=blue,urlcolor=blue,linkcolor=blue]{hyperref}

\newtheorem{remark}{Remark} 
\newtheorem{theorem}{Theorem}

\newtheorem{lemma}{Lemma}

\newcommand{\argmin}{\operatorname*{argmin}}

\begin{document}

\title{\bf Factor-Augmented Panel Regressions and Variance-Weighted Treatment
Effects\thanks{%
Financial support from the Dutch Research Council (NWO) under research grants $451-17-002$ and VI.Vidi.231E.030 is gratefully acknowledged by the first author. The second author thanks Patrick Gagliardini and Olivier Scaillet for organizing  the 2025 EC$^2$ Conference in Lugano and for the invitation to present this paper  as the ET Lecture, and Peter Phillips and the journal \emph{Econometric Theory} 
for sponsoring the ET Lecture series.
}}
\author{\setcounter{footnote}{2}Art{$\bar {\rm u}$}ras Juodis\thanks{%
University of Amsterdam and Tinbergen Institute, \texttt{a.juodis@uva.nl}.  }
\and 
 Martin Weidner%
\thanks{%
University of Oxford and Nuffield College, \texttt{martin.weidner@economics.ox.ac.uk} } }
\date{April 2026}

\maketitle
\thispagestyle{empty}
\setcounter{page}{0}

\begin{abstract}
\noindent  
We revisit panel regressions with unobserved heterogeneity through the lens of variance-weighted average treatment effects. Building on established results for cross-sectional OLS and one-way fixed effects panels, we show that two-way panel estimators with latent factors, specifically the principal components estimator of \citet{GreenawayMcGrevy201248} and the interactive fixed effects estimator of \citet{bai2009panel}, also converge to interpretable estimands under fully nonparametric assumptions. Both estimators consistently estimate the same variance-weighted average of unit-time-specific treatment effects, where the weights are proportional to the conditional variance of the regressor given the unobserved heterogeneity. The result requires the number of estimated factors to grow with the sample size and applies to the single regressor case. We discuss the challenges that arise when extending to multiple regressors and to inference.
\end{abstract}

\newpage

\section{Introduction}

Linear panel regressions with interactive fixed effects have become a standard tool 
in empirical economics since the seminal contributions of \citet{pesaran2006estimation} 
and \citet{bai2009panel}. These estimators are designed for panels of $n$ units 
observed over $T$ time periods, and allow for unobserved unit- and time-specific 
heterogeneity that enters the outcome model multiplicatively, generalizing the additive 
two-way fixed effects structure. A large literature has since developed around these 
methods, studying their asymptotic properties, the selection of the number of factors, 
and extensions to nonlinear models. However, the theoretical properties of these 
estimators are almost exclusively derived under parametric assumptions: the data 
generating process is assumed to follow a linear factor model of the form 
$Y_{it} = X_{it}\beta + \lambda_i'f_t + \varepsilon_{it}$, and consistency is established 
for the true parameter $\beta$ of that model. A natural and largely unanswered question 
is what these estimators converge to when the parametric model is misspecified or when 
no parametric structure is imposed at all.

This paper answers that question for the principal components (PC) estimator 
of \citet{GreenawayMcGrevy201248} and the interactive fixed effects (IFE) estimator of 
\citet{bai2009panel}. We show that both estimators have well-defined and interpretable probability limits under 
fully nonparametric assumptions on the data generating process. The key insight comes 
from combining the factor model literature with results on variance-weighted treatment 
effects from the cross-sectional literature. \citet{Angrist1998} showed that OLS with 
observed controls estimates a variance-weighted average of heterogeneous treatment 
effects, where the weights are proportional to the conditional variance of the regressor 
given the controls. We show that the same interpretation extends to the two-way panel 
setting with unobserved heterogeneity: both the PC and IFE estimators converge to a 
variance-weighted average of unit-time-specific treatment effects, where each weight 
is proportional to the conditional variance of the regressor given the unobserved 
unit- and time-specific heterogeneity. This result does not require any parametric restrictions 
on the conditional expectation model, but the number of estimated factors $R_{nT}$ is 
required to grow with the sample size. The reason this works is that a growing number 
of principal components effectively turns the truncated singular value decomposition 
(SVD) of the regressor matrix into a nonparametric estimator of its conditional mean 
given the unobserved heterogeneity: classical results in approximation theory show 
that any sufficiently smooth bivariate function admits a Hilbert-Schmidt expansion 
with rapidly decaying singular values, so that the rank-$R_{nT}$ truncation 
approximates the conditional mean arbitrarily well as $R_{nT} \to \infty$ 
\citep{birman1977estimates}. By contrast, other popular approaches such as the common correlated effects (CCE) 
estimator of \citet{pesaran2006estimation} or the standard two-way fixed effect (TWFE) estimator generally 
do not share this property.

Our result can be seen as the large $n,T$ analogue of findings in the DiD and TWFE 
literature that have attracted considerable attention since 2018. \citet{2018arXiv180308807D}, 
\citet{Goodman2021}, \citet{SUN2020}, and \citet{CALLAWAY2020}, among others, showed 
that TWFE estimators with staggered treatment adoption can assign negative weights to 
some treatment effects, and proposed alternative estimators that deliver non-negative 
weights. In that literature, parallel trends assumptions are required throughout because 
$T$ is treated as fixed, and the identification problem is intrinsically harder with 
limited time periods. Our setting is different: with $T$ growing, the factor structure 
can be estimated nonparametrically, and the variance-weighted average can be recovered 
without any parallel trends assumption. Nevertheless, the conceptual message is similar: 
pooled regression estimators target specific weighted averages of heterogeneous effects, 
and non-negativity of those weights is essential for interpreting the estimates.

There is also a growing literature that, while allowing for additive unobserved 
intercepts, relaxes the assumption of common slope coefficients, see e.g.\ 
\citet{LU2023694} and \citet{KeaneNeal2020}. Some papers allow for rich heterogeneity 
in both slopes and intercepts, see e.g.\ \citet{chernozhukov2019inference} and 
\citet{unknown2024estimation}. All of these approaches impose parametric structure on 
how the unit-time-specific treatment effects vary with the unobserved heterogeneity. 

In the extension section,
we informally discuss estimating user-specified weighted averages of these effects 
without such restrictions. We also discuss the limitation that our 
variance-weighted average interpretation is specific to the case of a single regressor. 
When multiple regressors are included, contamination bias arises as documented by 
\citet{10.1257/aer.20221116} for cross-sectional regressions and by \citet{de2023two} 
for TWFE regressions with several treatments. Finally, we explain that due to the nonparametric nature of our results, inference on the 
variance-weighted average is generally not possible without additional assumptions, as the asymptotic distribution of both estimators is dominated by 
approximation bias.

The remainder of the paper is structured as follows. Section~\ref{section:estimands} 
defines the variance-weighted estimands for cross-sectional and panel data settings. 
Section~\ref{section:estimators} introduces the PC and IFE estimators, discusses the 
difficulties of controlling for unobserved two-way heterogeneity, and states the main 
consistency results. Section~\ref{section:discussion} contains further discussion and 
extensions, covering inference, estimation of targeted treatment effects with 
user-specified weights, and the case of multiple regressors. 
Section~\ref{section:montecarlo} explores the finite sample properties of the methods 
using simulations. Section~\ref{section:conclusions} concludes. 
Appendix~\ref{section_appendix:proofs} contains all proofs.

\emph{Notation.} For any matrix $A$ we write $\|A\|$ for the spectral norm and $\|A\|_2$ 
for the Frobenius norm.
 
 \section{Variance-weighted average treatment effects}
\label{section:estimands}

In this section we introduce the variance-weighted treatment effect framework. We begin with the cross-sectional setting, where the key ideas are most transparent. We then extend the framework to panel data with unobserved heterogeneity. The results reviewed here are well established in the literature. Our purpose in this section is purely to set up the notation and conceptual apparatus needed for the two-way panel setting in Section~\ref{section:estimators}.

\subsection{Cross-sectional setting}
\label{ssection::cross_sectional}

Consider a random sample $(Y_i, X_i, C_i)$, $i = 1, \ldots, n$. Here $Y_i \in \mathbb{R}$ is an outcome variable, $X_i \in \mathbb{R}$ is a scalar regressor of interest, and $C_i \in \mathbb{R}^p$ is a vector of control variables. We are interested in the effect of $X_i$ on $Y_i$ while controlling for $C_i$. We begin with the simplest case where no controls are present. Assume only that $(Y_i, X_i)$ are identically distributed with finite second moments. Let $\widehat{\beta}$ denote the coefficient on $X_i$ from the OLS regression of $Y_i$ on $(1, X_i)'$. Then
\begin{align}
    \widehat{\beta} \to_P \beta^* := \frac{{\rm Cov}\left(Y_i,\, X_i\right)}{{\rm Var}\left(X_i\right)}
    \label{BasicOLS}
\end{align}
as $n \to \infty$.\footnote{When the regressor is binary, $X_i \in \{0, 1\}$, this simplifies to $\beta^* = \mathbb{E}\left(Y_i \,\big|\, X_i = 1\right) - \mathbb{E}\left(Y_i \,\big|\, X_i = 0\right)$, which has a causal interpretation as an average treatment effect if $X_i$ is randomly assigned.}
Now suppose we wish to control for the observed covariates $C_i$. For each value of the controls, we can define the conditional linear projection coefficient
\begin{align}
    B(c) := \frac{{\rm Cov}\left(Y_i,\, X_i \,\big|\, C_i = c\right)}{{\rm Var}\left(X_i \,\big|\, C_i = c\right)}.
    \label{eq::DefBC}
\end{align}
This measures the linear relationship between $Y_i$ and $X_i$ holding $C_i$ fixed.\footnote{Again, for binary $X_i$, $B(c)$ can be expressed as the conditional average treatment effect $B(c) = \mathbb{E}\left(Y_i \,\big|\, X_i = 1,\, C_i = c\right) - \mathbb{E}\left(Y_i \,\big|\, X_i = 0,\, C_i = c\right)$.} In general, $B(c)$ varies with $c$, that is, the effect of $X_i$ on $Y_i$ is heterogeneous.
Rather than estimating $B(c)$ for each value of $c$ separately, it is natural to consider a pooled estimand that summarizes the overall relationship. To this end, we define the orthogonal decomposition
\begin{align}
    X_i = X_i^\parallel + X_i^\perp, \qquad X_i^\parallel := \mathbb{E}\left(X_i \,\big|\, C_i\right), \qquad X_i^\perp := X_i - X_i^\parallel.
    \label{eq::DefXperpCS}
\end{align}
Here $X_i^\parallel$ is the conditional mean of $X_i$ given $C_i$, while $X_i^\perp$ is the residual variation in $X_i$ that remains after conditioning on the controls. Note that $\mathbb{E}\left(X_i^\perp \,\big|\, C_i\right) = 0$ by construction. The pooled estimand is now defined as
\begin{align}
    \beta^* := \frac{{\rm Cov}\left(Y_i,\, X_i^\perp\right)}{{\rm Var}\left(X_i^\perp\right)} = \frac{\mathbb{E}\left(Y_i X_i^\perp\right)}{\mathbb{E}\left[\left(X_i^\perp\right)^2\right]}.
    \label{eq::DefBetaStarCS}
\end{align}
This generalizes the definition in equation \eqref{BasicOLS} to account for controls. It is the population regression coefficient from regressing $Y_i$ on the residualized regressor $X_i^\perp$. A key observation is that $\beta^*$ can be expressed as a weighted average of the conditional effects $B(c)$. We have\footnote{By the law of iterated expectations we have $\mathbb{E}\left(Y_i X_i^\perp\right) = \mathbb{E}\left[\mathbb{E}\left(Y_i X_i^\perp \,\big|\, C_i\right)\right] = \mathbb{E}\left[{\rm Cov}\left(Y_i,\, X_i \,\big|\, C_i\right)\right]$ and $\mathbb{E}\left[\left(X_i^\perp\right)^2\right] = \mathbb{E}\left[\mathbb{E}\left(\left(X_i^\perp\right)^2 \,\big|\, C_i\right)\right] = \mathbb{E}\left[{\rm Var}\left(X_i \,\big|\, C_i\right)\right]$, where we used that $\mathbb{E}\left(X_i^\perp \,\big|\, C_i\right) = 0$.} 
\begin{align}
    \beta^* = \frac{\mathbb{E}\left[{\rm Var}\left(X_i \,\big|\, C_i\right) \cdot B(C_i)\right]}{\mathbb{E}\left[{\rm Var}\left(X_i \,\big|\, C_i\right)\right]} = \mathbb{E}\left[W^*(C_i) \cdot B(C_i)\right],
    \label{eq::BetaStarWeightedAvg}
\end{align}
where the weights are given by
\begin{align}
    W^*(c) := \frac{{\rm Var}\left(X_i \,\big|\, C_i = c\right)}{\mathbb{E}\left[{\rm Var}\left(X_i \,\big|\, C_i\right)\right]}.
    \label{eq::DefWeightsCS}
\end{align}
These weights are non-negative by construction and satisfy $\mathbb{E}\left[W^*(C_i)\right] = 1$. Thus $\beta^*$ is a convex combination of the conditional effects $B(c)$. Observations receive larger weight when the conditional variance of $X_i$ given $C_i$ is larger. This is the variance-weighted average treatment effect interpretation of regression estimands emphasized by \citet{Angrist1998} and subsequent literature. See also \citet{angrist2008mostly}, \citet{abadie2018econometric}, \citet{sloczynski2020interpreting}, and \citet{10.1257/aer.20221116}.

An important feature of $\beta^*$ is that it can be consistently estimated by OLS without ever explicitly having to estimate $B(c)$. Suppose that either of the following two conditions holds\footnote{Note that condition \eqref{eq::LinearCondMean} (a) and (b) are both automatically satisfied when $C_i$ consists of a saturated set of dummy variables. This observation is particularly relevant for the panel data setting in the next subsection, where the unobserved controls $U_i$ can be thought of as unit-specific dummies.}
\begin{equation}
\label{eq::LinearCondMean}
\begin{aligned}
    &\text{(a)} \quad \mathbb{E}\left(X_i \,\big|\, C_i\right) = \kappa_1 + \kappa_2' \, C_i, \quad \text{for some } \kappa_1 \in \mathbb{R}, \, \kappa_2 \in \mathbb{R}^p, \\
    &\text{(b)} \quad \mathbb{E}\left(Y_i - \beta^* \, X_i \,\big|\, C_i\right) = \rho_1 + \rho_2' \, C_i, \quad \text{for some } \rho_1 \in \mathbb{R}, \, \rho_2 \in \mathbb{R}^p.
\end{aligned}
\end{equation}
Then the OLS estimator $\widehat{\beta}$ on $X_i$ from regressing $Y_i$ on $(1, X_i, C_i')'$ consistently estimates $\beta^*$ (here we assume existence of second moments throughout). To see this, let $\widehat{X_i^\perp}$ denote the OLS residual from regressing $X_i$ on $(1, C_i')'$. Under condition \eqref{eq::LinearCondMean}(a), $\widehat{X_i^\perp}$ is a consistent estimator of $X_i^\perp$. The Frisch-Waugh-Lovell theorem then implies that
\begin{align}
    \widehat{\beta} = \frac{\sum_{i=1}^n Y_i \widehat{X_i^\perp}}{\sum_{i=1}^n \left(\widehat{X_i^\perp}\right)^2}  \to_P \beta^*
\end{align}
as $n \to \infty$.

Equally,
under condition \eqref{eq::LinearCondMean}(b)
and with $\beta^*$ defined in \eqref{eq::DefBetaStarCS},
we have $Y_i - \beta^* X_i - \rho_1 - \rho_2' C_i$ uncorrelated with $(1, X_i, C_i')'$, so $\beta^*$ is the population OLS coefficient on $X_i$.\footnote{A more standard way to express condition \eqref{eq::LinearCondMean}(b) is as follows. Assume that $Y_i = \rho_1 + \beta X_i + \rho_2' C_i + \varepsilon_i$ for some $\rho_1$, $\beta$, $\rho_2$ such that $\mathbb{E}(\varepsilon_i \,|\, C_i) = 0$ and $\mathbb{E}(\varepsilon_i X_i) = 0$. Then $\beta = \beta^*$ (assuming $\beta^*$ is well-defined), and the OLS coefficient $\widehat{\beta}$ on $X_i$ from regressing $Y_i$ on $(1, X_i, C_i')'$ consistently estimates $\beta^*$.}
Admittedly, condition \eqref{eq::LinearCondMean}(b) may seem redundant here, since both \eqref{eq::LinearCondMean}(a) and \eqref{eq::LinearCondMean}(b) justify the same OLS estimator. However, the dual justification will be important in Section~\ref{section:estimators}, where the generalization of each condition corresponds to a different estimation approach.

More generally, one may wish to estimate weighted averages of $B(c)$ with user-specified weights,
\begin{align}
    \beta_w = \mathbb{E}\left[w(C_i) \cdot B(C_i)\right],
\end{align}
for some weighting function $w(\cdot)$. When $C_i$ takes discrete values, this can be done by first estimating $B(c)$ separately for each value $C_{i}=c$ and then forming the weighted average given the weights $w(c)$. However, establishing consistency for such estimators is generally more difficult than for $\widehat{\beta}$ above. This is particularly the case when the overlap condition ${\rm Var}\left(X_i \,\big|\, C_i = c\right) > 0$ fails for some values of $c$. We return to this issue in Section~\ref{ssection:targeted}.

\subsection{Panel data with one-way unobserved heterogeneity}
\label{ssection::unobserved_oneway}

We now extend the framework to panel data where the controls are unobserved. Consider a sample $(Y_{it}, X_{it})$ for $i = 1, \ldots, n$ and $t = 1, \ldots, T$. Here $Y_{it} \in \mathbb{R}$ is an outcome and $X_{it} \in \mathbb{R}$ is a scalar regressor. We assume that for each unit $i$ there exists an unobserved confounder $U_i$ (possibly multi-dimensional) that may affect both $Y_{it}$ and $X_{it}$. Formally, we write\footnote{%
We do allow for a DGP of the form $Y_{it} = h(X_{it}, U_i, \varepsilon_{it})$, 
which when combined with $X_{it} = g_X(U_i, \varepsilon_{it})$ gives 
\eqref{eq::oneway_DGP}. However, the assumption does rule out dynamic regressors 
such as $X_{it} = Y_{i,t-1}$. The variance-weighted representation of the within 
estimator carries over to the dynamic case, but the interpretation is subtle, 
and we do not pursue this here.}
\begin{align}
    (Y_{it}, X_{it}) = g(U_i, \varepsilon_{it}),
    \label{eq::oneway_DGP}
\end{align}
for some unknown measurable function $g(\cdot,\cdot)$ and idiosyncratic shocks $\varepsilon_{it}$ (possibly multi-dimensional).
The structure parallels the cross-sectional case, with $U_i$ playing the role of the control variable $C_i$. For each unit $i$, we define the conditional linear projection coefficient\footnote{When $X_{it}$ is binary, $\beta_i$ reduces to the conditional average treatment effect $\beta_i = \mathbb{E}\left(Y_{it} \,\big|\, X_{it} = 1,\, U_i\right) - \mathbb{E}\left(Y_{it} \,\big|\, X_{it} = 0,\, U_i\right)$.}
\begin{align}
    \beta_i := \frac{{\rm Cov}\left(Y_{it},\, X_{it} \,\big|\, U_i\right)}{{\rm Var}\left(X_{it} \,\big|\, U_i\right)}.
    \label{eq::beta_i_oneway}
\end{align}
We assume covariance stationarity here, which implies that
$\beta_i$ does not depend on $t$. As before, we define the orthogonal decomposition
\begin{align}
    X_{it} = X_{it}^\parallel + X_{it}^\perp, \qquad X_{i}^\parallel := \mathbb{E}\left(X_{it} \,\big|\, U_i\right), \qquad X_{it}^\perp := X_{it} - X_{i}^\parallel.
    \label{eq::DefXperpPanel}
\end{align}
The pooled estimand is
\begin{align}
    \beta^* := \frac{{\rm Cov}\left(Y_{it},\, X_{it}^\perp\right)}{{\rm Var}\left(X_{it}^\perp\right)} = \frac{\mathbb{E}\left(Y_{it} X_{it}^\perp\right)}{\mathbb{E}\left[\left(X_{it}^\perp\right)^2\right]} = \mathbb{E}\left(w_i^* \, \beta_i\right),
    \label{eq::DefBetaStarPanel}
\end{align}
where the weights are given by
\begin{align}
    w_i^* := \frac{{\rm Var}\left(X_{it} \,\big|\, U_i\right)}{\mathbb{E}\left[{\rm Var}\left(X_{it} \,\big|\, U_i\right)\right]}.
    \label{eq::DefWeightsPanel}
\end{align}
Thus $\beta^*$ is a variance-weighted average of the unit-specific effects $\beta_i$, where as before the weights are non-negative and sum to one in expectation.

At first sight, it may seem difficult to estimate $\beta^*$ since the confounder $U_i$ is unobserved. However, the panel structure allows us to estimate $X_i^\parallel$ nonparametrically. Under covariance stationarity and mild mixing conditions on the serial dependence of $X_{it}$, we have
\begin{align}
    \frac{1}{T} \sum_{t=1}^T X_{it} \,  \to_P \,  X_i^\parallel ,
\end{align}
as $T \rightarrow \infty$.\footnote{%
The $T \to \infty$ framing in this subsection is used for expositional simplicity and for 
continuity with the two-way case in Section~\ref{section:estimators}, where it 
is genuinely needed. It is not essential for the variance-weighted interpretation 
of the within estimator in the one-way setting: for any fixed $T \geq 2$, the within demean gives 
$\mathbb{E}(Y_{it}\,\widehat{X_{it}^\perp})/\mathbb{E}((\widehat{X_{it}^\perp})^2) = \beta^*$ 
exactly, since the factor $(T-1)/T$ arising from demeaning cancels in the numerator 
and denominator.}
This suggests a natural estimator for $X_{it}^\perp$, namely
\begin{align}
    \widehat{X_{it}^\perp} := X_{it} - \frac{1}{T} \sum_{s=1}^T X_{is}.
\end{align}
The resulting estimator for $\beta^*$ is
\begin{align}
    \widehat{\beta} = \frac{\sum_{i=1}^n \sum_{t=1}^T Y_{it} \widehat{X_{it}^\perp}}{\sum_{i=1}^n \sum_{t=1}^T \left(\widehat{X_{it}^\perp}\right)^2}.
    \label{eq::FE_estimator}
\end{align}
This is the classical within group (WG) estimator, also known as the fixed effects (FE) estimator. Under appropriate regularity conditions, $\widehat{\beta} \to_P \beta^*$ as $n, T \to \infty$, see \citet{galvao2014estimation} and \citet{Juodis2020} for details.
Unlike the cross-sectional setting, no linearity condition analogous to \eqref{eq::LinearCondMean} is needed here. Covariance stationarity ensures that $X_i^\parallel$ is constant over $t$, so time-averaging removes it nonparametrically regardless of the functional form of $\mathbb{E}(X_{it} \,\big|\, U_i)$.

More generally, one may wish to estimate weighted averages of $\beta_i$ with user-specified weights. For each unit $i$, define the unit-specific estimator\footnote{Here and in \eqref{PesaranSmithConvergence}, we assume that $T^{-1} \sum_{t=1}^T \left(\widehat{X_{it}^\perp}\right)^2 \geq c > 0$ for all $i$. This rules out irregular identification issues that can arise for fixed $T$. See \citet{https://doi.org/10.3982/ECTA8220} for discussion.}
\begin{align}
    \widehat{\beta}_i := \frac{\sum_{t=1}^T Y_{it} \widehat{X_{it}^\perp}}{\sum_{t=1}^T \left(\widehat{X_{it}^\perp}\right)^2}.
\end{align}
The Mean Group estimator of \citet{Pesaran199579} uses equal weights,
\begin{align}
    \widehat{\beta}_{\rm MG} = \frac{1}{n} \sum_{i=1}^n \widehat{\beta}_i \to_P \mathbb{E}\left(\beta_i\right).
    \label{PesaranSmithConvergence}
\end{align}
For general weights $w_i$, one can form $\widehat{\beta}_w = \sum_{i=1}^n w_i \widehat{\beta}_i$. However, establishing consistency for such estimators (including other functionals of $\widehat{\beta}_i$) in the general model-free setup is more involved than for the pooled estimator $\widehat{\beta}$. See e.g. \citet{galvao2014estimation}, \citet{OkuiYanagi2015},
and \citet{jochmans2024inference}.

\section{Results for two-way unobserved heterogeneity}
\label{section:estimators}

\subsection{Setup and estimand}
\label{ssection::twoway_setup}

We now consider the setting where unobserved heterogeneity is present in both the cross-sectional and time-series dimensions. Consider a sample $(Y_{it}, X_{it})$ for $i = 1, \ldots, n$ and $t = 1, \ldots, T$. We assume that the observed variables $Z_{it} = (Y_{it}, X_{it})'$ are generated as\footnote{%
As in the one-way case, we allow for a DGP of the form 
$Y_{it} = h(X_{it}, U_i, V_t, \varepsilon_{it})$, which when combined with 
$X_{it} = g_X(U_i, V_t, \varepsilon_{it})$ gives \eqref{eq::AHK}. However, the 
assumption does rule out dynamic regressors such as $X_{it} = Y_{i,t-1}$.}
\begin{align}
    Z_{it} = g(U_i, V_t, \varepsilon_{it}),
    \label{eq::AHK}
\end{align}
where $g(\cdot,\cdot,\cdot)$ is a general measurable function, $U_i$ and $V_t$ are unobserved unit-specific and time-specific heterogeneity, and $\varepsilon_{it}$ are idiosyncratic shocks (all possibly multi-dimensional). This decomposition is closely related to the Aldous-Hoover representation for exchangeable arrays. See \citet{2019Menzel} for background and \citet{chiang2022standard} for an application to panel data.

The structure parallels Section~\ref{ssection::unobserved_oneway}, but now we condition on both $U_i$ and $V_t$. For each $(i,t)$, we define the conditional linear projection coefficient
\begin{align}
    \beta_{it} := \frac{{\rm Cov}\left(Y_{it},\, X_{it} \,\big|\, U_i,\, V_t\right)}{{\rm Var}\left(X_{it} \,\big|\, U_i,\, V_t\right)}.
    \label{eq::beta_it_twoway}
\end{align}
Due to the presence of time-specific heterogeneity $V_t$, this quantity now varies across both $i$ and $t$.\footnote{When $X_{it}$ is binary, $\beta_{it}$ reduces to the conditional average treatment effect $\beta_{it} = \mathbb{E}\left(Y_{it} \,\big|\, X_{it} = 1,\, U_i,\, V_t\right) - \mathbb{E}\left(Y_{it} \,\big|\, X_{it} = 0,\, U_i,\, V_t\right)$.} As before, we define the orthogonal decomposition
\begin{align}
    X_{it} = X_{it}^\parallel + X_{it}^\perp, \qquad X_{it}^\parallel := \mathbb{E}\left(X_{it} \,\big|\, U_i,\, V_t\right), \qquad X_{it}^\perp := X_{it} - X_{it}^\parallel,
    \label{eq::DefXperpTwoway}
\end{align}
and analogously for $Y_{it}$. Under identical distribution across $(i,t)$, the pooled estimand is
\begin{align}
    \beta^* := \frac{{\rm Cov}\left(Y_{it},\, X_{it}^\perp\right)}{{\rm Var}\left(X_{it}^\perp\right)} = \frac{\mathbb{E}\left(Y_{it} X_{it}^\perp\right)}{\mathbb{E}\left[\left(X_{it}^\perp\right)^2\right]} = \mathbb{E}\left(w_{it}^* \, \beta_{it}\right),
    \label{eq::DefBetaStarTwoway}
\end{align}
where the weights are given by
\begin{align}
    w_{it}^* := \frac{{\rm Var}\left(X_{it} \,\big|\, U_i,\, V_t\right)}{\mathbb{E}\left[{\rm Var}\left(X_{it} \,\big|\, U_i,\, V_t\right)\right]}.
    \label{eq::DefWeightsTwoway}
\end{align}
Thus $\beta^*$ is again a variance-weighted average of the conditional effects $\beta_{it}$, with non-negative weights that sum to one in expectation. Since $\mathbb{E}(Y_{it}^\parallel X_{it}^\perp) = 0$ by iterated expectations, we also have
\begin{align}
    \beta^* = \frac{{\rm Cov}\left(Y_{it}^\perp,\, X_{it}^\perp\right)}{{\rm Var}\left(X_{it}^\perp\right)} = \frac{\mathbb{E}\left(Y_{it}^\perp X_{it}^\perp\right)}{\mathbb{E}\left[\left(X_{it}^\perp\right)^2\right]}.
    \label{eq::DefBetaStarTwowayAlt}
\end{align}
This equivalent representation will be useful for understanding the IFE estimator in Section~\ref{ssection::PC_estimators}.
As in Section~\ref{ssection::unobserved_oneway}, the natural plug-in estimator for $\beta^*$ takes the form
\begin{align}
    \widehat{\beta} = \frac{\sum_{i=1}^n \sum_{t=1}^T Y_{it} \widehat{X_{it}^\perp}}{\sum_{i=1}^n \sum_{t=1}^T \left(\widehat{X_{it}^\perp}\right)^2},
    \label{eq::plugin_estimator}
\end{align}
where $\widehat{X_{it}^\perp}$ is an estimator of $X_{it}^\perp$. The key challenge is constructing $\widehat{X_{it}^\perp}$, which requires estimating $X_{it}^\parallel = \mathbb{E}\left(X_{it} \,\big|\, U_i,\, V_t\right)$.

This task is fundamentally different from the one-way setting of Section~\ref{ssection::unobserved_oneway}. In that setting, covariance stationarity implied $X_{it}^\parallel = X_i^\parallel$, which could be estimated nonparametrically by time-averaging. Here, $X_{it}^\parallel$ varies in both dimensions. Without further restrictions, it is impossible to distinguish $X_{it}^\parallel$ from $X_{it}^\perp$ using $X_{it}$ alone.

\subsection{Two-way fixed effects and its limitations}
\label{ssection::TWFE}

One approach to estimating $X_{it}^\perp$ is to impose an additive structure on $X_{it}^\parallel$,\footnote{%
Analogous to condition \eqref{eq::LinearCondMean}(b) in the cross-sectional setting, we could alternatively assume an additive structure on $Y_{it}^\parallel - \beta X_{it}^\parallel$. This would justify the same TWFE estimator under different assumptions.}
\begin{align}
    X_{it}^\parallel = g_1(U_i) + g_2(V_t),
\end{align}
for some measurable functions $g_1(\cdot)$ and $g_2(\cdot)$. Under this assumption, we can estimate $X_{it}^\perp$ by the two-way demeaning
\begin{align}
    \widehat{X_{it}^\perp} = X_{it} - \frac{1}{n} \sum_{j=1}^n X_{jt} - \frac{1}{T} \sum_{s=1}^T X_{is} + \frac{1}{nT} \sum_{j=1}^n \sum_{s=1}^T X_{js}.
\end{align}
The resulting estimator $\widehat{\beta}$ is the two-way fixed effects (TWFE) estimator. However, the additive structure is restrictive. When $X_{it}^\parallel$ contains terms that interact $U_i$ and $V_t$ non-additively, the two-way demeaning leaves behind a residual that is generally correlated with $X_{it}^\perp$, and TWFE fails to consistently estimate $\beta^*$.

The common correlated effects (CCE) approach of \citet{pesaran2006estimation}\footnote{See 
e.g.\ \citet{JUODIS2026106120} for a critical review of the recent literature on this 
methodology.} as well as generalizations of this approach (see e.g.\ \citealp{JuodisEtal2017} 
and \citealp{LU2026106183}) generally do not fully resolve the shortcomings of the TWFE 
transformation. In particular, the CCE approach builds upon the assumption that
\begin{align}
    X_{it}^\parallel = g_1(U_i)'g_2(V_t),
\end{align}
where now $g_1(\cdot)$ and $g_2(\cdot)$ are $R$-dimensional vectors. $R$ is generally assumed to be no larger than the number of regressors, e.g.\ $R = 1$ is 
the maximum permitted value for the single regressor case that we consider here.\footnote{This limitation can be circumvented by using additional 
proxy variables as in \citet{KarUrbWest2014} and \citet{JuodisSarafidis2019}, but generally 
the dimension $R$ remains finite as $n, T \to \infty$.} Hence, while the CCE approach does 
not impose additive structure on $X_{it}^\parallel$, it keeps the restriction that $X_{it}^\parallel$ needs to admit exact low-rank structure,
unlike the approaches that we describe below. Moreover, the CCE method imposes additional restrictions 
on the first conditional moments of $X_{it}^\parallel$ in the form of the so-called rank 
condition. In Appendix~\ref{ssection::appendix_TWFE_CCE} we provide a simple DGP 
that illustrates the potential failure of the TWFE and the CCE methods.

Next, we show that consistent estimation of $\beta^*$ is possible under 
weaker conditions. The key insight is that $X_{it}^\parallel = \mathbb{E}(X_{it} \,|\, U_i, V_t)$ 
can often be well-approximated by a low-rank structure, namely the leading principal components 
of $X_{it}$. This approximation is accurate when $X_{it}^\parallel$ is a sufficiently smooth 
function of $(U_i, V_t)$ and the dimensionality of $U_i$ and $V_t$ is not too high.

\subsection{Principal components based estimators}
\label{ssection::PC_estimators}

We now describe the two principal components based estimators that are the focus of this paper. Under suitable regularity conditions, both estimators consistently estimate the variance-weighted average effect $\beta^*$ defined in \eqref{eq::DefBetaStarTwoway}. Throughout, we write $Y$, $X$, $X^\parallel$, and $X^\perp$ for the $n \times T$ matrices with $(i,t)$ entries $Y_{it}$, $X_{it}$, $X_{it}^\parallel$, and $X_{it}^\perp$ respectively. %

The first estimator is the Principal Components (PC) estimator studied in \citet{GreenawayMcGrevy201248}. This is a two-step procedure. In the first step, we estimate $X_{it}^\parallel$ by the rank-$R$ approximation to $X$,
\begin{align}
    \widehat{X^\parallel} = \argmin_{\left\{G \in \mathbb{R}^{n \times T} : {\rm rank}(G) \leq R\right\}} \left\|X - G\right\|_2^2.
    \label{eq::PC_firststep}
\end{align}
This minimization has a closed-form solution given by the truncated singular value decomposition of $X$.\footnote{Let $X = \sum_{r=1}^{\min(n,T)} \sigma_r u_r v_r'$ be the singular value decomposition of $X$, with singular values $\sigma_1 \geq \sigma_2 \geq \ldots \geq 0$. By the Eckart-Young-Mirsky theorem, the solution to \eqref{eq::PC_firststep} is $\widehat{X^\parallel} = \sum_{r=1}^{R} \sigma_r u_r v_r'$.}
 Let $\widehat{X_{it}^\perp} := X_{it} - \widehat{X_{it}^\parallel}$ denote the residuals.
In the second step, the PC estimator is obtained by
\begin{align}
    \widehat{\beta}_{\rm PC} = \frac{\sum_{i=1}^n \sum_{t=1}^T Y_{it} \, \widehat{X_{it}^\perp}}{\sum_{i=1}^n \sum_{t=1}^T \left(\widehat{X_{it}^\perp}\right)^2}.
    \label{eq::PC_estimator}
\end{align}
Our second estimator is the Interactive Fixed Effects (IFE) estimator of \citet{bai2009panel}. This estimator estimates the coefficient $\beta$ and the factor structure jointly by solving
\begin{align}
    \widehat{\beta}_{\rm IFE} &= \argmin_{\beta \in \mathbb{R}} \, \min_{\left\{G \in \mathbb{R}^{n \times T} : {\rm rank}(G) \leq R\right\}} \left\|Y - X\beta - G\right\|_2^2 \notag \\
    &= \argmin_{\beta \in \mathbb{R}} \, \min_{(\lambda, f) \in \mathbb{R}^{nR + TR}} \sum_{i=1}^n \sum_{t=1}^T \left(Y_{it} - X_{it} \beta - \sum_{r=1}^R \lambda_{ir} f_{tr}\right)^2.
    \label{eq::IFE_estimator}
\end{align}
Here, the first line uses notation analogous to \eqref{eq::PC_firststep}, while the second line replaces the rank restriction on $G$ by the explicit low-rank representation $G_{it} = \lambda_i' f_t$, where $\lambda_i = (\lambda_{i1}, \ldots, \lambda_{iR})' \in \mathbb{R}^R$ are unit-specific factor loadings and $f_t = (f_{t1}, \ldots, f_{tR})' \in \mathbb{R}^R$ are time-specific factors. The second line also shows that $\widehat{\beta}_{\rm IFE}$ is simply obtained as the joint least squares estimator over $(\beta, \lambda, f)$.

The relationship between PC and IFE mirrors the two justifications for OLS in the cross-sectional setting. Recall from Section~\ref{ssection::cross_sectional} that the OLS estimator consistently estimates $\beta^*$ under either condition \eqref{eq::LinearCondMean}(a), which restricts the conditional mean of $X_i$, or condition \eqref{eq::LinearCondMean}(b), which restricts the conditional mean of $Y_i - \beta^* X_i$. By the Frisch-Waugh-Lovell theorem, the same OLS estimator is consistent for $\beta^*$ under either condition.
In the two-way panel setting, the analogous conditions are:
\begin{equation}
\label{eq::LowRankConditions}
\begin{aligned}
    &\text{(a)} \quad X^\parallel\text{\, is well-approximated by a rank-}R \text{ matrix}, \\
    &\text{(b)} \quad  Y^\parallel - \beta^* X^\parallel \text{\, is well-approximated by a rank-}R \text{ matrix}.
\end{aligned}
\end{equation}
The PC estimator $\widehat{\beta}_{\rm PC}$ corresponds to condition (a): it first removes the low-rank structure from $X$, then regresses $Y$ on the residuals. The IFE estimator $\widehat{\beta}_{\rm IFE}$ corresponds to condition (b): it jointly estimates $\beta$ and a low-rank approximation to $Y - X\beta$. Unlike the cross-sectional case, the two approaches are not numerically equivalent. However, under appropriate regularity conditions, both estimators consistently estimate $\beta^*$, as we show below.

Both estimators depend on a tuning parameter $R$, which controls the number of principal components or factors used. Our consistency results require $R$ to grow with $n$ and $T$.

\begin{remark}
The PC estimator $\widehat{\beta}_{\rm PC}$ only residualizes the regressor $X$ with 
respect to the low-rank structure, while leaving $Y$ unaffected. This is not the common 
approach in the literature; see e.g.\ \citet{westerlund2015cross}, where $Y$ is also 
residualized (either jointly with $X$, or independently of $X$). We expect our main 
conclusions to be robust to whether $Y$ is residualized or not, and do not discuss this 
alternative estimator theoretically.\footnote{We expect that similar results will also hold for the estimator studied in \citet{Beyhum02012023}.} The finite-sample performance of both PC variants 
is compared in Section~\ref{section:montecarlo}.
\end{remark}

\subsection{Consistency under high-level assumptions}
\label{ssection::consistency}

We now provide results for consistency of  the PC and IFE estimators
under high-level conditions.
In the following subsection, we provide lower-level sufficient conditions that are easier to interpret and verify.
We begin with the PC estimator. Recall that this estimator 
is defined by \eqref{eq::PC_firststep} and \eqref{eq::PC_estimator}
as a function of $Y$, $X$, and $R$. We now write
$R = R_{nT}$, and allow it to depend on the sample size.

\begin{theorem}[\bf Consistency of PC estimator]
\label{thm::PC_consistencyNEW}
  Let  $\beta^{*} \in \mathbb{R}$ be a constant
      and  $X^\parallel \in \mathbb{R}^{n \times T}$ 
      be a random matrix, and define  $E:=Y - \beta^* \, X$
      and $X^\perp:=X- X^\parallel$.
     Assume that, as $n,T \rightarrow \infty$
     we have
    \begin{enumerate}[(i)]

      \item  
$\frac 1 {nT} \sum_{i=1}^n \sum_{t=1}^T  X^\perp_{it} \, E_{it}  =o_{P}(1)$,
and 
$\frac 1 {nT} \sum_{i=1}^n \sum_{t=1}^T    X_{it} ^2  ={\cal O}_{P}(1)$,
and  $\frac 1 {nT} \sum_{i=1}^n \sum_{t=1}^T    Y_{it} ^2  ={\cal O}_{P}(1)$.

   \item 
   $\frac 1 {nT} \sum_{i=1}^n \sum_{t=1}^T   \left( X^\perp_{it} \right)^2  \to_P  c$, for a constant $c>0$.

       \item 
        $\frac{R_{nT}} {nT} \| X^\perp \|^2 = o_P(1)$.

       \item $\min_{\big\{ G \in \mathbb{R}^{n \times T}: {\rm rank}(G) \leq R_{nT} \big\}} 
   \frac 1 {nT} \left\|  X^\parallel -  G \right\|^2_2 
  = o_P(1) $.
   \end{enumerate}
    Then,
    $$
        \widehat \beta_{\rm PC} =  \beta^* + o_P(1).
    $$
\end{theorem}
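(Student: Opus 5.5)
The plan is to substitute $Y = \beta^* X + E$ into \eqref{eq::PC_estimator}, which gives
$$
\widehat\beta_{\rm PC} - \beta^* = \frac{\frac{1}{nT}\sum_{i,t} E_{it}\,\widehat{X^\perp_{it}}}{\frac{1}{nT}\sum_{i,t}\big(\widehat{X^\perp_{it}}\big)^2}.
$$
Everything then reduces to one approximation statement:
$$
\frac{1}{nT}\big\|\widehat{X^\perp} - X^\perp\big\|_2^2 = o_P(1).
$$
Given this, the denominator converges to $c>0$ by (ii), using the triangle inequality for the Frobenius norm. The numerator splits as
$$
\frac{1}{nT}\sum_{i,t} E_{it}X^\perp_{it} + \frac{1}{nT}\sum_{i,t}E_{it}\big(\widehat{X^\perp_{it}}-X^\perp_{it}\big).
$$
The first term is $o_P(1)$ by (i). The second is bounded by Cauchy--Schwarz by $\big(\tfrac{1}{nT}\|E\|_2^2\big)^{1/2}\cdot o_P(1)$. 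Here $\tfrac{1}{nT}\|E\|_2^2 = {\cal O}_P(1)$ because $\|E\|_2 \le \|Y\|_2 + |\beta^*|\,\|X\|_2$ and (i) bounds both second moments. This yields the result.

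For the key step, write $\Delta := \widehat{X^\perp} - X^\perp = X^\parallel - \widehat{X^\parallel}$. Let $G^*$ be a minimizer in (iv), and set $D := X^\parallel - G^*$, so that $\tfrac{1}{nT}\|D\|_2^2 = o_P(1)$. Since $\widehat{X^\parallel}$ minimizes $\|X-G\|_2$ over rank-$R_{nT}$ matrices and $G^*$ is feasible,
$$
\|\Delta + X^\perp\|_2^2 = \|X - \widehat{X^\parallel}\|_2^2 \le \|X - G^*\|_2^2 = \|D + X^\perp\|_2^2 .
$$
Put $A := \Delta - D = G^* - \widehat{X^\parallel}$, which has rank at most $2R_{nT}$. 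Expanding both sides and cancelling $\|D\|_2^2$ and $2\langle D, X^\perp\rangle$ gives
$$
\|A\|_2^2 + 2\langle A, D\rangle + 2\langle A, X^\perp\rangle \le 0 .
$$

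The main point is to control the cross term $\langle A, X^\perp\rangle$, since $X^\perp$ is large in Frobenius norm. Here I use trace duality together with the low rank of $A$:
$$
|\langle A, X^\perp\rangle| \le \|A\|_{*}\,\|X^\perp\| \le \sqrt{2R_{nT}}\,\|A\|_2\,\|X^\perp\|,
$$
where $\|\cdot\|_*$ is the nuclear norm. Combining this with Cauchy--Schwarz on $\langle A,D\rangle$ and dividing by $\|A\|_2$ gives
$$
\|A\|_2 \le 2\|D\|_2 + 2\sqrt{2R_{nT}}\,\|X^\perp\| .
$$
Dividing by $\sqrt{nT}$, the right-hand side is $o_P(1)$ by (iv) and (iii). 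Hence $\|\Delta\|_2/\sqrt{nT} \le (\|A\|_2 + \|D\|_2)/\sqrt{nT} = o_P(1)$, which is exactly the needed approximation statement. Condition (iii) is precisely what makes this cross term negligible, so this is the step where the growth rate of $R_{nT}$ is disciplined.
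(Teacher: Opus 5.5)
Your proof is correct, and it follows a different and somewhat leaner route than the paper's.

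\textbf{How the paper argues.} The paper first proves an intermediate result, $\frac{1}{nT}{\rm Tr}[(X^\parallel)'X^\perp]=o_P(1)$ (its Step 1). It obtains this by bounding ${\rm Tr}[(X^\parallel_R)'X^\perp]$ with the rank-$R$ nuclear-norm inequality, together with the a priori bound $\|X^\parallel_R\|_2\le\|X^\parallel\|_2\le\|X\|_2+\|X^\perp\|_2={\cal O}_P(\sqrt{nT})$; this step uses (ii). It then shows $\frac{1}{nT}\|\widehat{X^\perp}-X^\perp\|_2^2=o_P(1)$ by comparing the objective at $\widehat X_R$ and at $X^\parallel_R$. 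In that comparison, the cross term ${\rm Tr}[\widehat X_R'X^\perp]$ is controlled separately using $\|\widehat X_R\|_2\le\|X\|_2$. Finally, it replaces $\widehat{X^\perp}$ by $X^\perp$ in both numerator and denominator. This last step again needs Step 1 to write ${\rm Tr}(Y'X^\perp)=\beta^*\|X^\perp\|_2^2+{\rm Tr}(E'X^\perp)+o_P(nT)$.

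\textbf{What your route does differently.} You avoid Step 1 entirely, in two ways.
\begin{enumerate}
\item Your basic inequality involves the single matrix $A=G^*-\widehat{X^\parallel}$ of rank at most $2R_{nT}$, and it is self-bounding. So the key approximation step needs only (iii) and (iv). It requires no a priori Frobenius bounds on the fitted low-rank matrices and does not use (ii).
\item Your opening identity removes $\beta^*$ exactly, so you never need $\langle X^\parallel,X^\perp\rangle$ to be small.
\end{enumerate}
The cost is a harmless factor $\sqrt2$ from working with rank $2R_{nT}$. You also lose the paper's by-product that $X^\parallel$ and $X^\perp$ are asymptotically orthogonal in sample, which the paper reuses in the non-collinearity argument of Lemma~\ref{lemma::medium}.

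\textbf{One step you should make explicit.} Your first display is not a mere substitution of $Y=\beta^*X+E$. It relies on $\langle X,\widehat{X^\perp}\rangle=\|\widehat{X^\perp}\|_2^2$, that is, $\langle\widehat{X^\parallel},X-\widehat{X^\parallel}\rangle=0$. Without this you would carry the extra term $\beta^*\langle\widehat{X^\parallel},\widehat{X^\perp}\rangle/\|\widehat{X^\perp}\|_2^2$. The identity does hold exactly, for either of two reasons:
\begin{itemize}
\item for the truncated SVD, the singular vectors are orthonormal;
\item for any Eckart--Young minimizer, $\rho\,\widehat{X^\parallel}$ is feasible for every $\rho$, so the first-order condition at $\rho=1$ gives the orthogonality.
\end{itemize}
State this, since it is precisely the point where your argument sidesteps the paper's Step 1.
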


The theorem is stated in a self-contained way: it assumes the existence of a constant $\beta^*$, a matrix $X^\parallel$, and derived quantities $E = Y - \beta^* X$ and $X^\perp = X - X^\parallel$ satisfying conditions (i)--(iv), and concludes that $\widehat{\beta}_{\rm PC}$ consistently estimates $\beta^*$. No specific interpretation of these quantities is assumed. Of course, the theorem becomes useful when we set $X_{it}^\parallel := \mathbb{E}(X_{it} \,|\, U_i, V_t)$ and define $\beta^*$ as the variance-weighted average in \eqref{eq::DefBetaStarTwoway}. This is what we do in the sufficient conditions below.

We now briefly discuss each condition. Condition (i) requires that $X^\perp$ and $E$ are asymptotically uncorrelated, and that $X$ and $Y$ have bounded second moments. Condition (ii) is a non-degeneracy condition ensuring that $X_{it}^\perp$ has positive variance in the limit. Condition (iii) requires that the spectral norm of $X^\perp$ does not grow too fast relative to $nT/R_{nT}$. Condition (iv) is the key low-rank assumption: it requires that $X^\parallel$ can be well-approximated by a rank-$R_{nT}$ matrix. This corresponds to condition \eqref{eq::LowRankConditions}(a) in the previous subsection. Note that (iii) requires $R_{nT}$ to not be too large, while (iv) requires $R_{nT}$ to not be too small. The sufficient conditions below assume that $R_{nT}$ grows with $n$ and $T$, but slowly.

We now state the analogous result for the IFE estimator.  

\begin{theorem}[\bf Consistency of IFE estimator]
\label{thm::IFE_consistencyNEW}
 Let  $\beta^{*} \in \mathbb{R}$ be a constant, and define  $E:=Y - \beta^* \, X$
      and $E^\perp:=E- E^\parallel$, where $E^\parallel \in \mathbb{R}^{n \times T}$ 
      is a random matrix.
     Assume that, as $n,T \rightarrow \infty$ we have

    \begin{enumerate}[(i)]

     \item    
     $\frac 1 {nT} \sum_{i=1}^n \sum_{t=1}^T  X_{it} \, E^\perp_{it}  =o_{P}(1)$,
     and     
     $\frac 1 {nT} \sum_{i=1}^n \sum_{t=1}^T    X_{it} ^2  ={\cal O}_{P}(1)$.
       
       \item 
        $ 
     \min_{\big\{ G \in \mathbb{R}^{n \times T}: {\rm rank}(G) \leq 2R_{nT} \big\}} 
     \frac{1}{nT}\left\|  X -  G \right\|^2_2 
  \geq c$, wpa1, for a constant $c>0$.

       \item $\frac{R_{nT}} {nT} \| E^\perp \|^2 = o_P(1)$.
    
       \item $\min_{\big\{ G \in \mathbb{R}^{n \times T}: {\rm rank}(G) \leq R_{nT} \big\}} 
   \frac{1} {nT} \left\|  E^\parallel -  G \right\|^2_2 
  = o_P(1) $.

    \end{enumerate}
    Then,
    $$
        \widehat \beta_{\rm IFE} =  \beta^* + o_P(1).
    $$

\end{theorem}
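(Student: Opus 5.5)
The plan is to compare the IFE least-squares objective at $\widehat\beta_{\rm IFE}$ with its value at $\beta^*$, after removing a term that does not depend on $(\beta,G)$. Write $\Delta := \beta - \beta^*$, so that $Y - X\beta - G = E^\parallel + E^\perp - X\Delta - G$.

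\textbf{Step 1: recentre the objective.} Let $G_0$ be a rank-$R_{nT}$ minimiser in condition (iv), and set $D := E^\parallel - G_0$, so that $d^2 := \|D\|_2^2/(nT) = o_P(1)$. Reparametrise $G = G_0 + H$; as $G$ ranges over rank-$\le R_{nT}$ matrices, $H$ ranges over a subset of rank-$\le 2R_{nT}$ matrices. Expanding the square,
\[
\|Y - X\beta - G\|_2^2 = \|E^\perp\|_2^2 + 2\langle E^\perp, D\rangle + \|D - X\Delta - H\|_2^2 - 2\langle E^\perp, H\rangle - 2\Delta\langle E^\perp, X\rangle .
\]
The first two terms depend on neither $\beta$ nor $H$, so they can be dropped. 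This matters because neither $\|E^\perp\|_2^2/(nT)$ nor $\langle E^\perp, D\rangle/(nT)$ is controlled by the assumptions (only the spectral norm of $E^\perp$ is). Define
\[
Q(\Delta,H) := \frac{1}{nT}\Bigl[\|D - X\Delta - H\|_2^2 - 2\langle E^\perp, H\rangle - 2\Delta\langle E^\perp, X\rangle\Bigr].
\]
The IFE estimator minimises $Q$ jointly over $\Delta$ and the admissible $H$. Taking $(\Delta,H) = (0,0)$ gives $Q(\widehat\Delta,\widehat H) \le d^2 = o_P(1)$.

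\textbf{Step 2: uniform lower bound on $Q$.} We bound the three terms separately.
\begin{itemize}
\item By (i), $b := |\langle E^\perp, X\rangle|/(nT) = o_P(1)$, so the last term is at least $-2b|\Delta|$.
\item Since $H$ has rank at most $2R_{nT}$, trace duality gives $|\langle E^\perp,H\rangle| \le \|E^\perp\|\,\|H\|_* \le \|E^\perp\|\sqrt{2R_{nT}}\,\|H\|_2$. By (iii), $a := \sqrt{2R_{nT}}\,\|E^\perp\|/\sqrt{nT} = o_P(1)$, so $|\langle E^\perp,H\rangle|/(nT) \le a\,\|H\|_2/\sqrt{nT}$.
\item Let $s := \|X\Delta + H\|_2/\sqrt{nT}$ and $K := \bigl((nT)^{-1}\sum_{i,t} X_{it}^2\bigr)^{1/2} = O_P(1)$ by (i). Then $\|H\|_2/\sqrt{nT} \le s + K|\Delta|$, and $\|D - X\Delta - H\|_2/\sqrt{nT} \ge s - d$.
\end{itemize}
To control $|\Delta|$ by $s$, use (ii): for $\Delta \ne 0$, $-H/\Delta$ has rank at most $2R_{nT}$, so $s^2 = \Delta^2\,\|X - (-H/\Delta)\|_2^2/(nT) \ge c\,\Delta^2$ wpa1. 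Hence $|\Delta| \le s/\sqrt c$. Combining,
\[
Q(\Delta,H) \ge (s-d)_+^2 - 2a\bigl(s + Ks/\sqrt c\bigr) - 2bs/\sqrt c \;=\; (s-d)_+^2 - o_P(1)\, s ,
\]
and the $o_P(1)$ coefficient is uniform over all $(\Delta,H)$. This uniformity is the step I expect to need the most care: the cross term $\langle E^\perp,H\rangle$ must be controlled uniformly over all low-rank $H$, and it is the rank restriction together with (iii) that makes this work.

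\textbf{Step 3: conclude.} Apply the bound at $(\widehat\Delta,\widehat H)$, with $\widehat s$ the corresponding value of $s$. Then
\[
(\widehat s - d)_+^2 - o_P(1)\,\widehat s \le d^2 = o_P(1).
\]
A quadratic that grows in $\widehat s$, minus a term with vanishing linear coefficient, can only stay below $o_P(1)$ if $\widehat s = o_P(1)$. Then $|\widehat\Delta| \le \widehat s/\sqrt c$ gives $\widehat\beta_{\rm IFE} - \beta^* = o_P(1)$.

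If a minimiser of the IFE objective is not attained, the same argument applies to any approximate minimiser whose objective is within $o_P(1)$ of the infimum.
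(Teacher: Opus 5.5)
Your proof is correct and is essentially the paper's argument. Both compare the objective at $\widehat\beta_{\rm IFE}$ with its value at $(\beta^*, E^\parallel_R)$, treat $G - E^\parallel_R$ as a rank-$2R_{nT}$ matrix, control the $E^\perp$ cross terms by trace duality together with (iii), and use (ii) to turn a small objective value into a small $|\Delta|$. The only difference is bookkeeping: the paper profiles out the rank-$2R_{nT}$ matrix via $\min_f {\rm Tr}[A M_f A']$ and cancels $\|E^\perp + E^+_R\|_2^2$ between its upper and lower bounds, whereas you subtract the $(\beta,G)$-free terms upfront and keep $H$ explicit through the scalar $s$, which gives the clean bound $\widehat s \le 2d + o_P(1)$.
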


As with Theorem~\ref{thm::PC_consistencyNEW}, this result is stated in a self-contained way: it assumes the existence of a constant $\beta^*$, a matrix $E^\parallel$, 
and derived quantities $E = Y - \beta^* X$ and $E^\perp = E - E^\parallel$ satisfying conditions (i)--(iv). The theorem becomes useful when we set $E_{it}^\parallel := \mathbb{E}(E_{it} \,|\, U_i, V_t)$ and define $\beta^*$ as the variance-weighted average in \eqref{eq::DefBetaStarTwoway}.

Compared to Theorem~\ref{thm::PC_consistencyNEW}, the roles of $X$ and $E$ are reversed, reflecting the different structure of the IFE estimator. Condition (i) requires that $X$ is asymptotically uncorrelated with the idiosyncratic component $E^\perp$, and that $X$ has bounded second moments. Condition (ii) is a non-collinearity condition requiring that $X$ cannot be well-approximated by a low-rank matrix.\footnote{Here, similarly to \citet{FreemanWeidner2021}, we need to assume that ${\rm rank}(G) \leq 2R_{nT}$ and not simply ${\rm rank}(G) \leq R_{nT}$.} This ensures that the coefficient $\beta$ can be identified separately from the factor structure. Condition (iii) requires that the spectral norm of $E^\perp$ does not grow too fast relative to $nT/R_{nT}$. Condition (iv) is the key low-rank assumption: it requires that $E^\parallel$ can be well-approximated by a rank-$R_{nT}$ matrix. This corresponds to condition \eqref{eq::LowRankConditions}(b) in the previous subsection. As for the PC estimator, (iii) requires $R_{nT}$ to not be too large, while (iv) requires $R_{nT}$ to not be too small.

The proofs of both theorems are given in Appendix~\ref{section_appendix:proofs}. The proof strategy builds on the approach of \citet{FreemanWeidner2021}, adapted to our setting.

\subsection{Consistency under low-level assumptions}
\label{ssection::lowlevel}

We now provide primitive sufficient conditions on the data generating process \eqref{eq::AHK} that imply the high-level conditions in Theorems~\ref{thm::PC_consistencyNEW} and \ref{thm::IFE_consistencyNEW}. We emphasize that these are only sufficient conditions: the high-level conditions may also be satisfied under alternative assumptions not covered by the theorem below.

\begin{theorem}[\bf Consistency under low-level assumptions]
\label{thm::lowlevel}
Let $Z_{it} = (Y_{it}, X_{it})' \in \mathbb{R}^2$ be generated as $Z_{it} = g(U_i, V_t, \varepsilon_{it})$ for some measurable function $g : \mathcal{U} \times \mathcal{V} \times \mathcal{E} \to \mathbb{R}^2$. Define $X_{it}^\parallel := \mathbb{E}(X_{it} \,|\, U_i, V_t)$, $X_{it}^\perp := X_{it} - X_{it}^\parallel$, and $\beta^*$ as in \eqref{eq::DefBetaStarTwoway}. Assume:
\begin{enumerate}[(i)]
    \item $(U_i : i \geq 1)$ are i.i.d., and $(V_t : t \geq 1)$ is strictly stationary and $\alpha$-mixing with mixing coefficients $\alpha_V(\tau)$ satisfying $\sum_{\tau=1}^{\infty}\alpha_V(\tau) < \infty$.
    
    \item $(\varepsilon_{it}:i \geq 1,t \geq 1)$ for $\varepsilon_{it} \in \mathbb{R}^{d_\varepsilon}$ is independent of $(U, V) := ((U_i)_{i \geq 1}, (V_t)_{t \geq 1})$, for fixed integers $d_\varepsilon$. Time series $(\varepsilon_{it} : t \geq 1)$ are i.i.d. across $i$, and are strictly stationary and $\alpha$-mixing with mixing coefficients $\alpha_\varepsilon(\tau)$ satisfying $\sum_{\tau=1}^{\infty}\alpha_\varepsilon(\tau)<\infty$.

    \item $\sup_{u \in \mathcal{U}, v \in \mathcal{V}} \mathbb{E}[\|g(u, v, \varepsilon_{it})\|^4] < \infty$, and ${\rm Var}(X_{it} \,|\, U_i, V_t) > 0$ almost surely.
    
    \item $U_i \in \mathbb{R}^{d_U}$ and $V_t \in \mathbb{R}^{d_V}$ for fixed integers $d_U, d_V \geq 1$. The conditional mean functions $h_X(u,v) := \mathbb{E}(X_{it} \,|\, U_i = u, V_t = v)$ and $h_E(u,v) := \mathbb{E}(Y_{it} - \beta^* X_{it} \,|\, U_i = u, V_t = v)$ are $s$-times continuously differentiable with $s > \max(d_U, d_V)/2$.
    
    \item $R_{nT} \to \infty$ and $R_{nT} = o\left(\min\left(\sqrt{T}, \sqrt{n} / (\log T)^2\right)\right)$.
\end{enumerate}
Then the assumptions of Theorems~\ref{thm::PC_consistencyNEW} and \ref{thm::IFE_consistencyNEW} are satisfied, and hence $\widehat{\beta}_{\rm PC} = \beta^* + o_P(1)$ and $\widehat{\beta}_{\rm IFE} = \beta^* + o_P(1)$.
\end{theorem}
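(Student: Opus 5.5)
We verify conditions (i)--(iv) of Theorem~\ref{thm::PC_consistencyNEW} with $X^\parallel_{it}=h_X(U_i,V_t)$, and conditions (i)--(iv) of Theorem~\ref{thm::IFE_consistencyNEW} with $E^\parallel_{it}=h_E(U_i,V_t)$. We then invoke those theorems.

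\textbf{Moment conditions.} These are the first and second parts of (i) in both theorems, and (ii) of Theorem~\ref{thm::PC_consistencyNEW}. By (iii), $\sup_{u,v}\mathbb{E}\|g(u,v,\varepsilon_{it})\|^4<\infty$, so the unconditional fourth moments of $Y_{it}$ and $X_{it}$ are finite. Hence $\frac1{nT}\sum X_{it}^2$ and $\frac1{nT}\sum Y_{it}^2$ have bounded expectation and are $\mathcal O_P(1)$ by Markov.

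For $\frac1{nT}\sum X^\perp_{it}E_{it}$ and $\frac1{nT}\sum (X^\perp_{it})^2$ we use a law of large numbers. Write $m_{it}$ for $X^\perp_{it}E_{it}$ (or $(X^\perp_{it})^2$) and decompose
\[
m_{it}=\mathbb{E}(m_{it}\mid U_i,V_t)+\bigl(m_{it}-\mathbb{E}(m_{it}\mid U_i,V_t)\bigr).
\]
The second piece is conditionally mean zero given $(U,V)$. Conditional on $(U,V)$, its summands are independent across $i$ and $\alpha$-mixing over $t$ with summable coefficients and bounded second moments. So its conditional variance is $\mathcal O(1/(nT))$.

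The first piece is $\phi(U_i,V_t)$ for a bounded-second-moment function $\phi$. Its average converges to $\mathbb{E}\phi$ by a U-statistic-type variance computation. Write $\phi=\bar\phi+\phi_1(U_i)+\phi_2(V_t)+\phi_{12}$ (Hoeffding). Then $\phi_1$ averages by the i.i.d.\ LLN, $\phi_2$ by the mixing LLN, and the interaction term has variance $\mathcal O(1/n+1/T)$. These limits give $\mathbb{E}(X^\perp E)=\mathbb{E}(X^\perp Y)-\beta^*\mathbb{E}(X^\perp)^2=0$ by the definition of $\beta^*$, and $c=\mathbb{E}[\mathrm{Var}(X_{it}\mid U_i,V_t)]>0$. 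For the IFE condition (i) we use $\mathbb{E}(X_{it}E^\perp_{it})=\mathbb{E}(X^\perp_{it}E^\perp_{it})=0$, which holds since $\mathbb{E}(X^\parallel E^\perp)=0$ by iterated expectations.

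\textbf{Spectral norm conditions.} These are (iii) of both theorems. Conditional on $(U,V)$, $X^\perp$ has mean-zero entries, independent rows, and rows that are mixing in $t$ with uniformly bounded fourth moments. We aim for $\|X^\perp\|^2=\mathcal O_P(\max(n,T)\cdot\text{polylog})$. A convenient route is
\[
\|X^\perp\|^2\le\|X^{\perp\prime}X^\perp\|\le T\,\|\mathbb{E}(X^{\perp\prime}X^\perp\mid U,V)/n\|\,n+\text{fluctuation}.
\]
The conditional mean of $X^{\perp\prime}X^\perp$ is $n$ times a $T\times T$ autocovariance-type matrix. Summable mixing with bounded fourth moments (Davydov's inequality) bounds its norm by $\mathcal O(n)$.

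The fluctuation $\sum_i(x_ix_i'-\mathbb{E}x_ix_i')$ is a sum of $n$ independent random matrices. We bound it by a matrix Bernstein/Rosenthal inequality, or alternatively by a trace moment bound, yielding $\mathcal O_P(\sqrt n\,T(\log T)^2)$-type terms. Together,
\[
\|X^\perp\|^2=\mathcal O_P\bigl(n+T+\sqrt n\,T(\log T)^2\bigr).
\]
Multiplying by $R_{nT}/(nT)$ gives $o_P(1)$ precisely under (v), namely $R_{nT}=o(\sqrt T)$ and $R_{nT}=o(\sqrt n/(\log T)^2)$. The same argument applies to $E^\perp$. I expect this random matrix bound, under only fourth moments and time-series dependence, to be the main technical obstacle.

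\textbf{Low-rank approximation.} This is (iv) of both theorems. We use the Hilbert--Schmidt/Birman--Solomyak result: for $h\in C^s$ with $s>\max(d_U,d_V)/2$, there exist functions $a_r,b_r$ such that
\[
\bigl\|h-\textstyle\sum_{r\le R}a_r(u)b_r(v)\bigr\|_{L^2}\to0,
\]
and indeed uniformly on compacts. Take $G_{it}=\sum_{r\le R}a_r(U_i)b_r(V_t)$, which has rank at most $R$. Then
\[
\mathbb{E}\Bigl[\tfrac1{nT}\|X^\parallel-G\|_2^2\Bigr]=\|h_X-h_{X,R}\|^2_{L^2(P_U\otimes P_V)}\to0
\]
as $R_{nT}\to\infty$, since $U_i$ and $V_t$ share marginals; Markov then gives (iv). If the supports are unbounded, we truncate using the moment bounds in (iii). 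The same holds for $h_E$.

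\textbf{Non-collinearity.} This is IFE (ii). For any $G$ of rank at most $2R_{nT}$, write $X=X^\parallel+X^\perp$. Then
\[
\tfrac1{nT}\|X-G\|_2^2\ge\tfrac1{nT}\|X^\perp\|_2^2-\tfrac{2}{nT}\bigl|\langle X^\perp,X^\parallel-G\rangle\bigr|.
\]
The inner product satisfies $|\langle X^\perp,X^\parallel-G\rangle|\le\|X^\perp\|\,\|X^\parallel-G\|_*$ in nuclear norm. The nuclear norm is at most $\sqrt{4R}\,\|X^\parallel-G\|_2$ up to the rank of $X^\parallel$. Handling the rank of $X^\parallel$ requires first replacing $X^\parallel$ by its rank-$R$ approximation and using (iv).

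A cleaner route uses the identity $\min_G\|X-G\|_2^2=\sum_{r>2R}\sigma_r(X)^2$. By Weyl, $\sigma_{r+R}(X)\le\sigma_r(X^\perp)+\sigma_{R+1}(\tilde X^\parallel)$. Combining this with $\|X^\perp\|_2^2/(nT)\to c$, $R\|X^\perp\|^2/(nT)=o_P(1)$, and (iv) shows that the tail singular values retain mass at least $c-o_P(1)$. This yields (ii) with constant $c/2$ wpa1. Both theorems then deliver the conclusion.
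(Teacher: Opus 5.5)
Your overall plan is sound, and it is a different, more self-contained route than the paper's. The paper verifies the intermediate Lemma~\ref{lemma::medium} and cites Lemma B.3 of Wang et al.\ (2022) for $\|X^\perp\|$. You instead verify the high-level conditions directly. You prove the laws of large numbers by splitting into $\phi(U_i,V_t)$ plus a conditionally centered part, which is more careful than the paper's row-by-row argument, since rows share $V_t$ unconditionally. You also bound $\|X^\perp\|^2=\|X^{\perp\prime}X^\perp\|$ by a conditional mean plus a sum of $n$ independent matrices.

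\textbf{The step that fails: non-collinearity.} This is condition (ii) of Theorem~\ref{thm::IFE_consistencyNEW}, and neither of your routes works as written.

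In your first route you drop $\|X^\parallel-G\|_2^2$. Since $G$ ranges over all matrices of rank at most $2R$, the cross term $\langle X^\perp,X^\parallel-G\rangle$ cannot be $o_P(nT)$ uniformly in $G$ unless that positive term is kept to absorb it. The paper keeps it and uses Young's inequality with $\delta<1$.

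In your second route, $\sigma_{r+R}(X)\le\sigma_r(X^\perp)+\sigma_{R+1}(\tilde X^\parallel)$ is an \emph{upper} bound on the singular values of $X$. It therefore cannot show that $\sum_{r>2R}\sigma_r(X)^2$ is bounded below. In addition, the approximation error $D:=X^\parallel-G_{X,R}$ is only small in Frobenius norm, so it cannot be handled through a single singular value subtracted from every tail term.

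A correct version of your idea uses that $G-G_{X,R}$ has rank at most $3R$. Hence, for every such $G$,
\[
\|X-G\|_2 \;\ge\; \min_{{\rm rank}(G')\le 3R}\|X^\perp-G'\|_2-\|D\|_2 \;\ge\; \bigl(\|X^\perp\|_2^2-3R\,\|X^\perp\|^2\bigr)^{1/2}-\|D\|_2 .
\]
The right-hand side does not depend on $G$ and equals $\sqrt{nT}\,(\sqrt{c}+o_P(1))$ by conditions (ii)--(iv) of Theorem~\ref{thm::PC_consistencyNEW}, which you have already verified. This is shorter than the paper's argument.

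\textbf{A second, non-fatal error: the Davydov step.} With only fourth moments, Davydov's inequality gives $|{\rm Cov}(X^\perp_{it},X^\perp_{is}\mid U,V)|\lesssim\alpha_\varepsilon(|t-s|)^{1/2}$. But $\sum_\tau\alpha_\varepsilon(\tau)<\infty$ does not imply $\sum_\tau\alpha_\varepsilon(\tau)^{1/2}<\infty$, so your $\mathcal{O}(n)$ bound on $\|\mathbb{E}(X^{\perp\prime}X^\perp\mid U,V)\|$ is unjustified.

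Cauchy--Schwarz gives row sums of order $\sqrt{T}$, hence a bound $\mathcal{O}(n\sqrt{T})$. After multiplying by $R_{nT}/(nT)$ this is $R_{nT}/\sqrt{T}=o(1)$. So your conclusion survives, and it uses exactly the condition $R_{nT}=o(\sqrt{T})$.

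\textbf{Smaller points.}
\begin{itemize}
\item Your fluctuation bound is fine. The matrix Rosenthal inequality with the crude bound $\mathbb{E}\|x_i\|^4\le T^2\sup_t\mathbb{E}[(X^\perp_{it})^4\mid U,V]$ already gives $\mathcal{O}_P(\sqrt{n}\,T\log T)$, with no mixing needed.
\item Your $\mathcal{O}(1/(nT))$ variance claim in the LLN step would need more than second moments of $m_{it}$. Independence across $i$ alone gives $\mathcal{O}(1/n)$, which suffices.
\item The identity $\mathbb{E}[\tfrac{1}{nT}\|X^\parallel-G\|_2^2]=\|h_X-h_{X,R}\|^2_{L^2(P_U\otimes P_V)}$ requires $U$ and $V$ to be independent, not merely to share marginals. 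This independence is implicit in the paper's proof too.
\item Smoothness is not needed for the low-rank step: any $h_X\in L^2(P_U\otimes P_V)$ has a truncated SVD that converges in $L^2$ as $R\to\infty$.
\end{itemize}
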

The proof of Theorem~\ref{thm::lowlevel} is given in Appendix~\ref{section_appendix:proofs}.
We now provide an informal discussion of why these conditions imply the high-level assumptions of Theorems~\ref{thm::PC_consistencyNEW} and \ref{thm::IFE_consistencyNEW}. The formal proof proceeds through an intermediate set of conditions stated in Lemma~\ref{lemma::medium} in the appendix.

Conditions (i) and (ii) specify the dependence structure of the data generating process. Units are drawn independently, while time periods may exhibit serial dependence through both the common shocks $V_t$ and the idiosyncratic shocks $\varepsilon_{it}$. The $\alpha$-mixing assumption with summable coefficients ensures that this serial dependence decays sufficiently fast. These conditions are standard in the panel data literature with common factors; see \citet{fernandez_weidner_2014} and \citet{SuEtAL2014} for related discussions. Together with condition (iii), they imply laws of large numbers and bounds on spectral norms that are needed to verify the high-level conditions. In particular, the independence of $\varepsilon_{it}$ from $(U, V)$ ensures that $X^\perp$ and $E^\perp$ are asymptotically uncorrelated, which is essential for conditions (i) in both Theorems~\ref{thm::PC_consistencyNEW} and \ref{thm::IFE_consistencyNEW}. The mixing conditions also yield bounds on the spectral norms $\|X^\perp\|$ and $\|E^\perp\|$ that, combined with condition (v), imply conditions (iii) in both theorems. See \citet{wang2022lowrankpanelquantileregression} for related spectral norm bounds.

Condition (iii) provides moment bounds and a non-degeneracy requirement. The uniform fourth moment bound ensures that sample averages converge to their population counterparts. The requirement ${\rm Var}(X_{it} \,|\, U_i, V_t) > 0$ ensures that $X_{it}^\perp$ has positive variance, which is needed for condition (ii) in Theorem~\ref{thm::PC_consistencyNEW} and condition (ii) in Theorem~\ref{thm::IFE_consistencyNEW}.

Condition (iv) is the key smoothness assumption. It requires that the conditional mean functions $h_X$ and $h_E$ are sufficiently smooth relative to the underlying dimensionality of the unobserved heterogeneity. Classical results in approximation theory show that for an $s$-smooth function on $\mathbb{R}^d$, the singular values of the associated integral operator decay as $\sigma_r = \largeO(r^{-s/d})$. When $s > d/2$, this decay is fast enough to ensure that $\sum_{r > R} \sigma_r^2 \to 0$ as $R \to \infty$. Applied to our setting with $d = \max(d_U, d_V)$, this implies that $X^\parallel$ and $E^\parallel$ can be well-approximated by low-rank matrices, which is precisely what conditions (iv) in Theorems~\ref{thm::PC_consistencyNEW} and \ref{thm::IFE_consistencyNEW} require.

Condition (v) restricts the growth rate of $R_{nT}$. The requirement $R_{nT} \to \infty$ ensures that the low-rank approximation error vanishes asymptotically. The upper bound $R_{nT} = o(\min(\sqrt{T}, \sqrt{n}/(\log T)^2))$ ensures that $R_{nT}$ does not grow so fast as to violate the spectral norm conditions (iii) in the high-level theorems.

Theorem~\ref{thm::lowlevel} is the main theoretical contribution of this paper. It establishes that the PC and IFE estimators have well-defined probability limits under fully nonparametric assumptions on the data generating process. Existing results for these estimators, such as \citet{bai2009panel}, typically assume a parametric model of the form $Y_{it} = X_{it} \beta + \lambda_i' f_t + \varepsilon_{it}$. In contrast, our conditions do not impose any parametric structure. The representation $Y_{it} = X_{it}\beta^*  + G_{it} + \xi_{it}$ is not assumed but derived from the definition of $\beta^*$ as a variance-weighted average. The low-rank condition (iv) restricts only the conditional mean functions $h_X$ and $h_E$, not the full distribution of $(Y_{it}, X_{it})$ given $(U_i, V_t)$. The key requirement for consistency, beyond regularity conditions, is that the number of factors $R_{nT}$ grows with $n$ and $T$, but slowly. This allows the low-rank approximation to capture increasingly complex patterns in the conditional means while ensuring that estimation error remains controlled.

\subsection{Results conditional on $U$ and $V$}
\label{ssection::conditional}

The results in the previous subsections define the estimand $\beta^*$ as a population 
quantity that remains fixed as $n, T \to \infty$. This requires distributional assumptions 
on $(U_i)$ and $(V_t)$, such as the i.i.d.\ and stationarity conditions in 
Theorem~\ref{thm::lowlevel}. An alternative approach, more in line with the standard 
interactive fixed effects literature, is to condition on $(U, V) = ((U_i)_{i=1}^n, 
(V_t)_{t=1}^T)$ throughout. This allows for arbitrary sequences $(U_i)$ and $(V_t)$, 
at the cost of working with an estimand that depends on $n$ and $T$. For fixed sequences 
$(U_1, \ldots, U_n)$ and $(V_1, \ldots, V_T)$, we define the conditional (or finite-population) estimand
\begin{align}
    \beta^*_{nT} := \frac{\sum_{i=1}^n \sum_{t=1}^T {\rm Cov}(Y_{it},\, X_{it} \,|\, U_i,\, V_t)}{\sum_{i=1}^n \sum_{t=1}^T {\rm Var}(X_{it} \,|\, U_i,\, V_t)} = \sum_{i=1}^n \sum_{t=1}^T w_{it}^{nT} \, \beta_{it},
    \label{eq::DefBetaStarNT}
\end{align}
where $\beta_{it}$ is defined in \eqref{eq::beta_it_twoway} and the weights are
\begin{align}
    w_{it}^{nT} := \frac{{\rm Var}(X_{it} \,|\, U_i,\, V_t)}{\sum_{j=1}^n \sum_{s=1}^T {\rm Var}(X_{js} \,|\, U_j,\, V_s)}.
    \label{eq::DefWeightsNT}
\end{align}
Thus $\beta^*_{nT}$ is a variance-weighted average of the conditional effects $\beta_{it}$, 
with non-negative weights satisfying $\sum_{i=1}^n \sum_{t=1}^T w_{it}^{nT} = 1$. Unlike $\beta^*$ in 
\eqref{eq::DefBetaStarTwoway}, the estimand $\beta^*_{nT}$ depends on the realized sequences 
$(U_i)$ and $(V_t)$ and may change with $n$ and $T$, but its interpretation is unchanged.

Theorems~\ref{thm::PC_consistencyNEW} and \ref{thm::IFE_consistencyNEW} apply unchanged 
when $\beta^*$ is replaced by $\beta^*_{nT}$ and all statements are interpreted conditionally 
on $(U, V)$. The proofs require no modification: the key orthogonality condition 
$\sum_{i=1}^n \sum_{t=1}^T \mathbb{E}[X_{it}^\perp E_{it} \,|\, U_i, V_t] = 0$ holds by construction of 
$\beta^*_{nT}$, and all other conditions concern matrices and spectral norms that are 
well-defined conditional on $(U, V)$.

The following theorem provides sufficient conditions for consistency in the conditional 
setting. It parallels Theorem~\ref{thm::lowlevel}, but replaces the distributional 
assumptions on $(U_i)$ and $(V_t)$ with direct conditions on the low-rank approximability 
of the conditional mean matrices.

\begin{theorem}[\bf Conditional consistency]
\label{thm::conditional}
Let $Z_{it} = (Y_{it}, X_{it})' \in \mathbb{R}^2$ be generated as $Z_{it} = g(U_i, V_t, 
\varepsilon_{it})$ for fixed sequences $(U_1, \ldots, U_n)$ and $(V_1, \ldots, V_T)$, and 
some measurable function $g : \mathcal{U} \times \mathcal{V} \times \mathcal{E} \to 
\mathbb{R}^2$. Define $X_{it}^\parallel := \mathbb{E}(X_{it} \,|\, U_i, V_t)$, 
$X_{it}^\perp := X_{it} - X_{it}^\parallel$, and $\beta^*_{nT}$ as in 
\eqref{eq::DefBetaStarNT}. Assume:
\begin{enumerate}[(i)]
    \item 
    $(\varepsilon_{it}:i \geq 1,t \geq 1)$ for $\varepsilon_{it} \in \mathbb{R}^{d_\varepsilon}$ for fixed integers $d_\varepsilon$. Time series $(\varepsilon_{it} : t \geq 1)$ are independent across $i$, and are $\alpha$-mixing with mixing coefficients $\alpha_{i}(\tau)$ 
    satisfying $\sup_{i \leq n}\alpha_{i}(\tau)\leq \alpha(\tau)$ with $\sum_{\tau=1}^\infty \alpha(\tau) < \infty$.
    \item $\sup_{i \leq n,\, t \leq T} \mathbb{E}[\|g(U_i, V_t, \varepsilon_{it})\|^4] < \infty$, 
    and $\frac{1}{nT}\sum_{i,t}{\rm Var}(X_{it} \,|\, U_i, V_t) \geq c > 0$.
    \item The matrices $X^\parallel$ and $E^\parallel$ satisfy condition~(iv) of 
    Theorems~\ref{thm::PC_consistencyNEW} and \ref{thm::IFE_consistencyNEW} respectively.
    \item $R_{nT} \to \infty$ and $R_{nT} = o\left(\min\left(\sqrt{T}, \sqrt{n} / 
    (\log T)^2\right)\right)$.
\end{enumerate}
Then $\widehat{\beta}_{\rm PC} = \beta^*_{nT} + o_P(1)$ and $\widehat{\beta}_{\rm IFE} = 
\beta^*_{nT} + o_P(1)$, where the $o_P(1)$ term is conditional on $(U, V)$.
\end{theorem}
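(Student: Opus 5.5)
The plan is to verify, conditionally on $(U,V)$, the four high-level conditions of Theorems~\ref{thm::PC_consistencyNEW} and \ref{thm::IFE_consistencyNEW} with $\beta^*$ replaced by $\beta^*_{nT}$. As noted before the statement, both theorems then apply verbatim conditionally on $(U,V)$. Condition (iv) of each theorem is assumed directly in (iii) of Theorem~\ref{thm::conditional}, so only conditions (i)--(iii) need work.

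\textbf{Step 1: laws of large numbers.} Conditional on $(U,V)$, the arrays $X^\perp_{it}$ and $E^\perp_{it}$ are functions of $\varepsilon_{it}$ only, with deterministic, $(i,t)$-specific functions. They are therefore independent across $i$, $\alpha$-mixing in $t$ with coefficients bounded by $\alpha(\tau)$, and have uniformly bounded fourth moments.

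For any such mean-zero products, e.g.\ $W_{it}=X^\perp_{it}E_{it}-\mathbb{E}[X^\perp_{it}E_{it}|U_i,V_t]$, a covariance inequality for mixing sequences (Davydov) gives
\begin{align*}
\mathrm{Var}\Big(\tfrac{1}{nT}\textstyle\sum_{i,t}W_{it}\Big)\le \tfrac{C}{n^2T^2}\sum_i\sum_{t,s}\alpha(|t-s|)^{1/2}.
\end{align*}
Bounded fourth moments control $W_{it}$ in $L^2$, so the exponent $1/2$ works with summable $\alpha$ after a truncation argument or a slightly stronger bound. This gives a rate $O(1/(nT))$ up to that caveat, so the variance is $o(1)$.

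By construction of $\beta^*_{nT}$, $\sum_{i,t}\mathbb{E}[X^\perp_{it}E_{it}|U,V]=0$. Hence $\frac1{nT}\sum X^\perp_{it}E_{it}=o_P(1)$.

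Similarly, $\frac1{nT}\sum(X^\perp_{it})^2$ concentrates around $\frac1{nT}\sum{\rm Var}(X_{it}|U_i,V_t)\ge c$. Condition (ii) of the PC theorem asks for convergence to a constant, but the proof only uses a lower bound bounded away from zero together with an $O_P(1)$ bound. I would either state this relaxation, or pass to subsequences along which the deterministic average converges.

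Bounded second moments of $X$ and $Y$ follow from Markov's inequality and the uniform fourth-moment bound. The IFE condition (i) is analogous: $X E^\perp = X^\parallel E^\perp + X^\perp E^\perp$. The first term has mean zero; the second has mean $\sum\mathbb{E}[X^\perp E^\perp]=\sum\mathbb{E}[X^\perp E]=0$.

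\textbf{Step 2: spectral norm bounds.} This is the main obstacle. I need $\|X^\perp\|^2$ and $\|E^\perp\|^2$ to be $O_P(\max(n,T)\cdot\text{polylog})$, with rows independent and columns mixing, so that $R_{nT}\|X^\perp\|^2/(nT)=o_P(1)$ under (iv).

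I would first try the trace/moment route $\|A\|^2\le\|AA'\|$, bounding $\mathbb{E}\|A'A - \mathbb{E}A'A\|$ through a matrix Bernstein inequality. The summands $A_i'A_i$ are independent across $i$, since the rows are independent. This requires truncating entries at level $(nT)^{1/4}$-ish using the fourth moments.

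The deterministic part $\|\mathbb{E}A'A\|\le n\max_t\sum_s|\mathrm{Cov}(A_{it},A_{is})|\le Cn$ follows from summable mixing coefficients. Matrix Bernstein then gives
\begin{align*}
\|A'A-\mathbb{E}A'A\|=O_P\big(\sqrt{n}\,T\cdot\text{(truncation level)}\cdot\log T\big).
\end{align*}
This yields $\|A\|^2=O_P(n+\sqrt n\,T(\log T)^2)$, roughly matching the rate $R=o(\min(\sqrt T,\sqrt n/(\log T)^2))$. The result of \citet{wang2022lowrankpanelquantileregression} can be cited if the truncation bookkeeping becomes heavy.

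\textbf{Step 3: non-collinearity (IFE condition (ii)).} For any $G$ of rank at most $2R$, write
\begin{align*}
\|X-G\|_2^2\ge\|X^\perp\|_2^2-2|\langle X^\perp,X^\parallel-G\rangle|-\ldots
\end{align*}
More cleanly, I would use that projecting $X^\perp$ onto the rank-$4R$ space spanned by $G$ and $X^\parallel$'s best approximation captures at most $4R\|X^\perp\|^2$ of the Frobenius mass. By Step 2 this is $o_P(nT)$, and the remainder is at least $\frac1{nT}\|X^\perp\|_2^2-o_P(1)\ge c/2$ by Step 1. Combining Steps 1--3 with (iii)--(iv), both theorems apply and deliver the two consistency claims.
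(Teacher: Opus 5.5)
Your plan is essentially the paper's proof: verify the high-level conditions of Theorems~\ref{thm::PC_consistencyNEW} and~\ref{thm::IFE_consistencyNEW} conditionally on $(U,V)$ with $\beta^*_{nT}$ in place of $\beta^*$, get the score and moment conditions from the orthogonality built into $\beta^*_{nT}$ plus a law of large numbers over independent rows, take the spectral-norm bound from Lemma~B.3 of \citet{wang2022lowrankpanelquantileregression}, and obtain non-collinearity from a rank-$O(R)$ argument. Your projection bound and the paper's nuclear-norm/Young bound use the same two facts, $R\|X^\perp\|^2=o_P(nT)$ and low-rank approximability of $X^\parallel$, and your treatment of the ``converges to a constant'' requirement in condition~(ii) of Theorem~\ref{thm::PC_consistencyNEW} (relaxation or subsequences) is more careful than the paper's. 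Note, however, that your self-contained matrix-Bernstein sketch would not deliver the needed rate as written: truncating at $(nT)^{1/4}$ makes the $L\log T$ term (with $L\approx T\tau^2$) contribute $R\sqrt{T/n}\,\log T$ after scaling by $R/(nT)$, and with fourth moments only, Davydov's inequality gives covariances of order $\alpha(\tau)^{1/2}$, which need not be summable. That step therefore genuinely rests on the citation, exactly as in the paper.
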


The proof of Theorem~\ref{thm::conditional} is given in Appendix~\ref{section_appendix:proofs}.
Condition~(iii) of Theorem~\ref{thm::conditional} assumes directly that the conditional 
mean matrices admit good low-rank approximations. In Theorem~\ref{thm::lowlevel}, this 
condition was derived from smoothness of the conditional mean functions $h_X$ and $h_E$, 
combined with the i.i.d.\ and stationarity assumptions on $(U_i)$ and $(V_t)$. In the 
conditional setting, smoothness alone does not suffice: we also require that the realized 
sequences $(U_i)$ and $(V_t)$ are sufficiently spread out, in the sense that the singular 
functions of $h_X$ and $h_E$ do not concentrate on these sequences. In 
Appendix~\ref{appendix::conditional_U_V} we discuss a sufficient condition for 
Condition~(iii) of Theorem~\ref{thm::conditional}.

\begin{remark}[\bf Relationship between $\beta^*_{nT}$ and $\beta^*$]
\label{remark::betaNT_vs_beta}
Under the conditions of Theorem~\ref{thm::lowlevel}, the two estimands coincide 
asymptotically: $\beta^*_{nT} \to_P \beta^*$ as $n, T \to \infty$. This follows from 
the law of large numbers. The numerator satisfies
\begin{equation}
\frac{1}{nT} \sum_{i,t} {\rm Cov}(Y_{it}, X_{it} \,|\, U_i, V_t) \to_P 
\mathbb{E}\left[{\rm Cov}(Y_{it}, X_{it} \,|\, U_i, V_t)\right] = {\rm Cov}(Y_{it}, X_{it}^\perp),
\end{equation}
and similarly the denominator converges to ${\rm Var}(X_{it}^\perp)$. Thus when the 
distributional assumptions of Theorem~\ref{thm::lowlevel} hold, the conditional and 
unconditional formulations yield the same limit. The conditional formulation is more 
general in that it does not require these distributional assumptions, but the estimand 
$\beta^*_{nT}$ then depends on the particular sequences $(U_i)$ and $(V_t)$ observed 
in the sample.
\end{remark}

\section{Discussion and Extensions}
\label{section:discussion}

This section discusses limitations of the results presented so far as well as natural 
extensions of the framework. None of the discussions below involve new formal results, our goal here is just to clarify the scope of the paper and to point to directions 
for future work.

\subsection{Inference}
\label{ssection:inference}

The consistency results in Theorems~\ref{thm::PC_consistencyNEW} 
and~\ref{thm::IFE_consistencyNEW} are derived under minimal restrictions on the 
underlying DGP and are silent on the distribution of the two estimators. More progress 
can be achieved if additional restrictions on the DGP and on the approximation rate of 
$R_{nT}$ are imposed. In Appendix~\ref{ssection::freeman_weidner} we follow 
\citet{FreemanWeidner2021} and derive rate results for the IFE estimator. As in that 
paper, the rate of convergence is driven by the approximation bias associated with 
$R_{nT} \to \infty$. The fact that the asymptotic distribution is dominated by this 
bias term is further confirmed by our Monte Carlo study in 
Section~\ref{section:montecarlo}. Since no general nonparametric method for 
correcting this bias is available, the PC and IFE estimators cannot be directly used 
for inference on $\beta^*$.\footnote{Even in settings where $E_{it}^\parallel$ or 
$X_{it}^\parallel$ have an exact finite factor structure, the estimators we consider suffer from 
the incidental parameters problem (see e.g.\ \citealp{westerlund2015cross} and 
\citealp{MoonWeidnerET}) and require bias correction.}

Some progress is possible if further restrictions on the DGP are imposed. Following 
\citet{https://doi.org/10.3982/ECTA15238}, \citet{FreemanWeidner2021}, and 
\citet{beyhum2024inferencediscretizingunobservedheterogeneity}, one approach is to 
discretize the unobserved heterogeneity. We take the suggestion of 
\citet{beyhum2024inferencediscretizingunobservedheterogeneity} as a leading example. 
The procedure assumes that there exist injective functions $a_i$ and $b_t$ of $U_i$ 
and $V_t$ respectively, with finite-sample counterparts $\widehat{a}_i$ and 
$\widehat{b}_t$, and proceeds in three steps:
\begin{description}
    \item[Step 1.] Cluster units $i = 1, \ldots, n$ into $G$ clusters based on 
    $\widehat{a}_i$. Denote
their cluster labels by $g_{1},\ldots,g_{n}\in \{1,\ldots,G\}$.
    \item[Step 2.] Cluster time periods $t = 1, \ldots, T$ into $C$ clusters based on 
    $\widehat{b}_t$.  Denote
their cluster labels by $c_{1},\ldots,c_{T}\in \{1,\ldots,C\}$.
    \item[Step 3.] Apply the pooled OLS estimator to $\widehat{Z}_{it} := Z_{it} - 
    \overline{Z}_{g_i t} - \overline{Z}_{i c_t} + \overline{Z}_{g_i c_t}$.
\end{description}
Following \citet{beyhum2024inferencediscretizingunobservedheterogeneity}, we refer to 
this as the two-way grouped fixed effects (TWGFE) estimator. Their Monte Carlo results 
indicate that its finite-sample distribution is well approximated by a normal 
distribution that ignores estimation uncertainty from the grouping step, provided the 
dimension of $U_i$ and $V_t$ is small (mirroring assumption (iv) of 
Theorem~\ref{thm::lowlevel}) and K-means clustering is used.

The TWGFE approach is not yet theoretically justified in our setting, due to the 
estimation of high-dimensional nuisance parameters indexed by group membership $g_i$ 
and $c_t$.\footnote{We conjecture that consistency of TWGFE for $\beta^*$ can be 
established under the low-level conditions of Theorem~\ref{thm::lowlevel} combined 
with the injectivity conditions in 
\citet{beyhum2024inferencediscretizingunobservedheterogeneity}.} The theoretical 
results in \citet{beyhum2024inferencediscretizingunobservedheterogeneity} are instead 
established via sample splitting, as it is common in the double machine learning literature 
\citep{10.1111/ectj.12097}. Such sample splitting could be justified in our setting 
only if the projection errors $E_{it}^\perp$ and $X_{it}^\perp$ were i.i.d.\ over 
both dimensions. Since in our setting these errors have no structural meaning beyond 
population linear projection residuals, imposing such restrictions is not appropriate.

A further difficulty is that the convergence rate of the infeasible pooled OLS estimator 
(even for $E^\parallel$ or $X^\parallel$ known) can range from $\sqrt{\min(n,T)}$ to 
$\sqrt{nT}$, depending on the DGP, without any intermediate rate being ruled out. As 
a result, uniform inference on $\beta^*$ is generally not possible in our setting, as 
discussed in \citet{2019Menzel} and \citet{Juodis2020}.

\subsection{Estimation of targeted treatment effects}
\label{ssection:targeted}

The pooled estimand $\beta^*$ in \eqref{eq::DefBetaStarTwoway} uses variance-driven 
weights $w_{it}^*$ that are determined by the data generating process rather than the 
researcher. In many applications one may prefer a user-specified weighting scheme. A 
natural target is $\beta_w := \mathbb{E}(w_{it}\,\beta_{it})$ for some 
chosen weights $w_{it}$, with the equal-weighted case $w_{it} = 1$ being the panel analogue 
of the mean group estimator of \citet{Pesaran199579}. Consistently estimating $\beta_w$ 
requires a way to estimate the unit-time specific coefficients $\beta_{it}$ themselves.
The standard starting point is to write
\begin{equation}
    Y_{it} = \beta_{it}\,X_{it} + E_{it}, \qquad 
    E_{it} = E_{it}^\parallel + E_{it}^\perp,
    \label{eq::hetmodel}
\end{equation}
and impose structure on how $\beta_{it}$ varies with $(i,t)$. Three 
specifications proposed in the literature are:
\begin{enumerate}[(i)]
    \item $\beta_{it} = \beta_i$, as in the mean group estimator of \citet{Pesaran199579};
    \item $\beta_{it} = \beta_i + \beta_t$, as in \citet{KeaneNeal2020} and \citet{LU2023694};
    \item $\beta_{it} = g_1(U_i)'g_2(V_t)$, an exact finite factor structure, 
          as in \citet{chernozhukov2019inference}.
\end{enumerate}
All three approaches reduce estimation of $\beta_{it}$ to a finite-dimensional problem 
by imposing separability (either additive or multiplicative) in the $(U_i, V_t)$ 
arguments. Estimation in each case further requires that $E_{it}^\parallel$ has an exact 
low-rank structure. These restrictions are convenient but potentially strong.\footnote{For 
example, the simple specification
$\beta_{it} = \beta(U_i, V_t) = (g_{11}(U_i) + g_{21}(V_t))/(g_{12}(U_i) + g_{22}(V_t))$
satisfies neither additive nor multiplicative separability, and is not covered by any 
of the three approaches above.}

An alternative route for estimating $\beta_w = \mathbb{E}(w_{it}\,\beta_{it})$ is as 
follows. Let $Z_{it} = (Y_{it}, X_{it})'$ and define $Z_{it}^\perp = Z_{it} - 
\mathbb{E}(Z_{it} \,|\, U_i, V_t)$ as before. Then $\beta_{it}$ is a ratio of elements 
of the $2 \times 2$ conditional variance matrix
\begin{equation}
    \Sigma_{it} := \mathbb{E}\bigl(Z_{it}^\perp(Z_{it}^\perp)' \,\big|\, U_i,\, V_t\bigr) ,
\end{equation}
namely $\beta_{it} = \Sigma_{it,YX}/\Sigma_{it,XX}$, where $\Sigma_{it,YX}$ and 
$\Sigma_{it,XX}$ denote the $(Y,X)$ and $(X,X)$ elements respectively. Now observe that
\begin{equation}
    Z_{it}^\perp(Z_{it}^\perp)' = \Sigma_{it} + \Xi_{it}, \qquad 
    \mathbb{E}(\Xi_{it} \,|\, U_i, V_t) = 0,
    \label{eq::Sigma_decomp}
\end{equation}
so each element of $Z_{it}^\perp(Z_{it}^\perp)'$ has conditional mean equal to the 
corresponding element of $\Sigma_{it}$, with a mean-zero noise term $\Xi_{it}$. This is 
the same approximate low-rank structure exploited in the main results, now applied to 
the $2\times 2$ outer product matrix rather than to $X$ or $E$ alone. Using the identity
\begin{equation}
    \Sigma_{it} = \mathbb{E}\bigl(Z_{it}(Z_{it})' \,\big|\, U_i, V_t\bigr) 
    - \mathbb{E}\bigl(Z_{it} \,\big|\, U_i, V_t\bigr)
      \mathbb{E}\bigl(Z_{it} \,\big|\, U_i, V_t\bigr)',
    \label{eq::Sigma_expand}
\end{equation}
one can apply a low-rank approximation separately to each of the $n\times T$ matrices 
with $(i,t)$ entries $Y_{it}X_{it}$, $X_{it}^2$, $Y_{it}$, and $X_{it}$, obtaining 
estimators of the corresponding conditional means, and then form 
$\widehat{\Sigma}_{it,YX}$ and $\widehat{\Sigma}_{it,XX}$ by plugging in. Under 
smoothness conditions analogous to those of Theorem~\ref{thm::lowlevel}, this yields 
consistent estimators of the elements of $\Sigma_{it}$. This leads to the estimator
\begin{equation}
    \widehat{\beta}_w = \frac{1}{nT}\sum_{i=1}^n\sum_{t=1}^T 
    w_{it}\,\frac{\widehat{\Sigma}_{it,YX}}{\widehat{\Sigma}_{it,XX}}.
    \label{eq::betaw_estimator}
\end{equation}
We stress that \eqref{eq::betaw_estimator} is intended as a conceptual proposal rather 
than a final estimator. In particular, the denominator $\widehat{\Sigma}_{it,XX}$ 
estimates a conditional variance and may be close to zero for some $(i,t)$, so some 
form of regularization (such as thresholding) is needed to keep the ratio well-behaved. 
We leave a formal treatment of these questions for future work.

The estimator in \eqref{eq::betaw_estimator} can be seen as a natural extension of 
standard mean group methods to the two-way setting. In the one-way case, 
$\widehat{\Sigma}_{it,YX}$ and $\widehat{\Sigma}_{it,XX}$ are only allowed to vary 
along a single dimension (either $i$ or $t$), so that the conditional expectations 
can be estimated by simple within-group averages. Here the conditioning set is 
bivariate $(U_i, V_t)$, and the low-rank approximation plays the role that 
within-group averaging plays in the one-way case, handling both dimensions jointly. 
As an alternative to principal components, the discretization approach of 
\citet{beyhum2024inferencediscretizingunobservedheterogeneity} can also be used: 
after clustering units into groups $g_i$ and time periods into groups $c_t$ as 
described in Section~\ref{ssection:inference}, one can estimate $\widehat{\beta}_{g_i c_t}$ 
for every cell $(g_{i},c_{t})$ of the resulting partition and then form the weighted average 
$\widehat{\beta}_w$ directly.

Finally, our discussion so far addressed consistent estimation of $\beta_w$ 
but was silent on inference. Inference for this class of estimands is generally 
more involved than for the pooled estimand $\beta^*$, and typically requires either 
additional non-degeneracy conditions on the distribution of $\beta_{it}$ 
(see e.g.\ \citealp{KeaneNeal2020} and \citealp{LU2023694}) or sample splitting and regularization schemes 
(see e.g.\ \citealp{chernozhukov2019inference}).

\subsection{Additional regressors}
\label{ssection::additional_regressors}

So far we have taken $X_{it}$ and $\beta$ to be scalars. We now consider the 
extension to $X_{it} \in \mathbb{R}^K$ and $\beta \in \mathbb{R}^K$, with $Y_{it} \in \mathbb{R}$ 
unchanged. Define $X_{it}^\parallel := \mathbb{E}(X_{it} \,|\, U_i, V_t) \in \mathbb{R}^K$ 
and $X_{it}^\perp := X_{it} - X_{it}^\parallel \in \mathbb{R}^K$ as before. 
The conditional projection coefficient \eqref{eq::beta_it_twoway} then also becomes a $K$-vector, 
\begin{align}
    \beta_{it} := 
    \Bigl(\mathbb{E}\bigl(X_{it}^\perp (X_{it}^\perp)' \,\big|\, U_i, V_t\bigr)\Bigr)^{-1}
    \mathbb{E}\bigl(X_{it}^\perp Y_{it} \,\big|\, U_i, V_t\bigr)
    .
    \label{eq::beta_it_twoway_vec}
\end{align}
The pooled estimand \eqref{eq::DefBetaStarTwoway} generalises to
\begin{align}
    \beta^* 
    &:=
    \Bigl(\mathbb{E}\bigl(X_{it}^\perp (X_{it}^\perp)'\bigr)\Bigr)^{-1}
    \mathbb{E}\bigl(X_{it}^\perp Y_{it}\bigr)
   \nonumber\\
    &=
    \Bigl(\mathbb{E}\bigl({\rm Var}(X_{it} \,|\, U_i, V_t)\bigr)\Bigr)^{-1}
    \mathbb{E}\bigl({\rm Var}(X_{it} \,|\, U_i, V_t)\,\beta_{it}\bigr),
    \label{eq::DefBetaStarTwoway_vec}
\end{align}
where ${\rm Var}(X_{it} \,|\, U_i, V_t) = \mathbb{E}\bigl(X_{it}^\perp (X_{it}^\perp)' \,\big|\, U_i, V_t\bigr)$, and the
second equality follows from the law of iterated expectations, 
exactly as in the scalar case. The PC and IFE estimators extend to
\begin{align}
    \widehat{\beta}_{\rm PC} 
    &= \left(\,\sum_{i=1}^n\sum_{t=1}^T 
        \widehat{X_{it}^\perp}\,(\widehat{X_{it}^\perp})'\right)^{-1}
       \sum_{i=1}^n\sum_{t=1}^T 
        \widehat{X_{it}^\perp}\, Y_{it},
    \\[6pt]
    \widehat{\beta}_{\rm IFE} 
    &=\argmin_{\beta \in \mathbb{R}^K} \, \min_{(\lambda, f) \in \mathbb{R}^{nR + TR}} \sum_{i=1}^n \sum_{t=1}^T \left(Y_{it} - X_{it}' \, \beta - \sum_{r=1}^R \lambda_{ir} f_{tr}\right)^2,
\end{align}
where the residuals 
$\widehat{X_{it}^\perp} \in \mathbb{R}^K$ are obtained by applying the  truncated SVD \eqref{eq::PC_firststep} to each of the $K$ 
regressor matrices $X^{(k)} \in \mathbb{R}^{n \times T}$ separately. With those straightforward modifications, 
Theorems~\ref{thm::PC_consistencyNEW} 
and~\ref{thm::IFE_consistencyNEW} continue to hold, that is, one can still show\footnote{
The high-level conditions of Theorems~\ref{thm::PC_consistencyNEW} 
and~\ref{thm::IFE_consistencyNEW} extend to this vector setting   with minimal notational replacements, e.g.\ 
 the non-degeneracy condition~(ii) becomes 
$\frac{1}{nT}\sum_{i,t} X_{it}^\perp (X_{it}^\perp)' \to_P C$ for a positive definite 
matrix $C$.}
\begin{align}
   \widehat \beta_{\rm PC} &=  \beta^* + o_P(1),
&
    \widehat{\beta}_{\rm IFE} &=  \beta^* + o_P(1),
\end{align}
as $n,T \rightarrow \infty$.

The multivariate estimand \eqref{eq::DefBetaStarTwoway_vec} merits careful interpretation.
Consider first $\beta_{it}$ in \eqref{eq::beta_it_twoway_vec}. This is simply the vector of 
conditional linear projection coefficients of $Y_{it}$ on $X_{it}$ given $(U_i, V_t)$: the 
$k$-th component $\beta_{it,k}$ measures the partial effect of $X_{it,k}$ on $Y_{it}$ holding 
the remaining $K-1$ treatments fixed, conditional on the unobserved heterogeneity $(U_i, V_t)$. 
This is a meaningful object: it generalises the scalar conditional projection coefficient 
\eqref{eq::beta_it_twoway} in the natural way, and for binary treatments reduces to the 
conditional average treatment effect of the $k$-th treatment holding the other treatments 
fixed. Importantly, the conditioning on $(U_i, V_t)$ is fully nonparametric here, which is 
more demanding than the partially linear framework of \citet{10.1257/aer.20221116}, who 
condition on observed controls linearly.

The interpretation of $\beta^*$ in \eqref{eq::DefBetaStarTwoway_vec} is more subtle.
Writing out the $k$-th component,
\begin{align}
    \beta^*_k = \sum_{p=1}^K 
    \left[\Bigl(\mathbb{E}\bigl({\rm Var}(X_{it} \,|\, U_i, V_t)\bigr)\Bigr)^{-1}\right]_{kp}
    \Bigl[\mathbb{E}\bigl({\rm Var}(X_{it} \,|\, U_i, V_t)\,\beta_{it}\bigr)\Bigr]_{p},
    \label{eq::betastar_k_contamination}
\end{align}
it is clear that $\beta^*_k$ is generally a function of $\beta_{it,j}$ for all $j$, not only 
$j = k$. Specifically, $\beta^*_k$ can be expressed as
\begin{align}
    \beta^*_k = \mathbb{E}\bigl(\lambda_{kk,it}\,\beta_{it,k}\bigr) 
    + \sum_{j \neq k} \mathbb{E}\bigl(\lambda_{kj,it}\,\beta_{it,j}\bigr),
    \label{eq::contamination_decomp}
\end{align}
where the weights $\lambda_{kj,it}$ are determined by the conditional variance matrix 
${\rm Var}(X_{it} \,|\, U_i, V_t)$ and its expectation. The own-treatment weights satisfy 
$\mathbb{E}(\lambda_{kk,it}) = 1$, while the contamination weights satisfy 
$\mathbb{E}(\lambda_{kj,it}) = 0$ for $j \neq k$.
Since the contamination weights average to zero, they must be negative for some $(i,t)$ 
unless they are identically zero. Hence, $\beta^*_k$ is not a convex combination of the 
$\beta_{it,k}$: it is contaminated by the effects of the other $K-1$ treatments, with weights 
that need not be non-negative. This is precisely the contamination bias phenomenon 
identified by \citet{10.1257/aer.20221116} in cross-sectional regressions with multiple 
treatments and flexible controls, and by \citet{de2023two} in two-way fixed 
effects regressions with several treatments. Our results show that the same phenomenon 
arises in the factor model panel setting, and that it is not an artifact of any particular 
estimation approach: it is a property of the estimand $\beta^*$ itself.

The contamination bias disappears in two special cases that parallel those identified in 
\citet{10.1257/aer.20221116}. First, if the $K$ treatments are conditionally uncorrelated 
given $(U_i, V_t)$, so that ${\rm Var}(X_{it} \,|\, U_i, V_t)$ is diagonal for all $(i,t)$, 
then $\mathbb{E}[{\rm Var}(X_{it} \,|\, U_i, V_t)]$ is also diagonal and the contamination 
weights $\lambda_{kj,it}$ are identically zero. Second, contamination bias is zero if the 
effects $\beta_{it,j}$ are homogeneous across $(i,t)$ for all $j \neq k$, since the 
contamination weights average to zero. Outside these special cases, the vector $\beta^*$ 
retains a well-defined probability limit that is consistently estimated by the PC and IFE 
estimators, but its components do not admit the clean variance-weighted average 
interpretation that makes the scalar estimand \eqref{eq::DefBetaStarTwoway} attractive.

\section{Monte Carlo Analysis}
\label{section:montecarlo}

\subsection{Illustrative DGP}

In this section we consider a simple setting with a single outcome variable $Y_{it}$ 
and a covariate $X_{it}$. Motivated by the framework developed in the previous sections, 
we construct the DGP in the form of a panel location-scale nonlinear factor 
model:\footnote{Alternatively, it can be seen as an extension of the linear panel 
quantile model in \citet{https://doi.org/10.3982/ECTA15746}.}
\begin{align}
Y_{it} &= \gamma_{it} X_{it} + l_y(\lambda_i, f_t) + s_y(\nu_i, g_t)\eps_{it}, \notag\\
X_{it} &= l_x(\lambda_{i,x}, f_{t,x}) + s_x(\nu_{i,x}, g_{t,x})\eps_{it,x}, \notag\\
\eps_{it} &= \rho\,\eps_{it,x} + \sqrt{1-\rho^2}\,u_{it},
\end{align}
where $\eps_{it,x}$ and $u_{it}$ are mutually independent random variables for all 
$(i,t)$. For the location functions $l_y(\lambda_i, f_t)$ and $l_x(\lambda_{i,x}, 
f_{t,x})$, we consider two setups.
\begin{itemize}
    \item \emph{Linear Factor Model} (LFM):
    \begin{equation}
        l_y(\lambda_i, f_t) = \lambda_i f_t, \qquad 
        l_x(\lambda_{i,x}, f_{t,x}) = \lambda_{i,x} f_{t,x}.
    \end{equation}
    \item \emph{Nonlinear Factor Model} (NLFM):
    \begin{equation}
        l_y(\lambda_i, f_t) = (0.5 \times \lambda_i^{10} + 0.5 \times f_t^{10})^{1/10}, 
        \qquad
        l_x(\lambda_{i,x}, f_{t,x}) = (0.5 \times \lambda_{i,x}^{10} + 
        0.5 \times f_{t,x}^{10})^{1/10}.
    \end{equation}
\end{itemize}
The NLFM setup is inspired by the constant elasticity of substitution specification 
for unobserved heterogeneity proposed in \citet{https://doi.org/10.3982/ECTA15238}, 
see also \citet{beyhum2024inferencediscretizingunobservedheterogeneity}, while the LFM 
follows the standard framework of \citet{pesaran2006estimation} and \citet{bai2009panel}.

We restrict our attention to the setting where $\gamma_{it}$ takes the form
\begin{equation}
    \gamma_{it} = \beta_0 + \kappa 
    \frac{s_y(\nu_i, g_t)\,s_x(\nu_{i,x}, g_{t,x})}{(s_x(\nu_{i,x}, g_{t,x}))^2}.
\end{equation}
Note that $\gamma_{it}$ has a causal interpretation as the structural effect of 
$X_{it}$ on $Y_{it}$. We make this choice such that the implied value of $\beta_{it}$ 
for the overall DGP is given by
\begin{equation}
    \beta_{it} = \beta_0 + (\kappa + \rho) 
    \frac{s_y(\nu_i, g_t)\,s_x(\nu_{i,x}, g_{t,x})}{(s_x(\nu_{i,x}, g_{t,x}))^2}.
\end{equation}
This allows us to consider setups where $\beta_{it}$ has either a causal interpretation 
(when $\rho = 0$) or a non-causal one (when $\rho \neq 0$), while keeping the overall 
structure of the estimand constant. In particular, the parameter $\beta^*$ estimated 
by the IFE and PC methods is given by
\begin{equation}
    \beta^* = \beta_0 + (\kappa + \rho) 
    \frac{\mathbb{E}(s_y(\nu_i, g_t)\,s_x(\nu_{i,x}, g_{t,x}))}
         {\mathbb{E}((s_x(\nu_{i,x}, g_{t,x}))^2)},
\end{equation}
which depends on $\kappa$ and $\rho$ only through their sum $\kappa + \rho$. Throughout 
this section we use the linear two-way specification for the scale functions
\begin{equation}
    s_y(\nu_i, g_t) = \nu_i + g_t, \qquad s_x(\nu_{i,x}, g_{t,x}) = \nu_{i,x} + g_{t,x},
\end{equation}
such that the resulting $\beta_{it}$ violates the settings of \citet{chernozhukov2019inference} and \citet{LU2023694}.

The setup involves four unit-specific random variables $U_i = (\lambda_i, \lambda_{i,x}, 
\nu_i, \nu_{i,x})'$ and four time-specific random variables $V_t = (f_t, f_{t,x}, g_t, 
g_{t,x})'$. To keep the results tractable we impose
\begin{equation}
    f_t = g_t, \quad f_{t,x} = g_{t,x}, \quad \nu_i = \lambda_i, \quad 
    \nu_{i,x} = \lambda_{i,x},
\end{equation}
and
\begin{equation}
    \lambda_i = \pi \lambda_i^+ + (1-\pi)\lambda_{i,x}, \qquad 
    f_t = \pi f_t^+ + (1-\pi) f_{t,x}.
\end{equation}
Here $\lambda_i^+$ and $\lambda_{i,x}$ are mutually independent cross-sectionally 
i.i.d.\ $\Gamma(1,1)$ sequences, while $f_t^+$ and $f_{t,x}$ are mutually independent 
AR(1) sequences with innovations drawn from a Gamma distribution with shape parameter 
$(1-\alpha)^2/(1-\alpha^2)$ and scale parameter $(1-\alpha^2)/(1-\alpha)$.\footnote{With 
this parametrization $\mathbb{E}(f_t^+) = \mathbb{E}(f_{t,x}) = 1$ and 
$\mathrm{Var}(f_t^+) = \mathrm{Var}(f_{t,x}) = 1$.
}
We set $\beta_0 = 0$ and 
$\alpha = 0.5$, and vary $\pi \in \{0.0, 0.5\}$.\footnote{Preliminary Monte Carlo 
results indicate that the choice of $\alpha$ is mostly irrelevant as long as 
$|\alpha| < 1$. The value $\beta_0 = 0$ is without loss of generality as long as 
non-zero values of $\rho$ and/or $\kappa$ are included in the analysis.} 
The specification for random variables follows the framework of 
\citet{beyhum2024inferencediscretizingunobservedheterogeneity}. Given the properties 
of the Gamma distribution it is straightforward to establish that
\begin{equation}
    \beta^* = \beta_0 + (\kappa + \rho)\frac{6 - 4\pi}{6}.
\end{equation}

We consider the following four special cases.
\begin{description}
    \item[(DGP.1)] NC-LFM. $\kappa = 0$ and $\rho = 0.5$. LFM.
    \item[(DGP.2)] NC-NLFM. $\kappa = 0$ and $\rho = 0.5$. NLFM.
    \item[(DGP.3)] C-LFM. $\kappa = 0.5$ and $\rho = 0.0$. LFM.
    \item[(DGP.4)] C-NLFM. $\kappa = 0.5$ and $\rho = 0.0$. NLFM.
\end{description}
In the first two designs $\beta_{it}$ has no causal (NC) interpretation, while in 
the last two it does, hence the label C.\footnote{The CCE estimator of 
\citet{pesaran2006estimation} is consistent for $\beta^*$ and asymptotically normal 
only under (DGP.1).}

Two remarks are in order. First, $\beta^*$ is a function of $l_y(\lambda_i, f_t)$ 
and $l_x(\lambda_{i,x}, f_{t,x})$, but not of $s_y(\nu_i, g_t)$ and 
$s_x(\nu_{i,x}, g_{t,x})$. As a result, $E_{it}^\parallel = Y_{it}^\parallel - \beta^* 
X_{it}^\parallel$ is not affected by the dimensionality of the unobservables in the 
scaling functions. Second, the dimensions of unobservables in $E_{it}^\parallel$ are 
larger than those in $X_{it}^\parallel$ (two-dimensional for both the cross-sectional 
and time series dimensions in $E_{it}^\parallel$, versus one-dimensional for 
$X_{it}^\parallel$). In this regard the PC approach is advantageous relative to the 
IFE method.

We set $n = T$ and vary $n \in \{25, 50, 75, 100, 200\}$. The number of factors 
$R_{nT}$ is set following \citet{FreemanWeidner2021} to $R_{nT} = \lfloor 3n^{3/8} 
\rfloor$. The number of Monte Carlo replications is $M = 10000$, and all results are 
reported for the normalized estimator $\sqrt{\min(n,T)}(\widehat{\beta} - \beta^*)$.

\subsection{Results}

      \begin{table}[htbp!]
  \renewcommand{\arraystretch}{.9}
  \addtolength{\tabcolsep}{-2pt}
  \centering
 \footnotesize
  \caption{Monte Carlo results. DGP.1 NC-LFM $\kappa=0.0$ and $\rho=0.5$. LFM.}
\begin{tabular}{cc|cc|cc|cc|cc}
\hline
\hline
&&\multicolumn{2}{c|}{IFE}&\multicolumn{2}{c|}{PC (YX)}&\multicolumn{2}{c|}{PC (X)}&\multicolumn{2}{c}{CCE (YX)}\\
     $n=T$  &    $\pi$  &    Bias  &    Var  &     Bias  &    Var  &     Bias  &    Var  &     Bias  &    Var \\
\hline
\hline

25 & 0.0 & 0.010 & 0.050 & -0.703 & 0.013 & 0.005 & 0.053 & 0.005 & 0.018 \\
  25 & 0.5 & 0.219 & 0.054 & -0.336 & 0.018 & 0.658 & 0.137 & 0.236 & 0.027 \\
 50 & 0.0 & 0.010 & 0.011 & -0.622 & 0.004 & 0.001 & 0.010 & 0.003 & 0.007 \\
 50 & 0.5 & 0.229 & 0.014 & -0.203 & 0.008 & 0.596 & 0.040 & 0.253 & 0.014 \\
 75 & 0.0 & 0.009 & 0.005 & -0.574 & 0.002 & 0.001 & 0.004 & 0.002 & 0.004 \\
 75 & 0.50 & 0.248 & 0.008 & -0.129 & 0.006 & 0.577 & 0.024 & 0.269 & 0.010 \\
100 & 0.0 & 0.007 & 0.003 & -0.529 & 0.002 & 0.002 & 0.003 & 0.002 & 0.002 \\
 100 & 0.5 & 0.264 & 0.006 & -0.067 & 0.005 & 0.562 & 0.016 & 0.283 & 0.008 \\
 200 & 0.0 & 0.003 & 0.001 & -0.470 & 0.001 & 0.000 & 0.001 & 0.001 & 0.001 \\
 200 & 0.5 & 0.317 & 0.004 &  0.040 & 0.004 & 0.568 & 0.009 & 0.325 & 0.005 \\
\hline
\hline
\end{tabular}
    \label{tab::DGP1}
\end{table}

      \begin{table}[htbp!]
  \renewcommand{\arraystretch}{.9}
  \addtolength{\tabcolsep}{-2pt}
  \centering
 \footnotesize
  \caption{Monte Carlo results. DGP.2 NC-NLFM $\kappa=0.0$ and $\rho=0.5$. NLFM.}
\begin{tabular}{cc|cc|cc|cc|cc}
\hline
\hline
&&\multicolumn{2}{c|}{IFE}&\multicolumn{2}{c|}{PC (YX)}&\multicolumn{2}{c|}{PC (X)}&\multicolumn{2}{c}{CCE (YX)}\\
     $n=T$  &    $\pi$  &    Bias  &    Var  &     Bias  &    Var  &     Bias  &    Var  &     Bias  &    Var \\
\hline
\hline

25 & 0.0 & 0.059 & 0.051 & -0.700 & 0.013 & 0.025 & 0.052 & 0.043 & 0.018 \\
 25 & 0.5 & 0.227 & 0.051 & -0.347 & 0.017 & 0.629 & 0.126 & 0.230 & 0.026 \\
 50 & 0.0 & 0.049 & 0.011 & -0.619 & 0.004 & 0.014 & 0.010 & 0.035 & 0.007 \\
50 & 0.5 & 0.241 & 0.013 & -0.212 & 0.008 & 0.578 & 0.038 & 0.248 & 0.014 \\
 75 & 0.0 & 0.042 & 0.005 & -0.570 & 0.002 & 0.012 & 0.005 & 0.029 & 0.004 \\
 75 & 0.5 & 0.258 & 0.008 & -0.136 & 0.006 & 0.564 & 0.023 & 0.265 & 0.010 \\
100 & 0.0 & 0.036 & 0.003 & -0.525 & 0.002 & 0.010 & 0.003 & 0.025 & 0.003 \\
100 & 0.5 & 0.273 & 0.006 & -0.074 & 0.005 & 0.551 & 0.016 & 0.279 & 0.008 \\
200 & 0.0 & 0.022 & 0.001 & -0.465 & 0.001 & 0.007 & 0.001 & 0.018 & 0.001 \\
 200 & 0.5 & 0.318 & 0.004 &  0.035 & 0.004 & 0.560 & 0.008 & 0.322 & 0.005 \\

\hline
\hline
\end{tabular}
    \label{tab::DGP2}
\end{table}

      \begin{table}[htbp!]
  \renewcommand{\arraystretch}{.9}
  \addtolength{\tabcolsep}{-2pt}
  \centering
 \footnotesize
  \caption{Monte Carlo results. DGP.3 C-LFM $\kappa=0.5$ and $\rho=0.0$. LFM.}
\begin{tabular}{cc|cc|cc|cc|cc}
\hline
\hline
&&\multicolumn{2}{c|}{IFE}&\multicolumn{2}{c|}{PC (YX)}&\multicolumn{2}{c|}{PC (X)}&\multicolumn{2}{c}{CCE (YX)}\\
     $n=T$  &    $\pi$  &    Bias  &    Var  &     Bias  &    Var  &     Bias  &    Var  &     Bias  &    Var \\
\hline
\hline

 25 & 0.0 & 0.007 & 0.065 & -0.721 & 0.016 & 0.008 & 0.071 & 0.007 & 0.024 \\
 25 & 0.5 & 0.237 & 0.067 & -0.350 & 0.021 & 0.663 & 0.160 & 0.238 & 0.031 \\
 50 & 0.0 & 0.005 & 0.014 & -0.644 & 0.005 & 0.003 & 0.013 & 0.004 & 0.009 \\
 50 & 0.5 & 0.243 & 0.017 & -0.217 & 0.009 & 0.600 & 0.043 & 0.254 & 0.015 \\
 75 & 0.0 & 0.003 & 0.006 & -0.596 & 0.003 & 0.003 & 0.006 & 0.002 & 0.005 \\
 75 & 0.5 & 0.258 & 0.009 & -0.142 & 0.006 & 0.580 & 0.025 & 0.269 & 0.011 \\
100 & 0.0 & 0.002 & 0.003 & -0.550 & 0.002 & 0.003 & 0.004 & 0.002 & 0.003 \\
100 & 0.5 & 0.271 & 0.007 & -0.079 & 0.005 & 0.564 & 0.017 & 0.283 & 0.009 \\
 200 & 0.0 & 0.001 & 0.001 & -0.490 & 0.001 & 0.001 & 0.001 & 0.001 & 0.001 \\
200 & 0.5 & 0.314 & 0.004 &  0.029 & 0.004 & 0.570 & 0.009 & 0.325 & 0.005 \\

\hline
\hline
\end{tabular}
    \label{tab::DGP3}
\end{table}

      \begin{table}[htbp!]
  \renewcommand{\arraystretch}{.9}
  \addtolength{\tabcolsep}{-2pt}
  \centering
 \footnotesize
  \caption{Monte Carlo results. DGP.4 C-NLFM $\kappa=0.5$ and $\rho=0.0$. NLFM.}
\begin{tabular}{cc|cc|cc|cc|cc}
\hline
\hline
&&\multicolumn{2}{c|}{IFE}&\multicolumn{2}{c|}{PC (YX)}&\multicolumn{2}{c|}{PC (X)}&\multicolumn{2}{c}{CCE (YX)}\\
     $n=T$  &    $\pi$  &    Bias  &    Var  &     Bias  &    Var  &     Bias  &    Var  &     Bias  &    Var \\
\hline
\hline

 25 & 0.0 & 0.055 & 0.070 & -0.715 & 0.016 & 0.050 & 0.071 & 0.069 & 0.026 \\
 25 & 0.5 & 0.248 & 0.065 & -0.358 & 0.020 & 0.647 & 0.150 & 0.245 & 0.030 \\
 50 & 0.0 & 0.034 & 0.014 & -0.637 & 0.005 & 0.030 & 0.014 & 0.052 & 0.009 \\
 50 & 0.5 & 0.257 & 0.016 & -0.223 & 0.009 & 0.591 & 0.042 & 0.259 & 0.015 \\
 75 & 0.0 & 0.025 & 0.006 & -0.589 & 0.003 & 0.023 & 0.006 & 0.043 & 0.005 \\
 75 & 0.5 & 0.273 & 0.009 & -0.146 & 0.006 & 0.574 & 0.025 & 0.274 & 0.011 \\
100 & 0.0 & 0.020 & 0.004 & -0.543 & 0.002 & 0.020 & 0.004 & 0.037 & 0.003 \\
100 & 0.5 & 0.286 & 0.007 & -0.082 & 0.005 & 0.560 & 0.017 & 0.288 & 0.009 \\
 200 & 0.0 & 0.013 & 0.001 & -0.482 & 0.001 & 0.013 & 0.001 & 0.028 & 0.001 \\
200 & 0.5 & 0.331 & 0.004 &  0.028 & 0.004 & 0.567 & 0.009 & 0.330 & 0.005 \\

\hline
\hline
\end{tabular}
    \label{tab::DGP4}
\end{table}

The results in Tables~\ref{tab::DGP1}--\ref{tab::DGP4} are broadly similar across all 
four DGPs, for all estimators considered. This is perhaps surprising for the CCE 
estimator, which is consistent for $\beta^*$ only under (DGP.1). The bias appears to 
be of order $\largeO(1)$ while the variance diminishes as $n, T \to \infty$, consistent 
with the theoretical finding that the asymptotic distribution is dominated by 
approximation bias. The value of $\pi$, which determines $\beta^*$, has a large 
effect on the bias. The IFE, CCE, and PC(X) estimators are positively biased for 
larger values of $\pi$, while the PC(YX) estimator has negative bias when $\pi = 0$. 
An intuitive explanation for the behaviour of PC(X) and PC(YX) can be related to the 
bias expression in \citet{westerlund2015cross} and the non-invariance of PC estimators 
to $\beta^*$ and the correlations between the factor loadings and the errors. Finally, 
the finite-sample distributions of the estimators appear close to normal, as 
illustrated in Figure~\ref{fig:MC1}.

\begin{figure}[htpb]
\centering
\begin{subfigure}{.5\textwidth}
  \centering
  \includegraphics[scale=0.55]{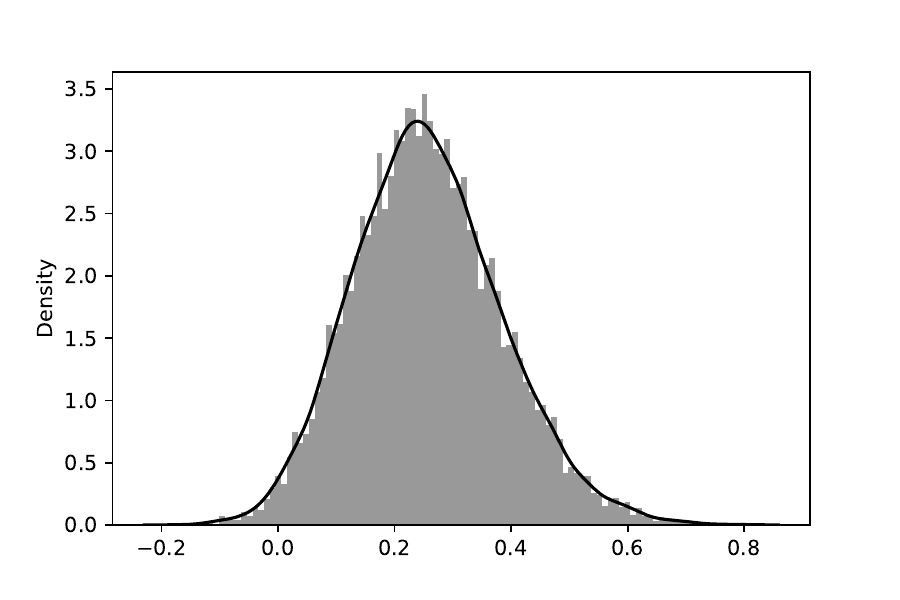}
  \caption{IFE.}
\end{subfigure}%
\begin{subfigure}{.5\textwidth}
  \centering
  \includegraphics[scale=0.55]{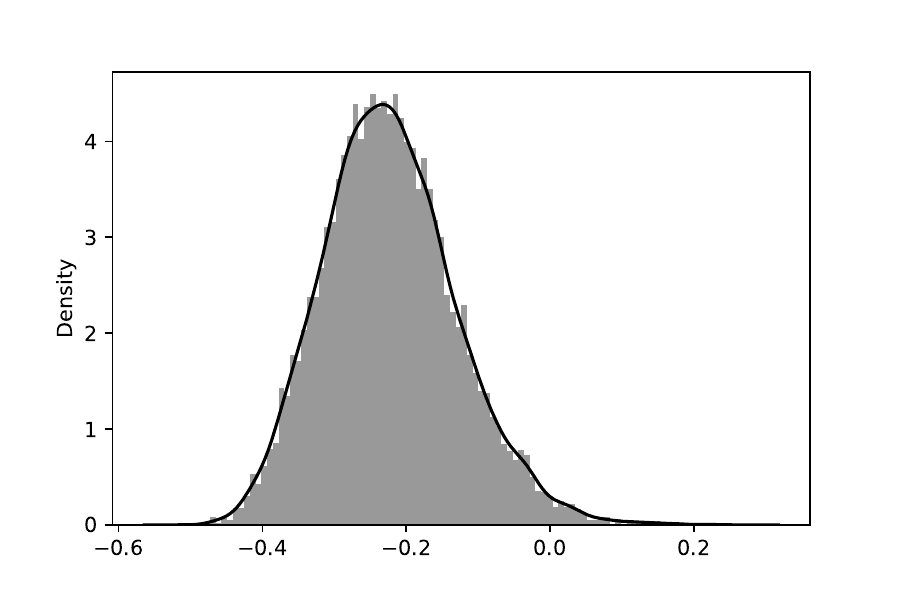}
  \caption{PC(YX).}
\end{subfigure}
\caption{Finite-sample distributions for $n = T = 50$ under (DGP.4) with $\pi = 0.5$.}
\label{fig:MC1}
\end{figure}
 
\section{Conclusions}
\label{section:conclusions}

This paper shows that two widely used large panel estimators, the PC estimator of 
\citet{GreenawayMcGrevy201248} and the IFE estimator of \citet{bai2009panel}, have 
well-defined and interpretable probability limits under fully nonparametric assumptions 
on the data generating process. Specifically, both estimators converge to the same 
variance-weighted average of unit-time-specific treatment effects $\beta_{it}$, where 
the weights are proportional to the conditional variance of the regressor given the 
unobserved heterogeneity. This result requires no parametric structure on the 
relationship between the outcome, the regressor, and the unobserved heterogeneity. 
The key requirement is that the number of estimated factors $R_{nT}$ grows with the 
sample size, but slowly enough that estimation error remains controlled. This is a 
meaningful departure from the existing literature, which typically requires $R_{nT}$ 
to be fixed and the factor structure to be correctly specified. Thus, even if the 
underlying linear panel regression model with interactive fixed effects is misspecified, 
one can still attach a clear causal or descriptive meaning to the resulting estimates.

The main limitation of our results is that they apply to the case of a single 
regressor. When multiple regressors are included, the variance-weighted average 
interpretation breaks down due to contamination bias, as we discuss in 
Section~\ref{ssection::additional_regressors}. A second limitation concerns inference. 
Our consistency results are silent on the distribution of the estimators, and we 
argue in Section~\ref{ssection:inference} that the asymptotic distribution is 
dominated by approximation bias, making inference on $\beta^*$ generally impossible 
without additional parametric assumptions.

\section*{Acknowledgment}
This paper benefited from the use of generative AI tools to assist with 
proofreading, language editing, and \LaTeX{} formatting. All output was carefully reviewed by the authors. All substantive 
content, results, and any remaining errors are the authors' responsibility.

\subsection*{Competing interests}
The authors declare none.

\setlength{\bibsep}{2pt}


\clearpage

\appendix
\section*{Appendix}
\section{Additional Material}
\subsection{TWFE and CCE do not estimate relevant estimands}
\label{ssection::appendix_TWFE_CCE}
In what follows we briefly explain why, unlike the PC based estimators analyzed above, 
two other popular approaches do not generally estimate the variance-weighted average of 
$\beta_{it}$. The CCE estimator of \citet{pesaran2006estimation} generally fails because 
the number of cross-section averages is fixed by assumption, and they only capture the 
information in the conditional mean of $(Y_{it}, X_{it})$ up to that fixed rank. The 
TWFE estimator fails for a different reason: by construction it can only control for 
additive conditional expectation functions.

A counterexample which proves the above claim is given by the simple one-factor model with time effects
\begin{align}
     Y_{it} &=  \eta_{i}+\lambda_i \, z_t + \varepsilon_{it} ,\\
     X_{it} &=  \kappa_{i}+\lambda_i \, z_t + \nu_{it} ,
 \end{align}
where $\lambda_i$, $z_t$, $\eta_i$, $\kappa_i$, $\varepsilon_{it}$, $\nu_{it}$ are 
mutually independent, each i.i.d.\ over their respective indices, with mean zero 
and unit variance. In this example, $\beta_{it}=0$ a.s. as $\varepsilon_{it}$ and $\nu_{it}$ are independent. Thus, also $\beta_{w}=0$ for any choice of weights $(w_{it}: i\geq1,t\geq 1)$.
 
 Now consider the CCE and TWFE estimands in this example. As $\mathbb{E}(\lambda_{i})=0$, the factor $z_{t}$ cannot be controlled for by the cross-section averages underlying the CCE approach. Obviously, the same holds for the TWFE estimator as it completely neglects the factor component in the estimator. Moreover, for this specific example the two estimators have the same probability limit, i.e.
 \begin{align*}
\beta_{\rm TWFE}^{*}=\beta_{\rm CCE}^{*}=\frac{\mathbb{E}(\lambda_{i}^{2})\mathbb{E}(z_{t}^{2})}{\mathbb{E}(\lambda_{i}^{2})\mathbb{E}(z_{t}^{2})+\mathbb{E}(\nu_{it}^{2})}=\frac{1}{2}.
 \end{align*}
These results can be obtained directly from the corresponding derivations in \citet{JuodisRCCE2020,Juodis2020}.
\subsection{Further results conditional on $U$ and $V$}
\label{appendix::conditional_U_V}
The following lemma provides a precise sufficient condition for Condition (iii) of Theorem~\ref{thm::conditional}.
\begin{lemma}[\bf Sufficient conditions for low-rank approximation]
\label{lemma::lowrank_conditional}
Let $\mathcal{U} \subset \mathbb{R}^{d_U}$ and $\mathcal{V} \subset \mathbb{R}^{d_V}$ be compact sets, and let $h : \mathcal{U} \times \mathcal{V} \to \mathbb{R}$ be $s$-times continuously differentiable with $s > \max(d_U, d_V)$. Let $h(u,v) = \sum_{r=1}^\infty \sigma_r \phi_r(u) \psi_r(v)$ be the singular value decomposition of $h$ with respect to the uniform measure on $\mathcal{U} \times \mathcal{V}$, where $\sigma_1 \geq \sigma_2 \geq \ldots \geq 0$, and $\{\phi_r\}$ and $\{\psi_r\}$ are orthonormal. Suppose the sequences $(U_1, \ldots, U_n) \in \mathcal{U}^n$ and $(V_1, \ldots, V_T) \in \mathcal{V}^T$ satisfy
\begin{align}
    \sup_{r \geq 1} \frac{1}{n} \sum_{i=1}^n \phi_r(U_i)^2 \leq C, \qquad \sup_{r \geq 1} \frac{1}{T} \sum_{t=1}^T \psi_r(V_t)^2 \leq C,
    \label{eq::bounded_singular_functions}
\end{align}
for some constant $C > 0$. Define the $n \times T$ matrix $H$ by $H_{it} = h(U_i, V_t)$. Then
\[
\min_{{\rm rank}(G) \leq R} \frac{1}{nT} \|H - G\|_2^2 \leq C^2 \left(\sum_{r > R} \sigma_r\right)^2 = o(1)
\]
as $R \to \infty$.
\end{lemma}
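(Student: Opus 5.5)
We take the truncated singular value expansion as the rank-$R$ candidate: $G_{it} := \sum_{r=1}^R \sigma_r \phi_r(U_i)\psi_r(V_t)$, which clearly has rank at most $R$. The minimum is then bounded by $\frac{1}{nT}\|H-G\|_2^2$, so it suffices to control the tail matrix $H - G$. Its entries are $(H-G)_{it} = \sum_{r>R}\sigma_r \phi_r(U_i)\psi_r(V_t)$. Writing $a_r\in\mathbb{R}^n$ with entries $\phi_r(U_i)$ and $b_r\in\mathbb{R}^T$ with entries $\psi_r(V_t)$, we get $H-G = \sum_{r>R}\sigma_r a_r b_r'$.

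The first key step is a triangle inequality in Frobenius norm: $\|H-G\|_2 \le \sum_{r>R}\sigma_r \|a_r b_r'\|_2 = \sum_{r>R}\sigma_r\|a_r\|\,\|b_r\|$. By \eqref{eq::bounded_singular_functions}, $\|a_r\|^2 \le Cn$ and $\|b_r\|^2\le CT$, so $\|a_r\|\|b_r\|\le C\sqrt{nT}$. Squaring and dividing by $nT$ gives exactly $\frac{1}{nT}\|H-G\|_2^2 \le C^2(\sum_{r>R}\sigma_r)^2$.

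This argument uses orthonormality only to define the expansion. It does not require the $a_r$ to be orthogonal in $\mathbb{R}^n$, which they generally are not. That is why we pay the price of the $\ell^1$ tail $\sum\sigma_r$ instead of the $\ell^2$ tail.

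The second step, which is the main obstacle, is to justify the two analytic facts hidden in the statement.

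\emph{(a) Pointwise convergence at the sample points.} The expansion must hold at the specific points $(U_i,V_t)$, not merely in $L^2$. Absolute convergence follows once we have $\sum_r\sigma_r<\infty$: we need $|\phi_r(U_i)\psi_r(V_t)|$ controlled, and the bounded-average condition only gives this on average. So we first truncate at finite $M>R$. The finite sum $\sum_{R<r\le M}$ obeys the bound above. We then need $h$ minus its $M$-term truncation to vanish at the sample points as $M\to\infty$. For continuous $h$ on a compact set with summable singular values, the $L^2$-convergent series converges uniformly: $\phi_r,\psi_r$ are continuous, since $\phi_r = \sigma_r^{-1}\int h(\cdot,v)\psi_r(v)\,dv$, and a Mercer-type argument applied to $hh^*$ yields uniform convergence. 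We would invoke this standard fact.

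\emph{(b) Summability of the singular values.} We need $\sum_r\sigma_r<\infty$, so that the tail sum is $o(1)$ as $R\to\infty$. Here we cite the approximation-theory result of \citet{birman1977estimates} already used in the discussion of Theorem~\ref{thm::lowlevel}: for $h\in C^s$ on a compact domain in $\mathbb{R}^{d}$ with $d=\max(d_U,d_V)$, the singular values decay like $\sigma_r=O(r^{-1/2-s/d})$, or at least $O(r^{-s/d})$. Under the hypothesis $s>\max(d_U,d_V)$ we have $s/d>1$, so in either case $\sum_r\sigma_r<\infty$, and the tail $\sum_{r>R}\sigma_r\to0$. This is precisely why the lemma strengthens $s>d/2$ to $s>d$. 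If only the cruder $O(r^{-s/d})$ bound is available, the hypothesis still gives summability. This completes the plan.
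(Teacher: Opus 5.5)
The paper states this lemma without proof. The shape of the bound, $C^2(\sum_{r>R}\sigma_r)^2$, is exactly what your argument produces: truncate the expansion at $R$, apply the triangle inequality in Frobenius norm to the tail $\sum_{r>R}\sigma_r a_r b_r'$, and use $\|a_r\|\,\|b_r\|\le C\sqrt{nT}$. That core step is correct and is evidently the intended proof. Your use of the Birman--Solomyak decay to obtain $\sum_r\sigma_r<\infty$ when $s>\max(d_U,d_V)$ also matches how the paper uses that reference.

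Your discussion of step (a) needs two corrections.

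\emph{Absolute convergence is immediate.} At the sample points it does not require uniform convergence. The summands in \eqref{eq::bounded_singular_functions} are non-negative, so $\phi_r(U_i)^2\le Cn$ and $\psi_r(V_t)^2\le CT$ for every $r$, $i$, $t$. Hence $\sum_r\sigma_r|\phi_r(U_i)\psi_r(V_t)|\le C\sqrt{nT}\sum_r\sigma_r<\infty$.

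\emph{Identifying the limit is the real issue.} What needs justification is that this sum equals $h(U_i,V_t)$, and the Mercer argument you sketch does not deliver it. Mercer applied to the positive kernel $\int h(u,v)h(u',v)\,dv$ gives uniform convergence of $\sum_r\sigma_r^2\phi_r(u)\phi_r(u')$. Combined with $\langle h(u,\cdot),\psi_r\rangle=\sigma_r\phi_r(u)$, this only yields $\int \bigl(h(u,v)-\sum_{r\le M}\sigma_r\phi_r(u)\psi_r(v)\bigr)^2 dv\to 0$ uniformly in $u$. That is convergence in $L^2$ in $v$, not convergence at the point $v=V_t$. There is no Mercer theorem for non-symmetric kernels. ``Continuous kernel with summable singular values'' is not a standard sufficient condition you can simply quote for pointwise convergence of the SVD series.

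There are two ways to close this. The simplest is to read the expansion displayed in the lemma as a pointwise identity, which is how the lemma itself writes it. Otherwise you need a separate argument exploiting the smoothness of $h$, with $\phi_r,\psi_r$ taken to be the continuous representatives $\phi_r=\sigma_r^{-1}\int h(\cdot,v)\psi_r(v)\,dv$ and similarly for $\psi_r$.
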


Condition \eqref{eq::bounded_singular_functions} requires that the empirical second moments of the singular functions remain bounded uniformly in $r$. This condition is automatically satisfied when $(U_i)$ are i.i.d.\ draws from the uniform distribution on $\mathcal{U}$, since then $\frac{1}{n}\sum_i \phi_r(U_i)^2 \to_P \int \phi_r(u)^2 \, du = 1$ by orthonormality. More generally, the condition holds for sequences $(U_i)$ that are ``well-spread'' in $\mathcal{U}$, such as points on a regular grid or low-discrepancy sequences. It rules out pathological cases where the $(U_i)$ concentrate on a low-dimensional subspace where some singular functions happen to be large.

\subsection{Convergence rate of IFE estimator}
\label{ssection::freeman_weidner}
Below we provide a theorem that sheds more light on the convergence rate of the IFE estimator.
\begin{theorem}[\bf Convergence rate of IFE estimator]
\label{thm::IFE_convergencerateNEW}
 Let  $\beta^{*} \in \mathbb{R}$ be a constant
      and  $E^\parallel \in \mathbb{R}^{n \times T}$ 
      be a random matrix, and define  $E:=Y - \beta^* \, X$
      and $E^\perp:=E- E^\parallel$.
     Assume that, as $n,T \rightarrow \infty$ we have $R_{nT}=o(\min\{n,T\})$ and

    \begin{enumerate}[(i)]

     \item    
     $\frac 1 {nT} \sum_{i=1}^n \sum_{t=1}^T  X_{it} \, E^\perp_{it}  ={\cal O}_P((\min(n,T))^{-1/2})$,
     and     
     $\frac 1 {nT} \sum_{i=1}^n \sum_{t=1}^T    X_{it} ^2  ={\cal O}_{P}(1)$.
       
       \item 
        $ 
     \min_{\big\{ G \in \mathbb{R}^{n \times T}: {\rm rank}(G) \leq 2R_{nT} \big\}} 
     \frac{1}{nT}\left\|  X -  G \right\|^2_2 
  \geq c$, wpa1, for a constant $c>0$.

       \item $\| E^{\perp} \|={\cal O}_{P}\left(\sqrt{n}+\sqrt{T\log^{2} T}\right)$.
    
       \item There exists a constant $\rho> 3/2$ such that
\begin{equation*}
\min_{\big\{ G \in \mathbb{R}^{n \times T}: {\rm rank}(G) \leq R_{nT} \big\}} 
     \frac{1}{nT}\left\|  E^{\parallel} -  G \right\|^2_2={\cal O}_{P}\left(R_{nT}^{1-2\rho}\right).
\end{equation*}

    \end{enumerate}
    Then we have:
\begin{equation}
\widehat{\beta}_{\rm IFE}-\beta^{*}={\cal O}_P((\min(n,T))^{-1/2})+{\cal O}_{P}(R_{nT}^{(3-2\rho)/2})+{\cal O}_{P}(R_{nT} (\min(T,n/\log^{2}T))^{-1/2}).
\end{equation}
Therefore, by choosing $R_{nT}\propto (\min(T,n/\log^{2}T))^{\frac{1}{2\rho-1}}$ we obtain that:
\begin{equation}
\label{eq::rate_with_optimal}
\widehat{\beta}_{\rm IFE}-\beta^{*}={\cal O}_P((\min(n,T))^{-1/2})+{\cal O}_{P}((\min(T,n/\log^{2}T))^{\frac{3-2\rho}{2(2\rho-1)}}).
\end{equation}

\end{theorem}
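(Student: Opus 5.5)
The plan follows the Moon--Weidner / Freeman--Weidner argument: compare profiled objective values at $\widehat\beta_{\rm IFE}$ and at $\beta^*$, and derive a quadratic inequality in $\Delta := \widehat\beta_{\rm IFE}-\beta^*$. Write $R=R_{nT}$ and $L(\beta) := \min_{{\rm rank}(G)\le R}\frac{1}{nT}\|Y-X\beta-G\|_2^2$. Since $Y - X\beta = E^\parallel + E^\perp - X(\beta-\beta^*)$, we have
\begin{equation*}
L(\beta) = \min_{{\rm rank}(G)\le R}\frac{1}{nT}\|E^\perp - X\Delta + (E^\parallel - G)\|_2^2 .
\end{equation*}

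For the upper bound at $\beta^*$, let $G^*$ be a best rank-$R$ approximation of $E^\parallel$. Then
\begin{equation*}
L(\beta^*) \le \frac{1}{nT}\|E^\perp\|_2^2 + 2\,\frac{1}{nT}{\rm tr}\bigl((E^\perp)'(E^\parallel-G^*)\bigr) + \delta_R ,
\end{equation*}
where $\delta_R := \frac{1}{nT}\|E^\parallel-G^*\|_2^2 = \mathcal{O}_P(R^{1-2\rho})$ by (iv). For the lower bound at $\widehat\beta_{\rm IFE}$, I split the fitted $G$ as $G^* + \tilde G$, where $\tilde G$ has rank at most $2R$. Expanding the square gives
\begin{equation*}
L(\widehat\beta_{\rm IFE}) \ge \frac{1}{nT}\|E^\perp\|_2^2 + \min_{{\rm rank}(\tilde G)\le 2R}\Bigl[\frac{1}{nT}\|X\Delta+\tilde G\|_2^2 - \frac{2}{nT}{\rm tr}\bigl((E^\perp)'(X\Delta+\tilde G)\bigr)\Bigr] - (\text{terms in } E^\parallel - G^*).
\end{equation*}
The cross term splits in two. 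The piece $\Delta\cdot\frac{1}{nT}\sum X E^\perp$ is handled by (i). For the piece with $\tilde G$, I use $|{\rm tr}((E^\perp)'\tilde G)| \le \|E^\perp\|\,\|\tilde G\|_* \le \|E^\perp\|\sqrt{2R}\,\|\tilde G\|_2$. Condition (ii) gives $\frac{1}{nT}\|X\Delta+\tilde G\|_2^2 \ge c\Delta^2$. It also bounds $\|\tilde G\|_2 \lesssim \|X\Delta+\tilde G\|_2 + |\Delta|\,\|X\|_2$.

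Comparing $L(\widehat\beta_{\rm IFE}) \le L(\beta^*)$ then yields, with $a := \frac{1}{\sqrt{nT}}\|X\Delta+\tilde G\|_2$,
\begin{equation*}
c\Delta^2 \le a^2 \lesssim |\Delta|\,\mathcal{O}_P\bigl(\min(n,T)^{-1/2}\bigr) + \frac{\sqrt{R}\,\|E^\perp\|}{\sqrt{nT}}\,\bigl(a+|\Delta|\bigr) + \sqrt{\delta_R}\,\bigl(a+|\Delta|\bigr) + \delta_R .
\end{equation*}
By (iii), $\frac{\sqrt{R}\,\|E^\perp\|}{\sqrt{nT}} = \mathcal{O}_P\bigl(\sqrt{R}\,(\min(T,n/\log^2T))^{-1/2}\bigr)$. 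This naive bound is not good enough: solving the inequality gives only $|\Delta| \lesssim \sqrt{\delta_R} + \sqrt{R/\min(T,n/\log^2T)}$.

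The stated rate is sharper. Matching it requires the refined argument of \citet{FreemanWeidner2021}, which is the main obstacle. The idea is to profile out the factor structure more carefully. Let $M_\Lambda$ and $M_F$ be the residual projectors onto the estimated loading and factor spaces. One shows that $\Delta$ satisfies the first-order condition ${\rm tr}\bigl(X' M_\Lambda (E-X\Delta) M_F\bigr)=0$. In that condition the approximation error enters as $\delta_R$ times a factor related to $R$ (producing $R^{(3-2\rho)/2}$, whence the requirement $\rho>3/2$). The noise enters as $R\,\|E^\perp\|/\sqrt{nT}$ times a factor of $\|X\|/\sqrt{nT}$ (producing $R(\min(T,n/\log^2T))^{-1/2}$). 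The detailed steps are:
\begin{enumerate}[(a)]
\item Prove consistency from the crude inequality above, using $R=o(\min(n,T))$.
\item Bound $\|E^\parallel - \widehat G\|_2$ and the distance between the estimated and best-approximation projectors, via the rank-$2R$ perturbation argument.
\item Expand the first-order condition around $\beta^*$. The leading term ${\rm tr}(X'E^\perp)/(nT)$ gives the $(\min(n,T))^{-1/2}$ rate by (i). The remainders are bounded in nuclear/spectral duality by $R\,\|E^\perp\|\cdot\|X\|/(nT)$ and by $\sqrt{R}\cdot R\,\delta_R$-type terms.
\end{enumerate}
Finally, the choice $R\propto(\min(T,n/\log^2T))^{1/(2\rho-1)}$ balances $R^{(3-2\rho)/2}$ against $R(\min(T,n/\log^2T))^{-1/2}$. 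The two match when $R^{\rho-1/2} = (\min(T,n/\log^2T))^{1/2}$, which gives \eqref{eq::rate_with_optimal} by direct substitution.
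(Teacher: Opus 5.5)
Your proposal already contains a complete proof, but you discard it: the sentence ``This naive bound is not good enough'' gets the comparison backwards. Put $m:=\min(T,n/\log^2T)$. Then (iii) gives $\|E^\perp\|/\sqrt{nT}=\mathcal{O}_P(T^{-1/2}+\log T\,n^{-1/2})=\mathcal{O}_P(m^{-1/2})$. In your displayed inequality, use $|\Delta|\le a/\sqrt{c}$ (from (ii), wpa1) and $\|X\|_2=\mathcal{O}_P(\sqrt{nT})$ (from (i)), and divide by $a$. This yields
\[
|\widehat\beta_{\rm IFE}-\beta^*|=\mathcal{O}_P\bigl(\min(n,T)^{-1/2}\bigr)+\mathcal{O}_P\bigl(R^{(1-2\rho)/2}\bigr)+\mathcal{O}_P\bigl(\sqrt{R}\,m^{-1/2}\bigr).
\]
For $R\ge 1$ we have $R^{(1-2\rho)/2}\le R^{(3-2\rho)/2}$ and $\sqrt{R}\,m^{-1/2}\le R\,m^{-1/2}$. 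So this bound implies the first display of the theorem; it is in fact sharper, and it needs $\rho>3/2$ only to make the stated bound vanish. Your balancing computation then gives the second display. Nothing finer than your crude comparison is required.

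Two things to tighten. First, make explicit that $\|E^\parallel-G^*\|_2^2$ and ${\rm tr}\bigl((E^\perp)'(E^\parallel-G^*)\bigr)$ cancel exactly between $L(\widehat\beta_{\rm IFE})$ and your upper bound on $L(\beta^*)$. The comparison is then precisely $\|W\|_2^2\le 2\,{\rm tr}\bigl((E^\perp+E^\parallel-G^*)'W\bigr)$ with $W:=X\Delta+\tilde G$. If instead you bounded the $E^\perp$--$(E^\parallel-G^*)$ cross term on its own, as your vague ``terms in $E^\parallel-G^*$'' leaves open, you would lose a square root and be left with a $\delta_R^{1/4}$ term.

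Second, delete steps (a)--(c). As written they are not an argument: the origins of $R^{(3-2\rho)/2}$ and $R\,m^{-1/2}$ are asserted, not derived. Also, controlling the distance between the estimated and best-approximation projectors in (b) would typically need a gap between the $R$-th and $(R+1)$-th singular values of $E^\parallel$, which (i)--(iv) do not provide. If your proof rested on that route it would have a genuine gap.

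The paper does not spell out a proof beyond pointing to the Freeman--Weidner strategy. Your crude comparison is exactly the rate-tracking version of the $\underline Q$ argument the paper uses to prove Theorem~\ref{thm::IFE_consistencyNEW}, and that suffices here.
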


The proof of this theorem is based on the proof strategy used in Lemma 1 and Theorem 1 of \citet{FreemanWeidner2021}. At the same time, the rates imposed above differ from the corresponding rates in \citet{FreemanWeidner2021} due to the non-causal nature of the model of the decomposition $E:=Y - \beta^* \, X$. In particular, the term $\largeO_P((\min(n,T))^{-1/2})$ is present in this theorem, but not present in \citet{FreemanWeidner2021}. This term is a direct consequence of $(i)$ that cannot be generally improved upon for misspecified panel data models, see e.g. \citet{Juodis2020}. Secondly, for the reasons discussed in the main body of the paper, there is a $\log(T)$ term present in $(iii)$, and, as a result, also in the main statement of the theorem. If we were to adopt the same assumptions as \citet{FreemanWeidner2021}, the convergence rate would be the same.

For example, if $n=T$ (and ignoring the $\log(T)$ term), it is evident that the second term in \eqref {eq::rate_with_optimal} (associated with the truncation error) will never be dominated.

\section{Further Monte Carlo Results}
      \begin{table}[htbp!]
  \renewcommand{\arraystretch}{.9}
  \addtolength{\tabcolsep}{-2pt}
  \centering
 \footnotesize
  \caption{Monte Carlo results. DGP.5 NC-NLFM $\kappa=0.0$ and $\rho=0.5$. NLFM as in \citet{FreemanWeidner2021}.}
\begin{tabular}{cc|cc|cc|cc|cc}
\hline
\hline

&&\multicolumn{2}{c|}{IFE}&\multicolumn{2}{c|}{PC (YX)}&\multicolumn{2}{c|}{PC (X)}&\multicolumn{2}{c}{CCE (YX)}\\
     $n=T$  &    $\pi$  &    Bias  &    Var  &     Bias  &    Var  &     Bias  &    Var  &     Bias  &    Var \\

\hline
\hline
25 & 0.0 & 0.363 & 0.057 & -0.613 & 0.013 & 0.176 & 0.044 & 0.134 & 0.019 \\
 25 & 0.5 & 0.072 & 0.042 & -0.448 & 0.013 & 0.382 & 0.099 & 0.140 & 0.017 \\
  50 & 0.0 & 0.334 & 0.017 & -0.524 & 0.004 & 0.159 & 0.009 & 0.142 & 0.008 \\
  50 & 0.5 & 0.094 & 0.010 & -0.322 & 0.005 & 0.396 & 0.028 & 0.161 & 0.008 \\
 75 & 0.0 & 0.314 & 0.009 & -0.479 & 0.003 & 0.145 & 0.004 & 0.147 & 0.005 \\
 75 & 0.5 & 0.115 & 0.005 & -0.243 & 0.004 & 0.407 & 0.017 & 0.174 & 0.006 \\
 100 & 0.0 & 0.300 & 0.005 & -0.434 & 0.002 & 0.135 & 0.003 & 0.154 & 0.003 \\
 100 & 0.5 & 0.131 & 0.004 & -0.174 & 0.004 & 0.414 & 0.012 & 0.186 & 0.005 \\
 200 & 0.0 & 0.278 & 0.002 & -0.383 & 0.001 & 0.105 & 0.001 & 0.167 & 0.002 \\
200 & 0.5 & 0.171 & 0.002 & -0.047 & 0.003 & 0.453 & 0.007 & 0.214 & 0.003 \\
\hline
\hline

\end{tabular}
    \label{tab::DGP5}
\end{table}
      \begin{table}[htbp!]
  \renewcommand{\arraystretch}{.9}
  \addtolength{\tabcolsep}{-2pt}
  \centering
 \footnotesize
  \caption{Monte Carlo results. DGP.6 C-NLFM $\kappa=0.5$ and $\rho=0.0$. NLFM as in \citet{FreemanWeidner2021}.}
\begin{tabular}{cc|cc|cc|cc|cc}
\hline
\hline

&&\multicolumn{2}{c|}{IFE}&\multicolumn{2}{c|}{PC (YX)}&\multicolumn{2}{c|}{PC (X)}&\multicolumn{2}{c}{CCE (YX)}\\
     $n=T$  &    $\pi$  &    Bias  &    Var  &     Bias  &    Var  &     Bias  &    Var  &     Bias  &    Var \\

\hline
\hline
  25 & 0.0 & 0.486 & 0.124 & -0.593 & 0.018 & 0.353 & 0.065 & 0.262 & 0.035 \\
  25 & 0.5 & 0.322 & 0.069 & -0.341 & 0.020 & 0.880 & 0.257 & 0.368 & 0.046 \\
  50 & 0.0 & 0.402 & 0.026 & -0.503 & 0.006 & 0.317 & 0.014 & 0.276 & 0.016 \\
  50 & 0.5 & 0.319 & 0.018 & -0.201 & 0.009 & 0.815 & 0.077 & 0.395 & 0.024 \\
   75 & 0.0 & 0.350 & 0.011 & -0.460 & 0.003 & 0.289 & 0.007 & 0.285 & 0.010 \\
  75 & 0.5 & 0.329 & 0.010 & -0.123 & 0.006 & 0.773 & 0.041 & 0.413 & 0.018 \\
100 & 0.0 & 0.318 & 0.006 & -0.415 & 0.002 & 0.270 & 0.004 & 0.299 & 0.008 \\
 100 & 0.5 & 0.340 & 0.007 & -0.055 & 0.005 & 0.741 & 0.025 & 0.434 & 0.014 \\
200 & 0.0 & 0.233 & 0.002 & -0.371 & 0.001 & 0.210 & 0.001 & 0.326 & 0.004 \\
 200 & 0.5 & 0.372 & 0.004 &  0.058 & 0.004 & 0.694 & 0.011 & 0.482 & 0.009 \\
\hline
\hline

\end{tabular}
    \label{tab::DGP6}
\end{table}

\clearpage

\section{Proofs}
\label{section_appendix:proofs}

\subsection{Proofs for Section~\ref{ssection::consistency}}

\begin{proof}[\bf Proof of Theorem~\ref{thm::PC_consistencyNEW}]
\# \underline{Step 1: analyze the finite rank approximation to $X^{\parallel}$:}
We write $R$ instead of $R_{nT}$ throughout this proof.
Define
$$ 
  X^\parallel_R := \argmin_{\big\{ G \in \mathbb{R}^{n \times T}: {\rm rank}(G) \leq R \big\}}  \left\|  X^\parallel -  G \right\|^2_2 ,
$$
which by 
assumption (iv) of the theorem satisfies
$$
   \frac 1 {nT}  \left\|  X^\parallel -   X^\parallel_R \right\|^2_2 
   = o_P(1) .
$$
Using this and assumption (ii) of the theorem we find that
\begin{align*}
  \left| {\rm Tr}\left[\left(X^\parallel- X^\parallel_R \right)' X^\perp \right]
  \right| &\leq   \underbrace{ \left\|  X^\parallel -   X^\parallel_R \right\|_2 }_{=o_P(\sqrt{nT})}
     \underbrace{ \left\| X^\perp \right\|_2 }_{=\sqrt{cnT} + o_P(\sqrt{nT})}
     \\
     &= o_P(nT) .
\end{align*}
Since $X^\parallel_R$ is defined as the leading $R$ principal components of $X^\parallel$ we have ${\rm rank}(X^\parallel_R) \leq R$
and $ \left\| X^\parallel_R \right\|_2  \leq  \left\| X^\parallel  \right\|_2 $.\footnote{
If we would have $ \left\| X^\parallel_R \right\|_2  >  \left\| X^\parallel  \right\|_2 $, then it's easy to see that $G = \rho X^\parallel$ with 
$\rho= \frac{{\rm Tr}\left[(X^\parallel_R)'X^\parallel \right]} {\left\| X^\parallel_R \right\|_2^2}$ would give ${\rm rank}(G) \leq R$ and
$  \left\|  X^\parallel -  G \right\|^2_2 <  \left\|  X^\parallel -   X^\parallel_R \right\|^2_2$, which contradicts the definition $X^\parallel_R$ as the $\argmin$ of this objective function.
}
Using this as well as assumptions (i), (ii), (iii) in the theorem, and writing 
$\|\cdot\|_1$ for the nuclear norm (the sum of the singular values), we find that
\begin{align*}
     \left| {\rm Tr}\left[ \left(X^\parallel_R \right)' X^\perp \right] 
    \right|
    &\leq    \left\| X^\perp \right\| \,  \left\|X^\parallel_R \right\|_1 
    \\
    &\leq   \left\| X^\perp \right\| \,
     \left[{\rm rank}(X^\parallel_R)\right]^{1/2}
     \, \left\| X^\parallel_R \right\|_2 
   \\
    &\leq  R^{1/2} \, \left\| X^\perp \right\| \,  \left\| X^\parallel  \right\|_2 
   \\
    &\leq  R^{1/2} \, \left\| X^\perp \right\| \,  \left\| X - X^\perp \right\|_2 
   \\
    &\leq  \underbrace{ R^{1/2} \, \left\| X^\perp \right\| }_{=o_P(\sqrt{nT})} \, \underbrace{ \left( \left\| X \right\|_2 + \left\| X^\perp \right\|_2 \right)   }_{=\largeO_P(\sqrt{nT})}
   \\
   &= o_P(nT) .
\end{align*}
Combining the last two equations gives
\begin{align*}
   \frac 1 {nT} {\rm Tr}\left[ \left(X^\parallel \right)' X^\perp \right]
   &= 
    \frac 1 {nT}
    {\rm Tr}\left[ \left(X^\parallel_R \right)' X^\perp \right]
   +\frac 1 {nT} {\rm Tr}\left[\left(X^\parallel- X^\parallel_R \right)' X^\perp \right]
   \\
   &= o_P(1).
\end{align*}

\bigskip
\noindent
\# \underline{Step 2: show that $\frac 1 {nT}\left\| \widehat{X^\perp} - X^\perp \right\|_2^2 = o_P(1)$:}
Remember that $\widehat{X^\perp} :=X - \widehat X_R$, where
\begin{align*}
      \widehat X_R =  \argmin_{\left\{ G  \in \mathbb{R}^{n \times T} \, : \, {\rm rank}(G) \leq R  \right\}}
       \left\|  X -  G    \right\|_2^2 .
\end{align*}
Since $\widehat X_R$ is defined as the leading $R$ principal components of $X$ we have ${\rm rank}(\widehat X_R) \leq R$
and $ \left\| \widehat X_R \right\|_2  \leq  \left\| X  \right\|_2 $.
 Using this 
as well as assumptions (i) and (iii) in the theorem we find that
\begin{align}
     \left| {\rm Tr}\left[ \left(\widehat X_R \right)' X^\perp \right] 
    \right|
    &\leq   \left\| X^\perp \right\| \,  \left\|\widehat X_R \right\|_1
   \nonumber \\
    &\leq   \left\| X^\perp \right\| \,\left[ {\rm rank}(\widehat X_R) \right]^{1/2}
    \,
    \left\| \widehat X_R \right\|_2 
   \nonumber \\
    &\leq  \underbrace{ R^{1/2} \, \left\| X^\perp \right\| }_{=o_P(\sqrt{nT})} \, \underbrace{   \left\| X \right\|_2    }_{=\largeO_P(\sqrt{nT})}
     \nonumber \\
   &= o_P(nT) .
   \label{ResultsXRXperp}
\end{align}
Since $G=X^\parallel_R$ is a possible choice for $G$
in the above minimization problem that defines $ \widehat X_R$
we find 
\begin{align*}
   \frac 1 {nT} \, \left\|  X -   \widehat X_R    \right\|_2^2
    & \leq   \frac 1 {nT} \,  \left\|  X -   X^\parallel_R   \right\|_2^2
    \nonumber \\
    &=   \frac 1 {nT} \,  \left\|  X^\perp  + \left( X^\parallel -   X^\parallel_R \right)   \right\|_2^2
    \nonumber \\
    &= \frac 1 {nT} \,  \left\|   X^\perp    \right\|_2^2 +  \underbrace{  \frac 1 {nT} \,  \left\|   X^\parallel -   X^\parallel_R   \right\|_2^2 }_{=o_P(1)}
    + \underbrace{ \frac 2 {nT}  {\rm Tr}\left[ \left( X^\parallel -   X^\parallel_R \right)'  X^\perp \right] }_{=o_P(1)} ,
     \nonumber \\
        &= \frac 1 {nT} \,  \left\|   X^\perp    \right\|_2^2 + o_P(1),
\end{align*}
where we used results proved in Step 1 above. 
Using the decomposition
$X =    X^\perp  + X^\parallel  $ we also find that
\begin{align*}
     \frac 1 {nT} \, \left\|  X -   \widehat X_R    \right\|_2^2
     &=   \frac 1 {nT} \, \left\|  X^\perp +     \left(X^\parallel   -   \widehat X_R \right)    \right\|_2^2
    \\
    &= \frac 1 {nT} \,  \left\|   X^\perp    \right\|_2^2
    + \frac 1 {nT} \,  \left\|  X^\parallel   -   \widehat X_R    \right\|_2^2
    + \underbrace{ \frac 2 {nT}  {\rm Tr}\left[  \left(  X^\parallel \right)'  X^\perp \right] }_{=o_P(1)}
    + \underbrace{ \frac 2 {nT}  {\rm Tr}\left[  \left(   \widehat X_R \right)'  X^\perp \right] }_{=o_P(1)} 
   \\
   &= \frac 1 {nT} \,  \left\|   X^\perp    \right\|_2^2
    + \frac 1 {nT} \,  \left\|  X^\parallel   -   \widehat X_R    \right\|_2^2 + o_P(1) ,
\end{align*}
where we used \eqref{ResultsXRXperp} and again results proved in Step 1. Combining the last two results we find that $\frac 1 {nT} \,  \left\|  X^\parallel   -   \widehat X_R    \right\|_2^2
    \leq o_P(1)$, which is of course equivalent to
\begin{align*}
    \frac 1 {nT} \,  \left\|  X^\parallel   -   \widehat X_R    \right\|_2^2
    = o_P(1) .
\end{align*}
Finally, since $X= X^\perp  + X^\parallel  = \widehat{X^\perp} +  \widehat X_R$ we find
$
    X^\parallel   -   \widehat X_R  = \widehat{X^\perp} - X^\perp ,
$
and therefore
\begin{align*}
  \frac 1 {nT}\left\| \widehat{X^\perp} - X^\perp \right\|_2^2 
  &= \frac 1 {nT} \,  \left\|  X^\parallel   -   \widehat X_R    \right\|_2^2
  \\
    &= o_P(1),
\end{align*}
which is what we wanted to show in this step of the proof.

\bigskip

\noindent
\# \underline{Step 3: conclude that $\widehat \beta_{\rm PC} = \beta^* + o_P(1)$:}
Using that  $\frac 1 {nT}\left\| \widehat{X^\perp} - X^\perp \right\|_2^2 = o_P(1)$ we find that
\begin{align*}
  \left| \frac 1 {nT} \left\|  \widehat{X^\perp} \right\|_2^2  -
  \frac 1 {nT} \left\| X^\perp \right\|_2^2 \right|^2
  &=
  \left| \frac 1 {nT}
  {\rm Tr}\left[
   \left( \widehat{X^\perp} - X^\perp \right)'
    \left( \widehat{X^\perp} + X^\perp \right)
  \right] 
  \right|^2
 \\
 &\leq \underbrace{\frac 1 {nT}\left\| \widehat{X^\perp} - X^\perp \right\|_2^2}_{=o_P(1)}
 \;
  \underbrace{\frac 1 {nT}\left\| \widehat{X^\perp} + X^\perp \right\|_2^2
  }_{\leq \frac 1 {nT}\left\| X^\perp \right\|_2^2 + o_P(1) = \largeO_P(1)}
  \\
  &= o_P(1),
\end{align*}
and therefore 
\begin{align*}
   \frac 1 {nT} \left\|  \widehat{X^\perp} \right\|_2^2 
   &=  \frac 1 {nT} \left\| X^\perp \right\|_2^2 + o_P(1)
 \\
 &= c+ o_P(1) ,
\end{align*}
where we also used assumption (ii) in the theorem.
Similarly,
\begin{align*}
  \left| \frac 1 {nT} {\rm Tr}\left( Y' \widehat{X^\perp} \right) 
   - \frac 1 {nT} {\rm Tr}\left( Y'  X^\perp \right) 
   \right|^2
  &= \left| \frac 1 {nT} {\rm Tr}\left[ Y'  \left( \widehat{X^\perp} - X^\perp \right) \right]  \right|^2
  \\
  &\leq  \underbrace{\frac 1 {nT} \| Y \|^2_2}_{=\largeO_P(1)}
        \underbrace{\frac 1 {nT}\left\| \widehat{X^\perp} - X^\perp \right\|_2^2}_{=o_P(1)}
        \\
     &= o_P(1) ,
\end{align*}
where we used assumption (i) of the theorem.
Combining the last two results we find that
\begin{align*}
   \widehat \beta_{\rm PC} &:=   
      \frac{  \sum_{i=1}^n \sum_{t=1}^T    Y_{it} \widehat{X^\perp_{it}}} {\sum_{i=1}^n \sum_{t=1}^T  (\widehat{X^\perp_{it}})^2 }
      = \frac{\frac 1 {nT} {\rm Tr}\left( Y' \widehat{X^\perp} \right) }
         {\frac 1 {nT} \left\|  \widehat{X^\perp} \right\|_2^2}
       = \frac{\frac 1 {nT} {\rm Tr}\left( Y'  X^\perp  \right) }
         {\frac 1 {nT} \left\|  X^\perp \right\|_2^2 + o_P(1)} +o_P(1)
     \\
     &= \beta^* +\frac{\frac 1 {nT} {\rm Tr}\left( E'  X^\perp  \right)}
         {\frac 1 {nT} \left\|  X^\perp \right\|_2^2}+ o_P(1) ,
\end{align*}
where in the last step we used the fact that $\frac 1 {nT} \left\|  X^\perp \right\|_2^2\to_{P}c>0$, and that $\frac 1 {nT} {\rm Tr}\left( E'  X^\perp  \right)=o_{P}(1)$ by assumption.
This concludes the proof.\\
\end{proof}

\begin{proof}[\bf Proof of Theorem~\ref{thm::IFE_consistencyNEW}]
Define
\begin{align*}
  E^\parallel_R &:= \argmin_{\big\{ G \in \mathbb{R}^{n \times T}: {\rm rank}(G) \leq R \big\}}  \left\|  E^\parallel -  G \right\|^2_2 ,
  &
  E^+_R &:= E^\parallel - E^\parallel_R.
\end{align*}
Remember that $\widehat \beta_{\rm IFE}  
       = \argmin_{\beta \in \mathbb{R}}  Q(\beta)$, where
\begin{align*} 
      Q(\beta)
      &:= \min_{\big\{ G \in \mathbb{R}^{n \times T}: {\rm rank}(G) \leq R  \big\}}
      \left\| Y - X \cdot \beta - G \right\|_2^2
    \\
     &= \min_{\big\{ G \in \mathbb{R}^{n \times T}: {\rm rank}(G) \leq R  \big\}}
      \left\| (\Delta \beta)  X + E - G \right\|_2^2
    \\  
     &= \min_{\big\{ G \in \mathbb{R}^{n \times T}: {\rm rank}(G) \leq R  \big\}}
      \left\| (\Delta \beta)  X + E^\perp + E^\parallel - G \right\|_2^2
    \\  
     &= \min_{\big\{ G \in \mathbb{R}^{n \times T}: {\rm rank}(G) \leq R  \big\}}
      \left\| (\Delta \beta)  X  + \left( E^\perp + E^+_R  \right) +  E^\parallel_R - G \right\|_2^2
    \\
     &\geq  \min_{\big\{ \tilde G \in \mathbb{R}^{n \times T}: {\rm rank}(\tilde G) \leq 2R  \big\}}
      \left\| (\Delta \beta)  X  + \left( E^\perp + E^+_R  \right) -   \tilde G \right\|_2^2
    \\  
      &=:  \underline Q(\beta) \, .
\end{align*}
Since we can choose $\beta^*$ and $G=E^\parallel_R$ in the above minimization problems we have
\begin{align}
 \underline Q(\widehat \beta_{\rm IFE} )
  \leq Q(\widehat \beta_{\rm IFE} )
  \leq Q(\beta^*) \leq 
  \left\| Y - X \cdot \beta^* - E^\parallel_R \right\|_2^2
 = \| E^\perp +E^+_R \|^2_2 .
 \label{UpperBoundUnderlineQ}
\end{align}
Our goal is to find a lower bound for $\underline Q(\beta) $ that allows us to conclude consistency of $\widehat \beta_{\rm IFE}$.  
Write any rank-\(2R\) matrix as \(\tilde G=\lambda f'\), let \(P_f\) be the orthogonal projector onto the span of $f$, and let $M_f=I-P_f$. Then
\[
\underline Q(\beta)
=
\min_{f}
\mathrm{Tr}\!\big[(\Delta\beta\cdot X + E^\perp+E^+_R) M_f (\Delta\beta\cdot X + E^\perp+E^+_R)'\big].
\]
Expanding gives, wpa1,
\begin{align}
\underline Q(\beta)
&\geq
\mathrm{Tr}\!\big[(\Delta\beta \, X)M_f(\Delta\beta \, X)'\big]
+ \|E^\perp+E^+_R\|_2^2
\nonumber \\
&\quad
- \mathrm{Tr}\!\big[(E^\perp+E^+_R)P_f(E^\perp+E^+_R)'\big]
\nonumber \\ &\quad
+ 2\,\mathrm{Tr}\!\big[(\Delta\beta \, X)(E^\perp)'\big] -2\big|\mathrm{Tr}\!\big[(\Delta\beta  X)P_f(E^\perp)'\big]\big|
\nonumber \\ &\quad
+ 2\,\mathrm{Tr}\!\big[(\Delta\beta \, X) M_f (E^+_R)'\big] 
\nonumber \\
&\geq c\,nT\, |\Delta \beta|^2 + \|E^\perp+E^+_R\|_2^2 
\nonumber \\ & \quad
 - 2R\|E^\perp\|^2 
 - \|E^+_R\|_2^2 
 - 2\sqrt{2R}\,\|E^\perp\|\,\|E^+_R\|_2
\nonumber \\ & \quad 
+ 2\,\mathrm{Tr}\!\big[(\Delta\beta \, X)(E^\perp)'\big]
- 2\sqrt{2R}\,|\Delta\beta| \,\|X\|_2\,\|E^\perp\|
\nonumber \\ & \quad 
- 2 |\Delta\beta| \|X\|_2 \|E^+_R\|_2
\nonumber \\
&\geq c\,nT\, |\Delta \beta|^2 + \|E^\perp+E^+_R\|_2^2 
   + o_P(nT) + o_P( nT  |\Delta \beta|) ,
 \label{LowerBoundUnderlineQ}  
\end{align}
where we used that
$$
  \left|\mathrm{Tr}\!\big[(\Delta\beta \, X) M_f (E^+_R)'\big]
  \right|
   \leq 2 |\Delta\beta| \left\|  X M_f \right\|_2  \left\| E^+_R \right\|
  \leq  2 |\Delta\beta| \left\|  X \right\|_2  \left\| E^+_R \right\| .
$$
For the last step, notice that $\left\|  X M_f \right\|_2 \leq \left\|  X \right\|_2$, because
$$
  \left\| X  M_f \right\|_2^2
   = {\rm Tr}[ X  M_f   X' ] 
   = {\rm Tr}[ X    X' ] -  {\rm Tr}[ X  P_f  X' ]
   = \left\|  X \right\|_2^2 - \underbrace{\left\| X  P_f \right\|_2^2}_{\geq 0} .
$$
In \eqref{LowerBoundUnderlineQ} 
we also used assumption (i)--(iv) to control each remainder term.  In particular:
\begin{itemize}
  \item \( \displaystyle 2R\|E^\perp\|^2 = o_P(nT)\) and \(\sqrt{2R}\,\|E^\perp\|\,\|E^+_R\|_2 = o_P(nT)\) by (iii) together with \(\|E^+_R\|_2 = o_P(\sqrt{nT})\) from (iv).
  \item \( \|E^+_R\|_2^2 = o_P(nT)\) by (iv).
  \item \( \mathrm{Tr}((\Delta\beta\,X)E^\perp{}')=\Delta\beta\;\mathrm{Tr}(X E^\perp{}')=o_P(nT|\Delta\beta|)\) by (i).
  \item \( \sqrt{2R}\,|\Delta\beta|\,\|X\|_2\,\|E^\perp\| = |\Delta\beta|\,o_P(nT)\) and \( |\Delta\beta|\,\|X\|_2\,\|E^+_R\|_2 = |\Delta\beta|\,o_P(nT)\) because \(\|X\|_2=\largeO_P(\sqrt{nT})\), \(\|E^\perp\|=o_P(\sqrt{nT/R})\) and \(\|E^+_R\|_2=o_P(\sqrt{nT})\).
\end{itemize}
Combining these gives the remainder \(o_P(nT)+o_P(nT|\Delta\beta|)=o_P\big(nT(1+|\Delta\beta|)\big)\), which yields \eqref{LowerBoundUnderlineQ}.
Combining \eqref{UpperBoundUnderlineQ} and \eqref{LowerBoundUnderlineQ} we find that
\[
c\,|\widehat\beta_{\rm IFE}-\beta^*|^2 \le o_P\big(1+|\widehat\beta_{\rm IFE}-\beta^*|\big),
\]
which implies $\widehat\beta_{\rm IFE} = \beta^* + o_P(1)$.
\end{proof}

\subsection{Proof for Section~\ref{ssection::lowlevel}}

Before proving Theorem~\ref{thm::lowlevel}, it is useful to establish the following intermediate lemma.

\begin{lemma}[\bf Medium-level sufficient conditions]
\label{lemma::medium}
Let $U_i \in \mathcal{U}$ and $V_t \in \mathcal{V}$ be random vectors, and define $U = (U_1, \ldots, U_n)$ and $V = (V_1, \ldots, V_T)$. Assume that $(Y_{it}, X_{it}, U_i, V_t)$ are identically distributed over $(i,t)$, and that $(Y_{it}, X_{it})$ have uniformly bounded fourth moments conditional on $(U, V)$. Define $X^\parallel$, $X^\perp$, $\beta^*$, $E$, $E^\parallel$, and $E^\perp$ as in Theorems~\ref{thm::PC_consistencyNEW} and \ref{thm::IFE_consistencyNEW}. Assume:
\begin{enumerate}[(i)]
    \item $\mathbb{E}[(X_{it}^\perp)^2] > 0$.
    \item $(U_i : i = 1, \ldots, n)$ are i.i.d., and $(V_t : t = 1, \ldots, T)$ is strictly stationary and $\alpha$-mixing with summable mixing coefficients.
    \item Conditional on $(U, V)$, the process $\{(X_{it}^\perp, E_{it}^\perp) : t = 1, \ldots, T\}$ is $\alpha$-mixing with summable mixing coefficients for each $i$, and the vectors $\{(X_{it}^\perp, E_{it}^\perp)\}_{t=1}^T$ are independent across $i$.
    \item $\mathbb{E}(X_{it} \,|\, U_i, V_t) = \mathbb{E}(X_{it} \,|\, U, V)$ and $\mathbb{E}(Y_{it} \,|\, U_i, V_t) = \mathbb{E}(Y_{it} \,|\, U, V)$.
    \item The conditional mean function $h_X(u, v) := \mathbb{E}(X_{it} \,|\, U_i = u, V_t = v)$ admits a singular value decomposition
    \begin{align*}
        h_X(u, v) = \sum_{r=1}^\infty \sigma_r \phi_r(u) \psi_r(v),
    \end{align*}
    where $\sigma_1 \geq \sigma_2 \geq \ldots \geq 0$, and $\mathbb{E}[\phi_r(U_i) \phi_s(U_i)] = \mathbbm{1}(r = s)$, and $\mathbb{E}[\psi_r(V_t) \psi_s(V_t)] = \mathbbm{1}(r = s)$, and
    \begin{align*}
        \sum_{r > R_{nT}} \sigma_r^2 = o(1).
    \end{align*}
    The same type of decomposition holds for $h_E(u, v) := \mathbb{E}(E_{it} \,|\, U_i = u, V_t = v)$ with singular values $\tilde{\sigma}_r$ satisfying $\sum_{r > R_{nT}} \tilde{\sigma}_r^2 = o(1)$.
    \item $R_{nT} = o\left(\min\left(\sqrt{T}, \sqrt{n} / (\log T)^2\right)\right)$.
\end{enumerate}
Then the assumptions of Theorems~\ref{thm::PC_consistencyNEW} and \ref{thm::IFE_consistencyNEW} are satisfied, and hence $\widehat{\beta}_{\rm PC} = \beta^* + o_P(1)$ and $\widehat{\beta}_{\rm IFE} = \beta^* + o_P(1)$.
\end{lemma}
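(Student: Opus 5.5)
The plan is to verify, one at a time, conditions (i)--(iv) of Theorem~\ref{thm::PC_consistencyNEW} and of Theorem~\ref{thm::IFE_consistencyNEW}, with $X^\parallel_{it}=h_X(U_i,V_t)$ and $E^\parallel_{it}=h_E(U_i,V_t)$. Assumption (iv) of the lemma gives $\mathbb{E}(X^\perp_{it}\,|\,U,V)=0=\mathbb{E}(E^\perp_{it}\,|\,U,V)$. This lets us work conditionally on $(U,V)$, where the residual processes are independent across $i$ and mixing over $t$ by (iii).

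\emph{Moment and law-of-large-numbers conditions.} The bounds $\frac1{nT}\sum X_{it}^2=\mathcal O_P(1)$ and $\frac1{nT}\sum Y_{it}^2=\mathcal O_P(1)$ follow from Markov's inequality and the bounded fourth moments. For $\frac1{nT}\sum(X^\perp_{it})^2\to_P c:=\mathbb{E}[(X^\perp_{it})^2]>0$, I split the deviation into two parts. The first is $\frac1{nT}\sum[(X^\perp_{it})^2-\mathbb{E}((X^\perp_{it})^2|U,V)]$. Its conditional variance is $\mathcal O(1/(nT))$ after summing covariances: independence across $i$ helps, and within each $i$ the covariances are summable by Davydov's inequality with the fourth moments. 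The second is $\frac1{nT}\sum\sigma^2(U_i,V_t)-c$. Its variance is $\mathcal O(1/n+1/T)$, using i.i.d.\ $U_i$ and mixing $V_t$ and writing it as a U-statistic-type double average. The same argument gives $\frac1{nT}\sum X^\perp_{it}E_{it}\to_P\mathbb{E}(X^\perp_{it}E_{it})=0$, where the last equality is the definition of $\beta^*$. Writing $E_{it}=E^\parallel_{it}+E^\perp_{it}$, it also gives $\frac1{nT}\sum X_{it}E^\perp_{it}=\frac1{nT}\sum X^\perp_{it}E^\perp_{it}+\frac1{nT}\sum X^\parallel_{it}E^\perp_{it}\to_P\mathbb{E}(X^\perp E^\perp)+0=0$. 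The cross term has conditional mean zero and variance $\mathcal O(1/(nT))$.

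\emph{Low-rank conditions (iv).} I take $G_{it}=\sum_{r\le R}\sigma_r\phi_r(U_i)\psi_r(V_t)$, which has rank at most $R$. Then $\frac1{nT}\|X^\parallel-G\|_2^2=\frac1{nT}\sum_{i,t}(\sum_{r>R}\sigma_r\phi_r(U_i)\psi_r(V_t))^2$. Its expectation is not trivially $\sum_{r>R}\sigma_r^2$, because $U_i$ and $V_t$ are independent but the sum is over the tail. For a single $(i,t)$ with $U_i\perp V_t$, orthonormality does give $\mathbb{E}[(\sum_{r>R}\sigma_r\phi_r\psi_r)^2]=\sum_{r>R}\sigma_r^2=o(1)$, so Markov's inequality yields $o_P(1)$. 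The same holds for $E^\parallel$. This requires the independence of $(U_i)$ and $(V_t)$, which I will take as implicit in the setup, as in Theorem~\ref{thm::lowlevel}.

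\emph{Non-collinearity (ii) of Theorem~\ref{thm::IFE_consistencyNEW}.} For any $G$ of rank at most $2R$, write $X=X^\perp+X^\parallel$. The bound $\|X-G\|_2^2\ge\|X^\perp\|_2^2-2|\mathrm{Tr}[(G-X^\parallel)'X^\perp]|$ needs care. I instead use $\min_G\|X-G\|_2^2=\sum_{r>2R}s_r(X)^2\ge\|X\|_2^2-2R\|X\|^2$ applied to the perturbation $X=X^\perp+X^\parallel$. More cleanly, I reuse Step 1 and Step 2 of the proof of Theorem~\ref{thm::PC_consistencyNEW} with $2R$ in place of $R$. That argument shows the rank-$2R$ residual of $X$ has squared norm $\frac1{nT}\|X^\perp\|_2^2+o_P(1)\to c$, and it only needs $R\|X^\perp\|^2=o_P(nT)$ together with (iv).

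\emph{Spectral norm conditions (iii): the main obstacle.} The goal is $\|X^\perp\|,\|E^\perp\|=o_P(\sqrt{nT/R})$. Conditionally on $(U,V)$, $X^\perp$ has independent rows, each a mixing time series with bounded fourth moments. I would aim for $\|X^\perp\|=\mathcal O_P(\sqrt n+\sqrt T(\log T)^2)$ or a similar bound. One route is to block the time dimension so that blocks are nearly independent (Berbee coupling), and then apply a matrix Bernstein or Latała-type inequality to the rows, which are independent random vectors in $\mathbb{R}^T$ with covariance operator norm bounded via summable mixing. This gives $\|X^\perp\|^2\le\|X^\perp X^{\perp\prime}\|$, controlled by $n\|\Sigma\|+$ fluctuations, with logarithmic losses. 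An alternative is to bound $\|X^\perp\|^2\le\|\mathbb{E}(X^{\perp\prime}X^\perp|U,V)\|+\|X^{\perp\prime}X^\perp-\mathbb{E}(\cdot)\|$, using truncation at level $(nT)^{1/4}$. Given such a bound, the requirement $R\|X^\perp\|^2/(nT)=R(1/T+(\log T)^4/n)\to0$ is exactly assumption (vi). Establishing this spectral bound under only fourth moments and mixing is the step I expect to be hardest, and I would lean on the cited results for such bounds.
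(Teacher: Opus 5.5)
Your plan is correct. For most conditions it follows the paper's argument:
\begin{itemize}
\item \emph{Low-rank step.} You use the same rank-$R$ competitor $\sum_{r\le R}\sigma_r\phi_r(U_i)\psi_r(V_t)$, the same per-entry expectation $\sum_{r>R}\sigma_r^2$, and Markov. The paper also uses $U_i\perp V_t$ here without stating it, so you are right to flag it.
\item \emph{Second score condition.} You use the same split $X=X^\parallel+X^\perp$.
\item \emph{Spectral norms.} The paper does exactly your fallback: it cites Lemma~B.3 of \citet{wang2022lowrankpanelquantileregression} for $\|X^\perp\|,\|E^\perp\|=\mathcal{O}_P(\sqrt n+\sqrt{T\log^2T})$. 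Your weaker target $\sqrt T(\log T)^2$ also works, but not ``exactly'' by (vi). You need $R(\log T)^4/n\to0$, which holds because $R\ge1$ and $R=o(\sqrt n/(\log T)^2)$ together force $(\log T)^2/\sqrt n\to0$.
\end{itemize}

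The genuine difference is the non-collinearity condition. The paper bounds $\mathrm{Tr}[(X^\parallel-G)'X^\perp]$ uniformly over all $G$ of rank at most $2R$. It routes through $G_{X,R}$, uses a rank-$3R$ nuclear-norm bound, and applies Young's inequality to absorb $\delta\|X^\parallel-G\|_2^2$. You instead use Eckart--Young to replace the minimum by the truncated SVD $\widehat X_{2R}$, whose Frobenius norm is at most $\|X\|_2$. Steps 1--2 of the PC proof, run with $2R$, then give $\frac1{nT}\|X-\widehat X_{2R}\|_2^2\ge\frac1{nT}\|X^\perp\|_2^2+o_P(1)$ directly. Your route is shorter and needs no uniformity in $G$. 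The paper's route yields a stronger uniform inequality, which the lemma does not need.

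Your law-of-large-numbers step is more careful than the paper's ``row-by-row LLN averaged over independent rows'', since the rows are only independent conditionally on $(U,V)$. You split into a fluctuation conditional on $(U,V)$ plus a Hoeffding-type LLN for $\frac1{nT}\sum_{i,t}m(U_i,V_t)$. Two caveats apply:
\begin{itemize}
\item The argument presupposes that conditional second moments such as $\mathbb{E}[(X^\perp_{it})^2\,|\,U,V]$ depend only on $(U_i,V_t)$, and that $U\perp V$. Neither is contained in (iv) of the lemma, though both hold under Theorem~\ref{thm::lowlevel}.
\item With only fourth moments of $X$, Davydov's inequality gives no covariance decay for $(X^\perp_{it})^2$. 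The claimed $\mathcal{O}(1/(nT))$ conditional variance should therefore be replaced by a truncation argument, which still delivers the required $o_P(1)$.
\end{itemize}
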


\begin{proof}[\bf Proof of Lemma~\ref{lemma::medium}]
Throughout we write $R = R_{nT}$, use $\|\cdot\|$ for the spectral norm and 
$\|\cdot\|_2$ for the Frobenius norm, and set
$X^\parallel_{it} = h_X(U_i,V_t)$, $X^\perp_{it} = X_{it} - X^\parallel_{it}$,
and analogously for $E^\parallel_{it}$ and $E^\perp_{it}$.
We verify each condition of Theorems~\ref{thm::PC_consistencyNEW} 
and~\ref{thm::IFE_consistencyNEW} in turn.

\medskip\noindent\textbf{Score conditions.}
By definition of $\beta^*$ we have $\mathbb{E}[X^\perp_{it} E_{it}] = 0$.
For fixed $i$, the process $\{X^\perp_{it} E_{it}\}$ is strictly stationary and
$\alpha$-mixing with summable coefficients by assumption (iii), and has a bounded
second moment by the fourth-moment bound. A standard mixing law of large numbers
applied row-by-row and then averaged across independent rows gives
\[
\frac{1}{nT}\sum_{i,t} X^\perp_{it} E_{it} = o_P(1).
\]
For the second score condition, write
\[
\frac{1}{nT}\sum_{i,t} X_{it} E^\perp_{it}
= \frac{1}{nT}\sum_{i,t} X^\parallel_{it} E^\perp_{it}
+ \frac{1}{nT}\sum_{i,t} X^\perp_{it} E^\perp_{it}.
\]
For the first term, conditional on $(U,V)$, $X^\parallel_{it}$ is deterministic and
$\mathbb{E}[E^\perp_{it} \mid U_i, V_t] = 0$, so the conditional mean of the sum
is zero. Its conditional variance is $\largeO_P((nT)^{-1})$ by independence across $i$
and bounded moments, hence the term is $o_P(1)$.
For the second term, $\mathbb{E}[X^\perp_{it} E^\perp_{it}] = \mathbb{E}[X^\perp_{it} E_{it}]
- \mathbb{E}[X^\perp_{it} E^\parallel_{it}] = 0$, where the second equality uses
$\mathbb{E}[X^\perp_{it} \mid U_i, V_t] = 0$ and the fact that $E^\parallel_{it}$ is
a function of $(U_i, V_t)$. The same mixing law of large numbers gives
$\frac{1}{nT}\sum_{i,t} X^\perp_{it} E^\perp_{it} = o_P(1)$.
Thus $\frac{1}{nT}\sum_{i,t} X_{it} E^\perp_{it} = o_P(1)$.

\medskip\noindent\textbf{Moment bounds and non-degeneracy.}
Uniformly bounded fourth moments together with the sampling scheme in assumption (ii)
imply $\frac{1}{nT}\sum_{i,t} X_{it}^2 = \largeO_P(1)$ and
$\frac{1}{nT}\sum_{i,t} Y_{it}^2 = \largeO_P(1)$ by a law of large numbers for double
arrays. Assumption (i) gives
\[
\frac{1}{nT}\sum_{i,t}(X^\perp_{it})^2 \to_P \mathbb{E}[(X^\perp_{it})^2] =: c > 0.
\]

\medskip\noindent\textbf{Spectral norm bounds.}
Lemma B.3 of \citet{wang2022lowrankpanelquantileregression}, applied using the
conditional mixing and fourth-moment conditions in assumption (iii), gives
\[
\|X^\perp\| = \largeO_P(\sqrt{n} + \sqrt{T \log^2 T}),
\qquad
\|E^\perp\| = \largeO_P(\sqrt{n} + \sqrt{T \log^2 T}).
\]
Assumption (vi) requires $R = o(\sqrt{T})$ and $R = o(\sqrt{n}/(\log T)^2)$,
which implies $R/T \to 0$ and $R \log^2 T / n \to 0$. Therefore
\[
\frac{R}{nT}\|X^\perp\|^2
= \largeO_P\!\left(R\left(\frac{1}{T} + \frac{\log^2 T}{n}\right)\right) = o_P(1),
\]
and the same bound holds for $\|E^\perp\|^2$.

\medskip\noindent\textbf{Low-rank approximations.}
By the functional SVD in assumption (v), $h_X(u,v) = \sum_{r=1}^\infty \sigma_r
\phi_r(u)\psi_r(v)$ with orthonormal singular functions. Under the i.i.d.\
sampling of $(U_i)$ and the stationarity and mixing of $(V_t)$ in assumption (ii), 
and the orthonormality $\mathbb{E}[\phi_r(U_i)\phi_s(U_i)] = \mathbbm{1}(r=s)$, 
$\mathbb{E}[\psi_r(V_t)\psi_s(V_t)] = \mathbbm{1}(r=s)$ from assumption (v),
taking expectations yields
\[
\mathbb{E}\Bigl[\tfrac{1}{nT}\|X^\parallel - G_{X,R}\|_2^2\Bigr]
\leq \sum_{r > R}\sigma_r^2 = o(1),
\]
where $G_{X,R}$ denotes the best rank-$R$ Frobenius approximation to $X^\parallel$.
Markov's inequality then gives $\tfrac{1}{nT}\|X^\parallel - G_{X,R}\|_2^2 = o_P(1)$.
The same argument applied to $h_E$ gives 
$\min_{\mathrm{rank}(G) \leq R} \tfrac{1}{nT}\|E^\parallel - G\|_2^2 = o_P(1)$.

\medskip\noindent\textbf{Non-collinearity for the IFE estimator.}
For any $G$ with $\mathrm{rank}(G) \leq 2R$,
\[
\frac{1}{nT}\|X - G\|_2^2
= \frac{1}{nT}\|X^\perp\|_2^2
+ \frac{1}{nT}\|X^\parallel - G\|_2^2
+ \frac{2}{nT}\,\mathrm{Tr}\!\left[(X^\parallel - G)'X^\perp\right].
\]
The first term converges to $c > 0$ by the non-degeneracy result above, and the
second term is non-negative. To control the cross-term, let $G_{X,R}$ denote the
best rank-$R$ Frobenius approximation to $X^\parallel$, and split
\[
\mathrm{Tr}\!\left[(X^\parallel - G)'X^\perp\right]
= \mathrm{Tr}\!\left[(X^\parallel - G_{X,R})'X^\perp\right]
+ \mathrm{Tr}\!\left[(G_{X,R} - G)'X^\perp\right].
\]
For the first piece, Cauchy--Schwarz in Frobenius norm and the low-rank
approximation result above give
\[
\frac{2}{nT}\left|\mathrm{Tr}\!\left[(X^\parallel - G_{X,R})'X^\perp\right]\right|
\leq 2\sqrt{\tfrac{1}{nT}\|X^\parallel - G_{X,R}\|_2^2}\,
     \sqrt{\tfrac{1}{nT}\|X^\perp\|_2^2}
= o_P(1) \cdot \largeO_P(1) = o_P(1).
\]
For the second piece, $\mathrm{rank}(G_{X,R} - G) \leq 3R$ implies
$\|G_{X,R} - G\|_1 \leq \sqrt{3R}\,\|G_{X,R} - G\|_2$, so the nuclear--spectral
inequality $|\mathrm{Tr}[A'B]| \leq \|A\|_1 \|B\|$ gives
\[
\frac{2}{nT}\left|\mathrm{Tr}\!\left[(G_{X,R} - G)'X^\perp\right]\right|
\leq \frac{2\sqrt{3R}}{nT}\,\|G_{X,R} - G\|_2\,\|X^\perp\|.
\]
Using the triangle inequality $\|G_{X,R} - G\|_2 \leq \|X^\parallel - G_{X,R}\|_2
+ \|X^\parallel - G\|_2$ and $2ab \leq \delta a^2 + \delta^{-1} b^2$ for
$\delta \in (0,1)$, this is further bounded by
\[
2\sqrt{\tfrac{\|X^\parallel - G_{X,R}\|_2^2}{nT}}
 \sqrt{\tfrac{3R\,\|X^\perp\|^2}{nT}}
+ \delta\,\tfrac{\|X^\parallel - G\|_2^2}{nT}
+ \tfrac{3R\,\|X^\perp\|^2}{\delta\,nT}
= \delta\,\tfrac{1}{nT}\|X^\parallel - G\|_2^2 + o_P(1),
\]
where the remainder is $o_P(1)$ by the low-rank approximation result, the
spectral norm bound, and assumption (vi), and is uniform in $G$. Combining,
\[
\frac{1}{nT}\|X - G\|_2^2
\geq \frac{1}{nT}\|X^\perp\|_2^2
   + (1 - \delta)\,\frac{1}{nT}\|X^\parallel - G\|_2^2
   + o_P(1).
\]
Since $\delta < 1$ and the middle term is non-negative,
\[
\min_{\mathrm{rank}(G) \leq 2R} \frac{1}{nT}\|X - G\|_2^2 \geq c + o_P(1).
\]

\medskip\noindent\textbf{Conclusion.}
All conditions of Theorems~\ref{thm::PC_consistencyNEW} and~\ref{thm::IFE_consistencyNEW}
have been verified under assumptions (i)--(vi) of the lemma. Therefore
\[
\widehat{\beta}_{\mathrm{PC}} = \beta^* + o_P(1),
\qquad
\widehat{\beta}_{\mathrm{IFE}} = \beta^* + o_P(1).
\]
\end{proof}

\begin{proof}[\bf Proof of Theorem~\ref{thm::lowlevel}]
Let $Z_{it}=g(U_i,V_t,\varepsilon_{it})=(Y_{it},X_{it})'$ and define
\[
h_X(u,v):=\mathbb{E}[X_{it}\mid U_i=u,V_t=v],
\qquad 
h_E(u,v):=\mathbb{E}[E_{it}\mid U_i=u,V_t=v],
\]
with $X^\parallel_{it}=h_X(U_i,V_t)$, $X^\perp_{it}=X_{it}-X^\parallel_{it}$,
$E_{it}=Y_{it}-\beta^* X_{it}$, $E^\parallel_{it}=h_E(U_i,V_t)$,
and $E^\perp_{it}=E_{it}-E^\parallel_{it}$ as in Lemma~\ref{lemma::medium}.
We verify assumptions (i)--(vi) of Lemma~\ref{lemma::medium}.

\medskip\noindent\textbf{(i) Non-degeneracy: $\mathbb{E}[(X^\perp_{it})^2]>0$.}

By assumption (iii) of the current theorem, 
$\mathrm{Var}(X_{it}\mid U_i,V_t)>0$ almost surely.
Since $\varepsilon_{it}$ is independent of $(U,V)$ by assumption (ii), 
we have
\[
\mathrm{Var}(X_{it}\mid U_i,V_t) 
= \mathbb{E}\big[(X_{it}-h_X(U_i,V_t))^2 \mid U_i,V_t\big]
= \mathbb{E}\big[(X^\perp_{it})^2 \mid U_i,V_t\big].
\]
Taking expectations,
\[
\mathbb{E}\big[(X^\perp_{it})^2\big] 
= \mathbb{E}\big[\mathrm{Var}(X_{it}\mid U_i,V_t)\big] > 0,
\]
which verifies assumption (i) of Lemma~\ref{lemma::medium}.

\medskip\noindent\textbf{(ii) Sampling of $(U,V)$: $U_i$ i.i.d., $V_t$ stationary mixing.}

This is directly assumed in assumption (i) of the current theorem:
$(U_i : i \geq 1)$ are i.i.d., and $(V_t : t \geq 1)$ is strictly stationary 
and $\alpha$-mixing with summable coefficients.

\medskip\noindent\textbf{(iii) Conditional mixing and independence of $(X^\perp_{it}, E^\perp_{it})$.}

We must show that conditional on $(U,V)$:
\begin{itemize}
    \item[(a)] $\{(X^\perp_{it}, E^\perp_{it}) : t=1,\ldots,T\}$ is $\alpha$-mixing 
    with summable coefficients for each $i$;
    \item[(b)] $\{(X^\perp_{it}, E^\perp_{it})\}_{t=1}^T$ are independent across $i$.
\end{itemize}

Since $\varepsilon_{it}$ is independent of $(U,V)$ by assumption (ii), 
conditioning on $(U,V)$ does not alter the distribution of $\varepsilon_{it}$.
The DGP gives
\[
X^\perp_{it} = g_X(U_i, V_t, \varepsilon_{it}) - h_X(U_i, V_t),
\]
where $g_X$ denotes the second component of $g$.
Conditional on $(U,V)$, the terms $U_i$, $V_t$, and $h_X(U_i,V_t)$ are deterministic,
so the only randomness in $X^\perp_{it}$ comes from $\varepsilon_{it}$.

\emph{Part (a): Mixing in $t$.}
By assumption (ii), for each fixed $i$, the process 
$\{\varepsilon_{it} : t \geq 1\}$ is strictly stationary and $\alpha$-mixing 
with summable coefficients.
Since $g$ is measurable and $(U_i, V_t)$ is fixed when conditioning on $(U,V)$,
the process $\{X^\perp_{it} : t \geq 1\}$ inherits the $\alpha$-mixing property
(mixing is preserved under measurable transformations).
The same argument applies to $E^\perp_{it}$, and hence to the joint process
$\{(X^\perp_{it}, E^\perp_{it}) : t \geq 1\}$.

\emph{Part (b): Independence across $i$.}
By assumption (ii), the rows $\{\varepsilon_{it} : t \geq 1\}$ are independent 
across $i$ as stochastic processes.
Since $X^\perp_{it} = g_X(U_i, V_t, \varepsilon_{it}) - h_X(U_i, V_t)$ depends 
on $\varepsilon_{i\cdot} = \{\varepsilon_{is} : s \geq 1\}$ only,
and conditional on $(U,V)$ the function $g_X(U_i, V_t, \cdot) - h_X(U_i, V_t)$ 
is deterministic, the processes $\{(X^\perp_{it}, E^\perp_{it})\}_{t=1}^T$ 
inherit independence across $i$ from the row-independence of $\varepsilon$.

This verifies assumption (iii) of Lemma~\ref{lemma::medium}.

\medskip\noindent\textbf{(iv) Locality: $\mathbb{E}[X_{it} \mid U_i, V_t] = \mathbb{E}[X_{it} \mid U, V]$.}

Since $Z_{it} = g(U_i, V_t, \varepsilon_{it})$ and $\varepsilon_{it}$ is independent of $(U,V)$,
we have
\[
\mathbb{E}[X_{it} \mid U, V] 
= \mathbb{E}[g_X(U_i, V_t, \varepsilon_{it}) \mid U, V]
= \int g_X(U_i, V_t, e) \, dP_\varepsilon(e),
\]
where $P_\varepsilon$ is the marginal distribution of $\varepsilon_{it}$.
The right-hand side depends on $(U,V)$ only through $(U_i, V_t)$, so
\[
\mathbb{E}[X_{it} \mid U, V] = \mathbb{E}[X_{it} \mid U_i, V_t] = h_X(U_i, V_t).
\]
The same argument applies to $Y_{it}$ and hence to $E_{it}$.
This verifies assumption (iv) of Lemma~\ref{lemma::medium}.

\medskip\noindent\textbf{(v) SVD tail conditions from smoothness.}

Assumption (iv) of the current theorem states that $U_i \in \mathbb{R}^{d_U}$ and $V_t \in \mathbb{R}^{d_V}$ for fixed integers $d_U, d_V \geq 1$, and that $h_X(u,v)$ is $s$-times continuously differentiable with $s > \max(d_U, d_V)/2$. We must show this implies the SVD tail condition in Lemma~\ref{lemma::medium}(v).

The function $h_X(u,v)$ defines an integral operator $T_X : L^2(\mathcal{V}) \to L^2(\mathcal{U})$ via
\[
(T_X \psi)(u) = \int h_X(u,v) \psi(v) \, dP_V(v),
\]
where $P_V$ is the distribution of $V_t$. The singular values $\sigma_1 \geq \sigma_2 \geq \ldots$ of this operator coincide with those in the functional SVD
\[
h_X(u,v) = \sum_{r=1}^\infty \sigma_r \phi_r(u) \psi_r(v).
\]

Classical results in approximation theory establish that for $s$-smooth kernels on domains in $\mathbb{R}^d$, the singular values decay as $\sigma_r = \largeO(r^{-s/d})$ where $d = \max(d_U, d_V)$. See \citet{birman1977estimates} for the foundational results on spectral asymptotics of integral operators with smooth kernels.

Since $s > d/2$ by assumption, we have $2s/d > 1$, and therefore
\[
\sum_{r > R} \sigma_r^2 = O\left(\sum_{r > R} r^{-2s/d}\right) = \largeO(R^{1 - 2s/d}) = o(1)
\]
as $R \to \infty$. Since $R_{nT} \to \infty$ by assumption (v) of the current theorem, we have $\sum_{r > R_{nT}} \sigma_r^2 = o(1)$.

The same argument applies to $h_E(u,v) = \mathbb{E}[E_{it} \mid U_i = u, V_t = v]$, which is also $s$-times continuously differentiable (since $E_{it} = Y_{it} - \beta^* X_{it}$ and both $h_Y$ and $h_X$ satisfy the smoothness condition).

This verifies assumption (v) of Lemma~\ref{lemma::medium}.

\medskip\noindent\textbf{(vi) Growth rate of $R_{nT}$.}

Assumption (v) of the current theorem directly assumes $R_{nT} \to \infty$ and
\[
R_{nT} = o\big(\min(\sqrt{T}, \sqrt{n}/(\log T)^2)\big),
\]
which implies assumption (vi) of Lemma~\ref{lemma::medium}.

\medskip\noindent\textbf{Moment bounds: uniformly bounded fourth moments conditional on $(U,V)$.}

Assumption (iii) of the current theorem gives the uniform-in-$(u,v)$ bound
\[
\sup_{u \in \mathcal{U}, v \in \mathcal{V}} 
\mathbb{E}\big[\|g(u, v, \varepsilon)\|^4\big] =: M < \infty.
\]
Since $\varepsilon_{it}$ is independent of $(U,V)$, we have for all $(i,t)$:
\[
\mathbb{E}[\|Z_{it}\|^4 \mid U, V] 
= \mathbb{E}[\|g(U_i, V_t, \varepsilon_{it})\|^4 \mid U, V]
= \mathbb{E}[\|g(U_i, V_t, \varepsilon)\|^4] \leq M
\quad \text{a.s.}
\]
This provides the uniform fourth-moment bound required in 
Lemma~\ref{lemma::medium}.

\medskip\noindent\textbf{Identical distribution.}

The tuple $(Y_{it}, X_{it}, U_i, V_t)$ is identically distributed over $(i,t)$ because:
\begin{itemize}
    \item $(U_i)$ are i.i.d.\ by assumption (i);
    \item $(V_t)$ is strictly stationary by assumption (i);
    \item $(\varepsilon_{it})$ has the same marginal distribution for all $(i,t)$ 
    by assumption (ii) (rows are i.d.\ and each row is stationary in $t$);
    \item $g$ does not depend on $(i,t)$.
\end{itemize}
This verifies the identical distribution requirement in 
Lemma~\ref{lemma::medium}.

\medskip\noindent\textbf{Conclusion.}
All assumptions (i)--(vi) and the auxiliary conditions (bounded fourth moments, 
identical distribution) of Lemma~\ref{lemma::medium} have been 
verified under assumptions (i)--(v) of the current theorem.
Therefore, the conclusions of Lemma~\ref{lemma::medium} hold,
and hence
\[
\widehat{\beta}_{\mathrm{PC}} = \beta^* + o_P(1),
\qquad
\widehat{\beta}_{\mathrm{IFE}} = \beta^* + o_P(1).
\]
\end{proof}

\subsection{Proofs for Section~\ref{ssection::conditional}}

\begin{proof}[\bf Proof of Theorem~\ref{thm::conditional}]
We verify the conditions of Theorems~\ref{thm::PC_consistencyNEW} 
and~\ref{thm::IFE_consistencyNEW} with $\beta^*$ replaced by $\beta^*_{nT}$, 
where all probability statements are conditional on $(U, V)$.

Define $E_{it} := Y_{it} - \beta^*_{nT} X_{it}$, $E_{it}^\parallel := 
\mathbb{E}(E_{it} \,|\, U_i, V_t)$, and $E_{it}^\perp := E_{it} - E_{it}^\parallel$. 
We write $R = R_{nT}$ throughout.

\medskip\noindent\textbf{Score conditions.}
By construction of $\beta^*_{nT}$,
\[
\frac{1}{nT}\sum_{i,t} \mathbb{E}(X_{it}^\perp E_{it} \,|\, U_i, V_t)
= \frac{1}{nT}\sum_{i,t} {\rm Cov}(X_{it}, Y_{it} \,|\, U_i, V_t) 
- \beta^*_{nT} \frac{1}{nT}\sum_{i,t} {\rm Var}(X_{it} \,|\, U_i, V_t) = 0.
\]
Therefore
\[
\frac{1}{nT}\sum_{i,t} X_{it}^\perp E_{it}
= \frac{1}{nT}\sum_{i,t} \left(X_{it}^\perp E_{it} 
  - \mathbb{E}[X_{it}^\perp E_{it} \,|\, U_i, V_t]\right).
\]
Conditional on $(U, V)$, the summands on the right have mean zero. For each $i$, 
the process $\{X_{it}^\perp E_{it}\}$ is $\alpha$-mixing with uniformly summable 
coefficients (inherited from $\varepsilon_{it}$ via measurable transformation), and 
the rows are independent across $i$ by assumption (i). The uniform fourth moment 
bound in assumption (ii) implies $\sup_{i,t} \mathbb{E}[(X_{it}^\perp E_{it})^2 
\,|\, U_i, V_t] < \infty$. A triangular-array law of large numbers for row-wise 
independent $\alpha$-mixing sequences with uniformly summable mixing coefficients 
(see e.g.\ \citealt{rio2017asymptotic}) therefore gives
\[
\frac{1}{nT}\sum_{i,t} X_{it}^\perp E_{it} = o_P(1).
\]
For the second score condition, write
\[
\frac{1}{nT}\sum_{i,t} X_{it} E_{it}^\perp 
= \frac{1}{nT}\sum_{i,t} X_{it}^\parallel E_{it}^\perp 
+ \frac{1}{nT}\sum_{i,t} X_{it}^\perp E_{it}^\perp.
\]
For the first term, conditional on $(U,V)$, $X_{it}^\parallel$ is deterministic and 
$\mathbb{E}[E_{it}^\perp \,|\, U_i, V_t] = 0$. The conditional variance of the sum 
is $\largeO((nT)^{-1})$ by independence across $i$ and bounded moments, so this term is 
$o_P(1)$. For the second term, $\mathbb{E}(X_{it}^\perp E_{it}^\perp \,|\, U_i, V_t) 
= \mathbb{E}(X_{it}^\perp E_{it} \,|\, U_i, V_t) - \mathbb{E}(X_{it}^\perp 
E_{it}^\parallel \,|\, U_i, V_t) = \mathbb{E}(X_{it}^\perp E_{it} \,|\, U_i, V_t)$, 
where the last equality uses $\mathbb{E}(X_{it}^\perp \,|\, U_i, V_t) = 0$ and the 
fact that $E_{it}^\parallel$ is a function of $(U_i, V_t)$. The same decomposition 
and triangular-array LLN argument then give $\frac{1}{nT}\sum_{i,t} X_{it}^\perp 
E_{it}^\perp = o_P(1)$. Thus $\frac{1}{nT}\sum_{i,t} X_{it} E_{it}^\perp = o_P(1)$.

\medskip\noindent\textbf{Moment bounds and non-degeneracy.}
Assumption (ii) provides the uniform fourth moment bound. By the same 
triangular-array LLN,
\[
\frac{1}{nT}\sum_{i,t} X_{it}^2 = \largeO_P(1), \qquad 
\frac{1}{nT}\sum_{i,t} Y_{it}^2 = \largeO_P(1).
\]
For non-degeneracy, the triangular-array LLN gives
\[
\frac{1}{nT}\sum_{i,t} (X_{it}^\perp)^2 \to_P 
\frac{1}{nT}\sum_{i,t} \mathbb{E}[(X_{it}^\perp)^2 \,|\, U_i, V_t] 
= \frac{1}{nT}\sum_{i,t} {\rm Var}(X_{it} \,|\, U_i, V_t) \geq c,
\]
where the inequality is assumption (ii).

\medskip\noindent\textbf{Spectral norm bounds.}
Conditional on $(U, V)$, the entries of $X^\perp$ satisfy $X_{it}^\perp = 
g_X(U_i, V_t, \varepsilon_{it}) - h_X(U_i, V_t)$ where $h_X(U_i, V_t)$ is 
deterministic. The rows are independent across $i$ and each row is $\alpha$-mixing 
with uniformly summable coefficients by assumption (i). Lemma~B.3 of 
\citet{wang2022lowrankpanelquantileregression} therefore gives
\[
\|X^\perp\| = \largeO_P(\sqrt{n} + \sqrt{T \log^2 T}),
\]
and the same bound holds for $E^\perp$. Combined with assumption (iv),
\[
\frac{R}{nT}\|X^\perp\|^2 
= \largeO_P\!\left(R\left(\frac{1}{T} + \frac{\log^2 T}{n}\right)\right) = o_P(1),
\]
and similarly for $E^\perp$.

\medskip\noindent\textbf{Non-collinearity for the IFE estimator.}
For any $G$ with $\mathrm{rank}(G) \leq 2R$, expand
\[
\frac{1}{nT}\|X - G\|_2^2 = \frac{1}{nT}\|X^\perp\|_2^2 
+ \frac{1}{nT}\|X^\parallel - G\|_2^2 
+ \frac{2}{nT}\,\mathrm{Tr}\!\left[(X^\parallel - G)'X^\perp\right].
\]
The first term satisfies $\frac{1}{nT}\|X^\perp\|_2^2 \geq c + o_P(1)$ by the 
non-degeneracy result above. The second term is non-negative and can be dropped. 
For the cross-term, applying the nuclear--spectral inequality and Young's inequality 
as in the proof of Lemma~\ref{lemma::medium}, using the spectral norm bound 
on $X^\perp$ established above, gives $\frac{2}{nT}|\mathrm{Tr}[(X^\parallel - 
G)'X^\perp]| = o_P(1)$. Therefore
\[
\min_{\mathrm{rank}(G) \leq 2R} \frac{1}{nT}\|X - G\|_2^2 \geq c + o_P(1).
\]

\medskip\noindent\textbf{Low-rank approximations.}
These are directly assumed in condition (iii).

\medskip\noindent\textbf{Conclusion.}
All conditions of Theorems~\ref{thm::PC_consistencyNEW} 
and~\ref{thm::IFE_consistencyNEW} have been verified conditionally on $(U, V)$. 
Therefore,
\[
\widehat{\beta}_{\rm PC} = \beta^*_{nT} + o_P(1), \qquad 
\widehat{\beta}_{\rm IFE} = \beta^*_{nT} + o_P(1).
\]
\end{proof}

\end{document}